\else\hypersetup{linktocpage=true}\fi
\newcommand{\kernel}{\mathrm{Ker}}
\newcommand{\Span}{\mathrm{Span}}
\newcommand{\image}{\mathrm{Im}}
\newcommand{\proj}{\mathrm{P}}
\renewcommand{\vec}[1]{\boldsymbol{\mathbf{#1}}} 
\newcommand{\mat}[1]{\underline{\boldsymbol{\mathbf{#1}}}}
\newtheorem{theorem}{Theorem}
\newtheorem{lemma}[theorem]{Lemma}
\newtheorem{proposition}[theorem]{Proposition}
\newtheorem{definition}[theorem]{Definition}
\newtheorem{corollary}[theorem]{Corollary}
\newtheorem{exercise}{Exercise}
\theoremstyle{remark}
\newtheorem{remark}[theorem]{Remark}
\newcommand{\End}{\mathrm{End}}
\newcommand{\reals}{\mathbb{R}}
\newcommand{\complex}{\mathbb{C}}
\newcommand{\Section}{\Gamma\,}
\newcommand{\id}{\mathrm{id}}
\newcommand{\Lor}{\mathrm{Lor}}
\newcommand{\ILor}{\mathrm{ILor}}
\newcommand{\Gal}{\mathrm{Gal}}
\newcommand{\IGal}{\mathrm{IGal}}
\newcommand{\group}[1]{\mathrm{#1}}
\newcommand{\som}{\mathcal{S}}
\newcommand{\vplus}{\oplus}
\newcommand{\U}{\mathcal{U}}
\begin{document}

\title{Reconsidering Velocity Addition/Subtraction\\
in Special Relativity}

\author{Domenico Giulini\\
Institute for Theoretical Physics\\
Leibniz University of Hannover, Germany\\
\emph{Email: giulini@itp.uni-hannover.de}\\
\emph{ORCID: 0000-0003-3123-7257}
}
\date{}
\maketitle
\begin{abstract}
\noindent
We reconsider velocity addition/subtraction 
in Special Relativity (SR) and re-derive 
its well-known non-commutative and non-associative 
algebraic properties in a self-contained way, 
including various explicit expressions for the 
Thomas angle, the derivation of which will be 
seen to be not as challenging as often suggested. 
All this is based on the polar-decomposition 
theorem in the traditional component language, 
in which Lorentz transformations are ordinary 
matrices.
  In the second part of this paper we offer a
less familiar alternative geometric view, 
that leads to an invariant definition of the 
concept of relative velocity between two states 
of motion, which is based on the boost-link-theorem,
of which we also offer an elementary proof that does 
not seem to be widely known in the relativity literature.  
  Finally we compare this to the corresponding 
geometric definitions in Galilei-Newton spacetime,
emphasising similarities and differences. 
Regarding the presentation of the material we will 
pursue an uncompromising pedagogical strategy, 
willingly accepting repetitions and occasional 
redundancies if deemed beneficial for clarity and 
the avoidance of anticipated misunderstandings.
An appendix with four sections includes some
mathematical details on results needed in the 
main text, as well some recollections on notions 
like semi-direct products of groups and affine 
spaces. 
\end{abstract}

\noindent
\emph{Keywords:} 
Special Relativity, 
velocity addition, relative velocity,
Thomas rotation, boost-link problem\\
\emph{Mathematics Subject Classification (MSC):} 
83A05, 
20N05, 
57K32  
\\
\emph{Physics and Astronomy Classification Scheme(PACS)}:
03.30.+p

\newpage
\setcounter{tocdepth}{2}
{\small \tableofcontents}
\newpage

\section{%
Introduction 
\label{sec:Introduction}
}
This paper attempts to give a comprehensive 
account of the algebraic and geometric aspects
connected with the composition and linking properties 
of boost transformations in Special Relativity 
(henceforth abbreviated by SR). 
It contains many known results\footnote{See \citet{Gourgoulhon:SR} for a comprehensive 
modern account.} but also offers new aspects. 
Some of the proofs of the known results appear 
to be new and - hopefully - easier than those 
in the published literature. The style of our 
presentation is largely shaped by an educational 
intent and clearly reveals that the following 
remarks originate from a lecture 
manuscript.\footnote{Lectures on 
Special Relativity, delivered at Leibniz 
University Hannover at irregular intervals 
since 2009.}

In classical mechanics we are used to the fact 
that the relative velocity between two 
states of motion is independent of the 
third state of motion (of the ``observer'') 
from which it is reckoned. In Special Relativity
this ceases to be true in a twofold way: 
regarding its measure and regarding the space 
of which it is an element of. Nevertheless, 
all assignments are naturally fully covariant and 
hence consistent with the principle of relativity.

All of these aspects will be clarified in 
this paper, whose main result is stated and 
fully proved in 
Section\,\ref{sec:RelativeVelocities}. It may be 
summarised as follows: Let $\som$ denote the 
set of \emph{states of motion}, which is a 
maximally symmetric 3-dimensional Riemannian 
manifold of constant negative curvature 
(given by the set of unit timelike future 
pointing vectors). For any ordered pair 
$(s_1,s_2)$ of elements in $\som$ the 
\emph{relative velocity} between them is 
a section 
$s\mapsto\vec\beta(s,s_1,s_2)\in T\som$
in the tangent bundle over $\som$. 
The expression $\vec\beta(s,s_1,s_2)$ is 
a rational function of its arguments (formula
\eqref{eq:BoostLink-7a}) which is 
equivariant under the action of the 
Lorentz group, in the sense that for any 
Lorentz transformation $L$ it obeys   
$\vec\beta(Ls,Ls_1,Ls_2)=L\vec\beta(s,s_1,s_2)$. 
Its geometric interpretation is that of 
the velocity $\vec\beta$ (in units of $c$) of 
the unique boost $B(\vec\beta)$ relative to $s$ 
that links $s_1$ with $s_2$. The uniqueness 
statement here is known as \emph{boost-link-theorem}, 
of which we give an elementary and constructive 
proof. The analytic expression of the corresponding 
boost map (formula \eqref{eq:BoostLink-11})
is likewise a rational function of 
$(s,s_1,s_2)$.

The organisation of the paper is as follows.
Section\,\ref{sec:OldStory} we review in a 
self-contained fashion the full story concerning 
Einstein's law of velocity addition/sub\-traction 
in component form. Polar decomposition is used to 
decompose elements in the Lorentz group into a 
boost and a rotation. It is emphasised that this 
decomposition is not natural, but rather 
dependent of a distinguished state of motion. 
Explicit formulae for the polar decomposition 
of the composition of two boosts are given to 
derive once more Einstein addition and also 
Thomas Rotation, the properties of which are 
derived. In particular, uniqueness of polar 
decomposition is used to prove that Einstein 
addition endows the open unit ball in $\reals^3$ 
with the algebraic structure of a non-associative 
quasigroup with identity, called a \emph{loop}. 

Section\,\ref{sec:RelativeVelocities} 
replaces the common matrix notation by a 
proper geometric setting in which the 
analytic expressions merely contain states 
of motion and their scalar products. 
The boost-link-theorem is proved and taken 
as basis for the definition of ``relative 
velocity'', namely as the velocity of the 
boost that links the two given states $(s_1,s_2)$. 
We will call it the ``link velocity''. Next to 
the two states it also depends on the state $s$ 
relative to which we identify the linking 
Lorentz transformation as a pure boost. 
Since the group generated by pure boosts is 
the entire Lorentz group, that reference to 
$s$ is necessary. We emphasise that this 
latter reference renders the notion of relative
velocity a ternary (rather than binary) relation
and that this fact is \emph{not} in conflict with
the relativity principle, as sometimes claimed.

Section\,\ref{sec:GalileiNewton} compares the 
findings of Section\,3 to the case of Galilei-Newton 
spacetime, which we again characterise in proper 
geometric terms. Boosts now form indeed a 
3-dimensional subgroup, which is even abelian 
and normal. Hence the relative velocity between 
an ordered set $(s_1,s_2)$ of states does 
\emph{not} require the specification of a 
third reference state relative to which the 
linking Galilei transformation is identified 
as pure boost.  

Finally, an Appendix of four parts contains 
the statement and proof of the polar-decomposition 
theorem, a proof of the statement (used in the 
main text) that parallel-transport along geodesics 
on the space $\som$ of states is equivalent to 
boost transformations, and two summaries of 
elementary mathematical concepts used in the 
text, namely semi-direct products (for groups) 
and affine structures (on sets).

\noindent
\textbf{Notation and conventions:}
Throughout spacetime is a 4-dimensional real 
affine space with associated vector space $V$ 
and dual space $V^*$.  By \emph{the Minkowski metric}, 
denoted by $\eta$, we understand a non-degenerate 
symmetric bilinear form on $V$ of signature 
$(-,+,+,+)$. Definitions, lemmas, 
propositions, theorems, and remarks are 
numbered by a single counter in their 
order of appearance. An exception is 
made by two exercises on 
p.\,\pageref{ex:1}, which are 
numbered separately. 

\noindent
\textbf{Acknowledgements:}
I thank Philip Schwartz and Lea Zybur for 
many conversations that helped me develop 
and refine some of the ideas presented here.

\section{%
Velocity addition/subtraction: the old story
\label{sec:OldStory}
}
In this section we recall the standard, 
matrix-representation-based derivation of 
velocity addition/subtraction, including 
an analysis of the less widely known 
algebraic structure it defines. We shall 
give complete proofs of all relevant 
statements. Some of these proofs appear 
shorter and yet easier to follow than 
those in the published literature. 
\subsection{%
A reminder on the elementary text-book 
derivation
\label{sec:TextBookDer}
}
A frequently-seen derivation of the velocity 
addition in SR ist this: A particle $P$ 
moves along a spatial trajectory $\vec x'(t')$ 
relative to an inertial system $I'$ with whose
coordinates are labelled by $(t',\vec x')$. 
Relative to another inertial system, $I$,
whose coordinates are labelled by 
$(t,\vec x)$, the system $I'$ moves with 
constant velocity $\vec v$. According to SR, 
the boost transformation between $I$ and $I'$
is 
\begin{subequations}
\label{eq:BoostTrans-1}
\begin{alignat}{1}
\label{eq:BoostTrans-1a}
t&\,=\,\gamma(t'+\vec v\cdot\vec x'/c^2)\,,\\
\label{eq:BoostTrans-1b}
\vec x&\,=\,\vec x'+(\gamma-1)\vec n(\vec n\cdot\vec x')
+\gamma\vec v t'\,,
\end{alignat}
\end{subequations}
where 
\begin{equation}
\label{eq:Abbreviations1}
\gamma:=(1-\Vert\vec v\Vert^2/c^2)^{-1/2}\,,\quad
\vec n:=\vec v/\Vert\vec v\Vert\,.
\end{equation}
The particle's $P$ trajectory relative to 
$I$ is then obtained by inserting $\vec x'(t')$ 
for $\vec x'$ into these equations. Both expressions 
then become a functions of $t'$. Taking the 
differential, setting $d\vec x'/dt':=\vec u'$ and
$d\vec x/dt:=\vec u$, we get:  
\begin{subequations}
\label{eq:BoostDifferentials-1}
\begin{alignat}{1}
\label{eq:BoostDifferentials-1a}
dt&\,=\,\gamma(1+\vec v\cdot\vec u'/c^2)\,dt'\,,\\
\label{eq:Boostdifferentials-1b}
d\vec x&\,=\,\bigl[\vec u'+(\gamma-1)\vec n(\vec n\cdot\vec u')
+\gamma\vec v\bigr]\,dt'\,.
\end{alignat}
\end{subequations}
Hence, the particle's velocity relative to $I$ is  
\begin{equation}
\label{eq:SR-VelAdd-1}
\vec u=
\frac{\vec v+\vec u'_{\Vert}+\gamma^{-1}\vec u'_\perp}{1+
\vec v\cdot \vec u'/c^2}\,,
\end{equation}
where 
\begin{equation}
\label{eq:ParallelPerpComp}
{\vec u}'_\Vert:=\vec n(\vec n\cdot\vec u')
\quad\text{and}\quad
\vec u'_\perp:=\vec u'-\vec u'_\Vert
\end{equation}
are the orthogonal projections of $\vec u'$ parallel
and perpendicular to $\vec v$, respectively. 
Alternative ways to write \eqref{eq:SR-VelAdd-1} are 
\begin{subequations}
\label{eq:SR-VelAdd-2}
\begin{alignat}{2}
\label{eq:SR-VelAdd-2a}
\vec u&\,=\ 
\frac{1}{1+\frac{\vec v\cdot\vec u'}{c^2}}
\left\{
\vec v+\vec u'+\frac{\gamma}{c^2(\gamma +1)}\,
\vec v\times(\vec v\times\vec u')
\right\}\\
\label{eq:SR-VelAdd-2b}
&\,=\ 
\frac{1}{1+\frac{\vec v\cdot\vec u'}{c^2}}
\left\{
\vec v+ \gamma^{-1}\vec u'+\frac{\gamma}{c^2(\gamma +1)}\,
\vec v\,(\vec v\cdot\vec u')
\right\}\,.
\end{alignat}
\end{subequations} 
The first term takes a simple form if $\vec v$
is parallel to $\vec u'$ (in which  case 
the term involving the $\times$-products
vanishes) and the second an even simpler 
form if $\vec v$ is perpendicular to 
$\vec u'$ (in which case the term in the 
numerator as well as that in the denominator 
involving the inner product 
$\vec v\cdot\vec u'$ vanishes). These 
formulae express $\vec u$ as function 
of $\vec v$ and $\vec u'$ which is rational 
in  $\vec u'$ but not rational in $\vec v$
due to the ocurrence of $\gamma$. The 
function is $C^\infty$ as long as 
$\Vert\vec v\Vert\leq c$ 
and  $\Vert\vec u'\Vert\leq c$ without 
simultaneous equality, which then also 
imply $\vec v\cdot\vec u'>-c^2$. We have then 
which is the case as long as $\Vert\vec v\Vert\leq c$ 
and $\Vert\vec u'\Vert\leq c$ without simultaneous 
equality. In that case one has $\Vert\vec u\Vert\leq c$ 
with equality if and only if either 
$\Vert\vec v\Vert=c$ or $\Vert\vec u'\Vert=c$. 
This immediately follows from the equation 
\begin{equation}
\label{eq:SR-VelAdd-3}
\gamma(\vec u)=\gamma(\vec v)\gamma(\vec u')\,\bigl(1+\vec v\cdot\vec u'/c^2\bigr)
\end{equation}
where the $\gamma$-factors are now defined 
for any of the three velocities involved; 
i.e., $\gamma(\vec u):=(1-\Vert\vec u\Vert^2/c^2)^{-1/2}$, etc. Equation 
\eqref{eq:SR-VelAdd-3} follows easiest from 
taking the square of \eqref{eq:SR-VelAdd-1}.
We shall encounter another proof below. 

Formulae \eqref{eq:SR-VelAdd-1}, \eqref{eq:SR-VelAdd-2}, 
and \eqref{eq:SR-VelAdd-3} summarise Einstein's law of 
``velocity addition'' as usually discussed. The 
interesting algebraic structure behind it will be the 
subject of the next section. It has been analysed in 
detail before by others  \citep{Ungar:1988,Ungar:1989,Ungar:1997,Ungar:2005,Urbantke:2003}, 
but our treatment here will be different, self-contained, 
and direct. The third and main part of 
this paper addresses and resolves the following 
concern: How can the \emph{addition} of different 
relative velocities possibly make sense in view of 
the fact that the vector space of relative velocities 
depends on the inertial frame it refers to? In our 
example, $\vec u'$ seems to be defined relative to 
$I'$, whereas  $\vec v$ and $\vec u$ are relative to 
$I$. However, there is no obvious way to add vectors 
in the space of velocities relative to $I'$ to vectors 
in the (different) space of velocities relative to 
$I$; and yet, this is precisely what seems to have 
been done above. How can that be a meaningful operation? 
The answer we shall give is that, albeit $\vec u'$ 
does represent the velocity of $I''$ against $I'$, 
it does so with reference to $I$ rather than $I'$. 
How this additional reference to $I$, which renders 
the notion of ``relative velocity'' a ternary rather 
than binary relation, is to be understood properly 
will be explained in detail. Related observations 
had been made before in \citep{Matolcsi.Goher:2001,Matolcsi:2005,Urbantke:2003}
but not in the manifestly covariant form that we shall 
present here.

\subsection{Lorentz transformations 
in the  traditional matrix representation}
\label{sec:BoostTrans}
We begin by slightly reformulating the previous 
formulae in a way that is easier to memorise. 
First, we take all relative velocities in units of 
$c$ and call them (as is usual in the SR-literature) 
by $\vec\beta$. Setting $\vec\beta_1:=\vec v/c$, $\vec\beta_2:=\vec u'/c$ and $\vec\beta:=\vec u/c$, and also $\beta=\Vert\vec\beta\Vert$
etc. for the modulus, as well as $\gamma:=\gamma(\beta)=1/\sqrt{1-\beta^2}$ and correspondingly for 
$\gamma_i:=\gamma(\beta_i)$. Then \eqref{eq:SR-VelAdd-1} and 
\eqref{eq:SR-VelAdd-2} read
\begin{subequations}
\label{eq:SR-VelAdd-4}
\begin{alignat}{1}
\label{eq:SR-VelAdd-4a}
\vec\beta=:\vec\beta_1\vplus\vec\beta_2
&\,=\,\frac{1}{1+\vec\beta_1\cdot\vec\beta_2}\,
\Bigl\{\vec\beta_1+\vec\beta_2^\Vert+
\gamma_1^{-1}\vec\beta_2^\perp\Bigr\}\\
\label{eq:SR-VelAdd-4b}
&\,=\,\frac{1}{1+\vec\beta_1\cdot\vec\beta_2}
\left\{\vec\beta_1+\vec\beta_2+\frac{\gamma_1}{1+\gamma_1}
\vec\beta_1\times(\vec\beta_1\times\vec\beta_2)\right\}\\
\label{eq:SR-VelAdd-4c}
&\,=\,\frac{1}{1+\vec\beta_1\cdot\vec\beta_2}
\left\{\vec\beta_1+\gamma_1^{-1}\vec\beta_2+\frac{\gamma_1}{1+\gamma_1}
\vec\beta_1\,(\vec\beta_1\cdot\vec\beta_2)\right\}\,,
\end{alignat}
\end{subequations} 
where we defines the binary $\vplus$-operation -- the 
``Einstein addition law'' -- by the expressions 
on the right hand side, which, apart from the
square-root in $\gamma_1$, involves the 
velocities only in a rational fashion.
The superscripts $\Vert$ and $\perp$ now refer to the 
Euclidean orthogonal projections parallel and perpendicular 
to $\vec\beta_1$, which, using $\vec n_1:=\vec\beta_1/\beta_1$, 
are given in analogy to \eqref{eq:ParallelPerpComp} by 
\begin{equation}
\label{eq:SR-VelAdd-5}
{\vec\beta_2}^\Vert:=\vec n_1(\vec n_1\cdot\vec\beta_2)
\quad\text{and}\quad
\vec\beta_2^\perp:=\vec\beta_2-\vec\beta_2^\Vert\,.
\end{equation}
Equation \eqref{eq:SR-VelAdd-3} reads 
\begin{equation}
\label{eq:SR-VelAdd-6}
\gamma=\gamma_1\gamma_2(1+\vec\beta_1\cdot\vec\beta_2)\,,
\end{equation}
showing that $\gamma<\infty$ if $\gamma_i<\infty$ ($i=1,2$), and that $\gamma\rightarrow\infty$ 
if and only if at least one 
$\gamma_i\rightarrow\infty$. This implies that 
\begin{equation}
\label{eq:StarMap}
\vplus: \mathring{B}_1(\reals^3)\times \mathring{B}_1(\reals^3)
\rightarrow\mathring{B}_1(\reals^3)\,,
\end{equation}
where $\mathring{B}_1(\reals^3)$ denotes the 
open ball of unit radius in $\reals^3$. In 
this section we will clarify the properties 
of the binary operation that $\vplus$ puts on 
$\mathring{B}_1(\reals^3)$. This would be a 
tedious problem to do from the algebraic 
form alone. In fact, we will see that this 
algebraic form derives from an underlying 
group law, even though $\vplus$ is not itself 
a group multiplication. This results in  
$\vplus$ being -- in some  sense -- a minimal 
non-associative generalisation of group 
multiplication. 

In order to appreciate this later insight 
we encourage the reader to work on 
the following two exercises (doing the first,
trying the second):

\begin{exercise}
\label{ex:1}
Use the expression \eqref{eq:SR-VelAdd-4a} 
for $\vec\beta={\vec\beta}_1\vplus{\vec\beta}_2$
and solve it for ${\vec\beta}_2$, showing that 
\begin{equation}
\label{eq:SR-VelAdd-Inv}
\vec\beta_2
=(-\vec\beta_1)\vplus\vec\beta
=\frac{1}{1-\vec\beta_1\cdot\vec\beta}\,
\Bigl\{-\vec\beta_1+\vec\beta^\Vert+
\gamma_1^{-1}\vec\beta^\perp\Bigr\}\,.
\end{equation}   
Hint: Equation \eqref{eq:SR-VelAdd-4a} 
can be used to express ${\vec\beta}^\Vert$ 
and ${\vec\beta}^\perp$ as functions 
of $\vec\beta_1$ and ${\vec\beta}_2$, and 
also to express $\vec\beta\cdot{\vec\beta}_1$
in terms of ${\vec\beta}_1\cdot{\vec\beta}_2$
and $\gamma_1$. The rest is straightforward 
computation. 
\end{exercise}
\begin{definition}
\label{def:VelDifference}
The expression \eqref{eq:SR-VelAdd-Inv} is called
the \textbf{velocity difference} between 
$\vec\beta$ and $\vec\beta_1$. 
\end{definition}

\begin{exercise}
\label{ex:2}
Try to repeat Exercise\ref{ex:1}, now  
solving $\vec\beta={\vec\beta}_1\vplus{\vec\beta}_2$
for $\vec\beta_1$ (as function of $\vec\beta$ and 
$\vec\beta_2$) rather than $\vec\beta_2$
(as function of $\vec\beta$ and 
$\vec\beta_1$) as above. There is a unique 
solution!
\end{exercise} 
It is likely that you will not find the 
solution to Exercise\,\ref{ex:2} at this point
(it is given in \eqref{eq:AlgStr-14} below). 
It is \emph{not} given by 
$\vec\beta\vplus(-\vec\beta_2)$ (as one might 
have expected at first) unless $\vec\beta$ 
and $\vec\beta_2$ are parallel. Hence there is 
no immediate analogue of 
Definition\,\ref{def:VelDifference} 
for the ``other'' velocity difference, i.e.  
that between $\vec\beta$ and $\vec\beta_2$.

\subsection{The generalised orthogonal group and its Lie algebra}
\label{sec:GenOrthGroupAlg}
Let us first consider the general case where $V$ 
be a real $n$-dimensional vector space, $V^*$ 
its dual space, and $\eta\in V^*\otimes V^*$ a 
symmetric, non-generate, bilinear form of any
signature. Recall that the 
\emph{signature} of such a form is given by a 
pair $(n_-,n_+)$ of non-negative integers which 
denote the dimensions of maximal linear 
subspaces $V_-$ and $V_+$  of $V$ restricted to 
which $\eta$ is negative- and positive-definite, 
respectively.\footnote{Our convention is mostly 
shared in the physics literature, whereas in the 
mathematics literature The word ``signature'' is 
often used differently. For example, 
\citet[p.\,269]{Greub:LinearAlgebra} 
calls $n_+$ the \emph{index} and $n_+-n_-$ the 
\emph{signature}.}  
Whereas $n_\pm$ are uniquely defined by $\eta$, 
$V_\pm$ are not.  Note that $V=V_-\oplus V_+$, 
hence $n=n_-+n_+$, with $V_+$ and $V_-$ 
$\eta$-orthogonal to each other. If $n_-=1$ 
and $n_+=n-1$ (or vice versa; though we will 
stick to the ``mostly plus'' convention 
in this paper) $\eta$ is called a 
\emph{Minkowski metric} in $n$ dimensions. 
All statements in this subsection apply to 
general $n>1$ and $(n_-,n_+)$, unless stated 
otherwise. 
\begin{definition}
\label{def:GenOrthGroup}
The \textbf{(generalised) orthogonal group} of 
$(V,\eta)$, denoted by $\group{O}(V,\eta)$, 
is defined by the following subgroup of the 
group $\group{Gl(V)}$ of linear isomorphsims
of $V$:
\begin{equation}
\label{eq:OrthogonalLieGroup}
\group{O}(V,\eta)=
\bigl\{
L\in\group{GL}(V):
\eta(Lv,Lw)=\eta(v,w)\,;\, \forall v,w\in V
\bigr\}\,.
\end{equation}
It follows from this definition that $\det(L)=\pm 1$
for all $L\in\group{O}(V,\eta)$. The \emph{special 
(generalised) orthogonal group} is the subgroup of 
orientation preserving elements in  
$\group{O}(V,\eta)$:
\begin{equation}
\label{eq:SpecOrthogonalLieGroup}
\group{SO}(V,\eta)=
\bigl\{
L\in\group{O}(V,\eta):\det(L)=1
\bigr\}\,.
\end{equation}
If $\eta$ is either positive or negative definite, 
i.e. if either $n_-=0$ or $n_+=0$, the group 
$\group{O}(V,\eta)$ contains two connected 
components with $\group{SO}(V,\eta)$ being 
the component containing the group identity
(usually called the ``identity component'').
In all other cases, i.e. for 
$n_\pm\geq 1$, the group 
$\group{O}(V,\eta)$ decomposes (as set) into
the disjoint union (denoted by $\sqcup$) of  
four connected components (see, e.g., the 
book by  \citet[pp.\,237]{O-Neill:SRG}), 
\begin{equation}
\label{eq:OrthogonalLieGroupDecomp}
\group{O}(V,\eta)=\underbrace{
\group{O}_{(+,+)}(V,\eta)\sqcup
\group{O}_{(-,-)}(V,\eta)}_{\group{SO}(V,\eta)}\sqcup\,
\group{O}_{(-,+)}(V,\eta)\sqcup
\group{O}_{(+,-)}(V,\eta)\,,
\end{equation}
where the $\pm$ in the (first, second)  slot 
indicates whether the orientation amongst the 
(negative, positive)---definite subspaces is 
preserved or reversed.
\end{definition}

If $\{e_1,\cdots, e_n\}$ is a basis for $V$
with dual basis $\{\theta^1,\cdots, \theta^n\}$
for $V^*$, so that $\theta^a(e_b)=\delta^a_b$, 
we can write 
\begin{equation}
\label{eq:LorentzTrans-1}
L=\tensor{L}{^a_b}\ e_a\otimes\theta^b
\end{equation}
and 
\begin{equation}
\label{eq:LorentzTrans-2}
\eta=\eta_{ab}\,\theta^a\otimes\theta^b
\end{equation}

We recall that the metric $\eta$ defines an isomorphism 
\begin{equation}
\label{eq:DefEtaDown}
\eta_\downarrow:V\rightarrow V^*\,,\quad
v\mapsto\eta_\downarrow(v):=\eta(v,\cdot)\,,
\end{equation}
that has an inverse  
\begin{equation}
\label{eq:DefEtaUp}
\eta_\uparrow: V^*\rightarrow V\,,\quad 
\lambda\mapsto\eta_\uparrow(\lambda):=
(\eta_\downarrow)^{-1}(\lambda)\,.
\end{equation}
Hence $\eta$ defines a likewise non-degenerate 
symmetric bilinear form in the dual space $V^*$,
which we call $\eta^*$. Its basis-idependent definition is 
\begin{equation}
\label{eq:DefInverseMetric}
\eta^*(\lambda,\sigma):=\eta\bigl(\eta_\uparrow(\lambda),\eta_\uparrow(\sigma\bigr)
=\lambda\bigl(\eta_\uparrow(\sigma)\bigr)
=\sigma\bigl(\eta_\uparrow(\lambda)\bigr)\,.
\end{equation}
With respect to the pair auf dual bases this 
leads to
\begin{equation}
\label{eq:LorentzTrans-3}
\eta^*=\eta^{ab}\, e_a\otimes e_b
\end{equation} 
where $\eta^{ab}$ are the components of the 
matrix that is inverse to $\eta_{ab}$. Hence 
we can write $\eta_{ab}=\eta(e_a,e_b)$ and 
$\eta^{ab}=\eta^*(\theta^a,\theta^b)$. 
Equivalent to \eqref{eq:OrthogonalLieGroup}
is then 
\begin{equation}
\label{eq:OrthogonalLieGroupDual}
\group{O}(V,\eta)=
\bigl\{
L\in\mathrm{GL}(V):
\eta^*\bigl(L^\top\lambda ,L^\top\sigma\bigr)=\eta^*(\lambda,\sigma)\,;\, \forall \lambda,\sigma\in V^*
\bigr\}\,,
\end{equation}
where $L^\top:V^*\rightarrow V^*$ is the transposed map 
naturally associated to $L$.
In terms of components   \eqref{eq:OrthogonalLieGroup} and \eqref{eq:OrthogonalLieGroupDual} read, 
respectively,
\begin{subequations}
\label{eq:DefLorentzTransComp}
\begin{alignat}{1}
\label{eq:DefLorentzTransComp-a}
\tensor{L}{^a_c}
\tensor{L}{^b_d}\ \eta_{ab}
=\eta_{cd}\,,\\
\label{eq:DefLorentzTransComp-b}
\tensor{L}{^a_c}
\tensor{L}{^b_d}\ \eta^{cd}
=\eta^{ab}\,.
\end{alignat}
\end{subequations}  

Taking the $s$-derivative at $s=0$ of \eqref{eq:OrthogonalLieGroup}, where $L=L(s)$ 
is a differentiable curve in $\group{O}(V,\eta)$ with 
$L(s=0)=\id_V$, we obtain for 
$\ell=dL/ds\vert_{s=0}$ the defining relation 
for the Lie algebra:
\begin{equation}
\label{eq:OrthogonalLieAlgebra}
\mathfrak{o}(V,\eta)=
\bigl\{\ell\in\mathrm{End}(V):
\eta(\ell v,w)=-\eta(v,\ell w)\,;\, \forall v,w\in V \bigr\}\,.
\end{equation}
In other words, $\mathfrak{o}(V,\eta)$ is given 
by the $\eta$-antisymmetric endomorphisms of 
$V$.
 
The given isomorphism \eqref{eq:DefEtaDown}
provides a specific identification of $V^*$ 
with $V$ via replacing any $\alpha\in V^*$ 
with $v=\eta_\uparrow(\alpha)$. Hence we may
also identify $\mathrm{End}(V)$, which 
is \emph{naturally} isomorphic to $V\otimes V^*$, 
with $V\otimes V$. This will notationally 
simplify later expressions.  In particular, 
$\mathfrak{o}(V,\eta)$ is then identified with 
$V\wedge V$. The natural Lie product on 
$\mathrm{End}(V)$, which is just given by 
the commutator, induces a Lie product 
on $V\otimes V$, which on pure tensor 
products is given by  
\begin{equation}
\label{eq:DefLieAlgEnd}
\bigl[v\otimes w\,,v'\otimes w'\bigr]
=\eta(w,v')\,v\otimes w'
-\eta(w',v)\,v'\otimes w
\end{equation}
with unique bilinear extension to all of 
$V\otimes V$. On $\mathfrak{o}(V,\eta)$ 
this becomes (recall 
$v\wedge w=v\otimes w-w\otimes v$):
\begin{equation}
\label{eq:DefLieAlgOrth}
\begin{split}
\bigl[v\wedge w\,,v'\wedge w'\bigr]
= &\ \eta(v,w')\,w\wedge v'
+    \eta(w,v')\,v\wedge w'\\
-&\ \eta(v,v')\ w\wedge w'
 -\eta(w,w')\,v\wedge v'\,.
\end{split}
\end{equation}

If $\{e_1,\cdots, e_n\}$ is a basis for $V$ with 
$\eta_{ab}:=\eta(e_a,e_b)$ and 
$M_{ab}:=e_a\wedge e_b$, \eqref{eq:DefLieAlgOrth}
takes the form found in many physics-textbooks:
\begin{equation}
\label{eq:DefLieAlgOrthComp}
\begin{split}
\bigl[M_{ab},M_{cd}\bigr]
=\eta_{ad}M_{bc}
+\eta_{bc}M_{ad}
-\eta_{ac}M_{bd}
-\eta_{bd}M_{ac}\,.
\end{split}
\end{equation}
The set $\{M_{ab}:1\leq a<b\leq n\}$ forms a basis 
for the $\frac{1}{2}n(n-1)$ - dimensional Lie 
algebra $\mathfrak{o}(V,\eta)$.

\subsection{Restriction to four dimensions and Lorentzian signature}
\label{sec:LorentzGroupMatrix}
From now on we restrict to the special-relativistic 
case in which $(n_-,n_+)=(1,3)$.\footnote{Or, 
alternatively, $(n_-,n_+)=(3,1)$. But in this paper 
we shall adopt the ``mostly plus'' convention.}
In this case the (full) \emph{orthogonal group}, 
$\group{O}(V,\eta)$, is called the \emph{Lorentz group}, abbreviated by $\Lor$. As we have seen above, it 
has four components. The component containing the 
identity is $\group{O}_{(+,+)}(V,\eta)$ and usually 
called the \emph{proper orthochronous Lorentz group}, 
where the ``proper'' stands for ``overall-orientation preserving'', meaning that one restricts to 
$L\in \group{SO}(V,\eta)=\group{O}_{(+,+)}(V,\eta)\,
\sqcup\,\group{O}_{(-,-)}(V,\eta)$, i.e. to those 
$L$ with $\det(L)=1$. ``Orthochronous'' stands for 
``time-orientation preserving'', which means 
that one restricts to $L\in\group{O}_{(+,+)}(V,\eta)\,
\sqcup\,\group{O}_{(+,-)}(V,\eta)$, i.e. to those 
$L$ with $\eta(Lv,v)<0$ for all timelike $v$. Note 
that $\group{O}_{(+,+)}(V,\eta)\,\sqcup\,\group{O}_{(-,+)}(V,\eta)$, too, is a two-component subgroup of 
$\group{O}(V,\eta)$ which is sometimes called the 
\emph{orthochorous Lorentz group}; compare, e.g., 
\citep[p.\,11]{Streater.Wightman:AllThat}.

\begin{remark}
In what follows we will restrict attention to the 
component of the identity only and simply write 
\begin{equation}
\Lor:=\group{O}_{(+,+)}(V,\eta)\,,
\quad\text{for\ $(n_-,n_+)=(1,3)$}\,. 
\end{equation}
Note that often $\Lor$ (or just $L$) stands for the whole 
Lorentz group whereas the proper orthochronous subgroup 
is denoted by $\Lor_+^\uparrow$ (or $L_+^\uparrow$), where 
the subscript $+$ represents the ``proper'' and the 
superscript $\uparrow$ the ``orthochronous''. But since 
here we do not consider time- or space-reversing 
maps we simplify the notation as indicated. We remark 
that everything we are going to say simply extends to 
\emph{all} time-orientation preserving Lorentz transformations,
i.e. to $\group{O}_{(+,+)}(V,\eta)\,\sqcup\,\group{O}_{(+,-)}(V,\eta)$ (or to $\Lor_+^\uparrow\sqcup\Lor_-^\uparrow$). 
In order to also include time-orientation reversing transformations we would have to generalise our later Definition\,\ref{def:StateOfMotion-1} 
of the set of states of motion to also include the 
other connected component (called $V_1^-$) of the 
hyperboloid of unit timelike vectors. We avoid these complications since we will not need this generalisation
in the sequel.  
\end{remark}

According to \eqref{eq:OrthogonalLieGroup} 
and \eqref{eq:OrthogonalLieGroupDual},
the matrices representing Lorentz 
transformations must satisfy 
\eqref{eq:DefLorentzTransComp-a} and \eqref{eq:DefLorentzTransComp-b}, respectively. 
We choose a basis $\{e_0,e_1,e_2,e_3\}$ so that 
\begin{equation}
\label{eq:MinkMetrikComp}
1=-\eta_{00}
=\eta_{11}
=\eta_{22}
=\eta_{33}\quad\text{and}\quad
\eta_{ab}=0\quad\text{for}\quad a\ne b\,.
\end{equation}
If we write a general matrix with components 
$\tensor{L}{^a_b}$ in (1+3)-decomposed form 
as\footnote{The symbol $c$ for the upper-left 
matrix entry in \eqref{eq:LorentzMatrix-1} 
should not be confused with the velocity of 
light, which does not explicitly appear in 
any of the following formulae.}
\begin{equation}
\label{eq:LorentzMatrix-1}
\bigl\{\tensor{L}{^a_b}\bigr\}=
\begin{pmatrix}
c & \vec a^\top\\
\vec b &\mat{M}
\end{pmatrix}\,,
\end{equation}
where $c\in\reals$,  
$\vec a,\vec b\in \reals^3$, and $\mat{M}\in\mathrm{End}(\reals^3)$ is a 
$(3\times 3)$-matrix\footnote{Generally, in this section 
elements in $\reals^3$ are written by bold-faced letters 
and elements in $\group{End}(\reals^3)$ (i.e. $3\times 3$ 
matrices) by underlined bold-faced letters. The transposed
object of either of these is indicated by a 
superscript $\top$.}, equations  
\eqref{eq:DefLorentzTransComp-a} and 
 \eqref{eq:DefLorentzTransComp-b} are 
equivalent to, respectively, 
\begin{subequations}
\label{eq:LorentzTransRel-1}
\begin{alignat}{1}
\label{eq:LorentzTransRel-1a}
\Vert\vec b\Vert^2&\,=\,c^2-1\,,\\
\label{eq:LorentzTransRel-1b}
\mat{M}^\top\vec b &\,=\,c\vec a\,,\\
\label{eq:LorentzTransRel-1c}
\mat{M}^\top\mat{M} &\,=\,\mat{E}_3+\vec a\otimes\vec a^\top\,,	
\end{alignat}
\end{subequations}	
and 
\begin{subequations}
\label{eq:LorentzTransRel-2}
\begin{alignat}{1}
\label{eq:LorentzTransRel-2a}
\Vert\vec a\Vert^2&\,=\,c^2-1\,,\\
\label{eq:LorentzTransRel-2b}
\mat{M}\vec a &\,=\,c\vec b\,,\\
\label{eq:LorentzTransRel-2c}
\mat{M}\,\mat{M}^\top &\,=\,\mat{E}_3+\vec b\otimes\vec b^\top\,,
\end{alignat}
\end{subequations}
where $\mat{E}_3$ denotes the $(3\times 3)$
unit matix.  
Note that \eqref{eq:DefLorentzTransComp-b} follows 
from \eqref{eq:DefLorentzTransComp-a} by 
replacing $L$ with its transposed (due to the 
numerical equality of $\eta_{ab}$ with 
$\eta^{ab}$). Accordingly, the set \eqref{eq:LorentzTransRel-2} follows 
from the set \eqref{eq:LorentzTransRel-1} by exchanging $\vec a$ with $\vec b$ and $\mat{M}$ 
with $\mat{M}^\top$. 

Clearly, spatial rotations and pure boosts \eqref{eq:BoostTrans-1} satisfy these equations.
In fact, ``spatial rotation'' here means to 
embed  $R:\group{SO}(3)\hookrightarrow\Lor$, 
$\mat{D}\mapsto R(\mat{D})$, according to 
\begin{equation}
\label{eq:EmbedRotation}
R(\mat{D}):=
\begin{pmatrix}
1&\vec 0^\top\\
\vec 0 & \mat{D}
\end{pmatrix}\,.
\end{equation}
With respect to the same basis (for which 
$x^0=ct$), a boost \eqref{eq:BoostTrans-1} 
corresponds to 
\begin{equation}
\label{eq:EmbedBoost}
B(\vec\beta):=
\begin{pmatrix}
\gamma &\gamma\vec\beta^\top\\
\gamma\vec\beta & \mat{E}_3
+(\gamma-1)\vec n\otimes\vec n^\top
\end{pmatrix}\,,
\end{equation}
where we abbreviate $\vec n:=\vec\beta/\beta$ and each entry 
in the matrix on the right-hand side is to be regarded 
as function of $\vec\beta$. In particular, since 
$\beta^2=1-\gamma^{-2}$, we have $(\gamma-1)\vec n\otimes\vec n^\top=\frac{\gamma^2}{1+\gamma}\vec\beta\otimes\vec\beta^\top$, which is a continuously 
differentiable function of $\vec\beta$.

For \eqref{eq:EmbedRotation} satisfaction of 
\eqref{eq:LorentzTransRel-1} is immediate
(with \eqref{eq:LorentzTransRel-2} following by 
replacing $\mat{D}$ with its transposed 
(= inverse)). Likewise, noting that 
$\mat{E}_3+(\gamma-1)\vec n\otimes\vec n^\top=
P_\perp+\gamma P_\Vert$, with $P_\perp$ and  
$P_\Vert$ the projectors in $\reals^3$ 
perpendicular and parallel to $\vec n$, 
respectively, we immediately see that 
\eqref{eq:LorentzTransRel-1b} and 
\eqref{eq:LorentzTransRel-1c} are satisfied. 
Equation \eqref{eq:LorentzTransRel-1a} is 
equivalent to the identity 
$\beta^2\gamma^2=\gamma^2-1$.
The converse to this result is the following: 

\begin{theorem}
Any Lorentz transformation 
\eqref{eq:LorentzMatrix-1} that is `proper', 
i.e. satisfies $\det(L)=+1$ and 
`orthochronous', i.e. $c \geq 1$, can be 
composed into the product of a rotation and 
a boost. This decomposition is just the 
``polar decomposition'' with respect to the 
standard Euclidean metric in $\reals^4$ 
(compare Appendix\,\ref{sec:PolarDecomposition})
and hence unique if the order of 
rotation-boost-multiplication is fixed. 
The reversed order has the same rotation but a 
boost velocity  that differs from the first by 
that rotation. That is,  
\begin{subequations}
\begin{equation}
\label{eq:LT-PolarDec-a}
\bigl\{\tensor{L}{^a_b}\bigr\}=
\begin{pmatrix}
c & \vec a^\top\\
\vec b &\mat{M}
\end{pmatrix}
=B(\vec\beta)R(\mat{D})
=R(\mat{D})B(\vec\beta')\,,
\end{equation} 
where $\mat{D}\in\group{SO}(3)$,  
$\vec\beta\in\mathring{B}_1(\reals^3)$, and 
 are $\vec\beta'\in\mathring{B}_1(\reals^3)$
are determined as rational functions of 
$\gamma,\vec a,\vec b$, and $\mat{M}$ by
\begin{alignat}{1}
\label{eq:LT-PolarDec-b}
\mat{D}&:\,=\,\mat{M}-\frac{\vec b\otimes\vec a^\top}{\gamma+1}\,,\\
\label{eq:LT-PolarDec-c}
\vec\beta &:\,=\,\vec b/c\,,\quad
\vec\beta'\,=\,\vec a/c\,.
\end{alignat}
\end{subequations}
We further have $\vec\beta=\mat{D}\vec\beta'$; 
compare \eqref{eq:ProofPolarDec-2} and 
\eqref{eq:ProofPolarDec-6} below.
\end{theorem}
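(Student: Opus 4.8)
The plan is to recognise the claimed factorisation as the \emph{Euclidean} polar decomposition of $L$, viewed as an ordinary real $(4\times4)$-matrix, and to invoke the existence and uniqueness statement proved in Appendix\,\ref{sec:PolarDecomposition}. The point is that $B(\vec\beta)$ in \eqref{eq:EmbedBoost} is symmetric as a Euclidean matrix -- its two off-diagonal blocks are mutual transposes -- whereas $R(\mat{D})$ in \eqref{eq:EmbedRotation} is Euclidean-orthogonal exactly when $\mat{D}\in\group{SO}(3)$. Hence a factorisation $L=B(\vec\beta)R(\mat{D})$ is precisely a polar decomposition with the symmetric factor on the left, and once exhibited it is unique. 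As a preliminary, set $\vec\beta:=\vec b/c$ and note $c=\gamma(\vec\beta)$: by \eqref{eq:LorentzTransRel-1a} one has $\beta^2=\Vert\vec b\Vert^2/c^2=1-c^{-2}$, so $\gamma(\vec\beta)=c$ since $c\geq1$. Thus $B(\vec\beta)$ is well defined and, having eigenvalues $\gamma(1\pm\beta)>0$ (in the plane spanned by $e_0$ and $\vec n$) and $1$ (twice), is symmetric positive-definite.

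The computational core consists of two block identities. First I would square $B(\vec\beta)$ and compare with $LL^\top$: reducing $LL^\top$ by means of the constraint set \eqref{eq:LorentzTransRel-2} gives
\begin{equation}
\label{eq:ProofPolarDec-2}
LL^\top=
\begin{pmatrix}
2c^2-1 & 2c\,\vec b^\top\\
2c\,\vec b & \mat{E}_3+2\,\vec b\otimes\vec b^\top
\end{pmatrix}
=B(\vec\beta)^2,
\end{equation}
the last equality being a short check using $(\gamma-1)\vec n\otimes\vec n^\top=\tfrac{1}{\gamma+1}\vec b\otimes\vec b^\top$. This identifies $B(\vec\beta)=(LL^\top)^{1/2}$ as the unique positive-definite square root, and therefore $R:=B(\vec\beta)^{-1}L$ is automatically orthogonal, since $RR^\top=B(\vec\beta)^{-1}(LL^\top)B(\vec\beta)^{-1}=\id$; no separate orthogonality check is needed. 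Second I would evaluate this $R$ directly, using the elementary fact $B(\vec\beta)^{-1}=B(-\vec\beta)$ and now the other constraint set \eqref{eq:LorentzTransRel-1}: the corner entry reduces to $\gamma c-\Vert\vec b\Vert^2=1$ by \eqref{eq:LorentzTransRel-1a}, the two off-diagonal blocks vanish by \eqref{eq:LorentzTransRel-1a} and \eqref{eq:LorentzTransRel-1b}, and the lower-right block collapses to $\mat{M}-\tfrac{1}{\gamma+1}\vec b\otimes\vec a^\top$. Hence $R=R(\mat{D})$ with $\mat{D}$ as in \eqref{eq:LT-PolarDec-b}; orthogonality of $R$ forces $\mat{D}\mat{D}^\top=\mat{E}_3$, and since $\det B(\vec\beta)=+1$ (being positive-definite with $(\det B(\vec\beta))^2=\det(LL^\top)=1$) we get $\det\mat{D}=\det L=1$, so $\mat{D}\in\group{SO}(3)$.

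For the reversed order I would use that conjugating a boost by a rotation only rotates its velocity, $R(\mat{D})^\top B(\vec\beta)R(\mat{D})=B(\mat{D}^\top\vec\beta)$, which turns $L=B(\vec\beta)R(\mat{D})$ into $L=R(\mat{D})B(\vec\beta')$ with $\vec\beta'=\mat{D}^\top\vec\beta$, equivalently
\begin{equation}
\label{eq:ProofPolarDec-6}
\vec\beta=\mat{D}\vec\beta'.
\end{equation}
It then remains to check $\vec\beta'=\vec a/c$, i.e.\ $\mat{D}\vec a=\vec b$; this is a one-line consequence of the definition of $\mat{D}$ together with $\mat{M}\vec a=c\vec b$ and $\Vert\vec a\Vert^2=c^2-1$ from \eqref{eq:LorentzTransRel-2}. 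Uniqueness of both factorisations is inherited from that of the polar decomposition.

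I expect the only genuine obstacle to be bookkeeping rather than ideas: the reduction of $LL^\top$ requires \eqref{eq:LorentzTransRel-2} (chiefly $\mat{M}\mat{M}^\top=\mat{E}_3+\vec b\otimes\vec b^\top$), whereas the reduction of $B(-\vec\beta)L$ requires instead \eqref{eq:LorentzTransRel-1} (chiefly $\mat{M}^\top\vec b=c\vec a$), and one must repeatedly invoke $\Vert\vec b\Vert^2=c^2-1=\gamma^2-1$ to simplify the scalar coefficients. All the conceptual weight -- that the symmetric/orthogonal split of a proper orthochronous $L$ coincides with its boost/rotation split -- is carried by the polar-decomposition theorem, so beyond it no existence argument or analytic estimate is required.
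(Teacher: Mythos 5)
Your proposal is correct -- every identity in it checks out -- but it is organised differently from the paper's own proof, in a way worth spelling out. Both arguments share the same frame: boosts \eqref{eq:EmbedBoost} are Euclidean-symmetric and positive definite, embedded rotations \eqref{eq:EmbedRotation} are Euclidean-orthogonal, so the asserted factorisation is a polar decomposition whose uniqueness is supplied by Theorem\,\ref{thm:PolarDecomposition}; and both obtain the reversed-order statement $L=R(\mat{D})B(\vec\beta')$ with $\vec\beta'=\mat{D}^{-1}\vec\beta$ from the conjugation identity \eqref{eq:ProofPolarDec-6}. The difference is in how the factors are produced. The paper writes down the candidate pair $(\vec\beta,\mat{D})$ of \eqref{eq:LT-PolarDec-b}--\eqref{eq:LT-PolarDec-c}, multiplies out $B(\vec\beta)R(\mat{D})$ and matches it block by block against $L$, and must then prove orthogonality of $\mat{D}$ by the separate computation \eqref{eq:ProofPolarDec-4}; only the constraint set \eqref{eq:LorentzTransRel-1} is needed. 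You instead follow the constructive recipe inside the proof of Theorem\,\ref{thm:PolarDecomposition}: you establish $LL^\top=B(\vec\beta)^2$ with $\vec\beta=\vec b/c$ (this uses the other constraint set, \eqref{eq:LorentzTransRel-2}), so that $B(\vec\beta)=\sqrt{LL^\top}$ by Lemma\,\ref{thm:MatrixSquareRoot} and $R:=B(-\vec\beta)L$ is orthogonal with no further work; your second block computation, now using \eqref{eq:LorentzTransRel-1}, serves only to identify this $R$ as $R(\mat{D})$ with $\mat{D}$ exactly as in \eqref{eq:LT-PolarDec-b}, and $\det\mat{D}=+1$ falls out of $\det L=+1$ together with positivity of $B(\vec\beta)$. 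What your route buys is that the longest computation of the paper's proof -- the orthogonality check \eqref{eq:ProofPolarDec-4} -- becomes automatic; what it costs is a second full block multiplication ($LL^\top$ as well as $B(-\vec\beta)L$) and the use of both constraint sets \eqref{eq:LorentzTransRel-1} and \eqref{eq:LorentzTransRel-2} rather than one. The two proofs are of comparable length, and yours is complete as it stands.
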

\begin{proof}
We only need to show the second equality in 
\eqref{eq:LT-PolarDec-a}, as the third then 
follows from the proof of the first. We proceed 
by proving three things: 
1)~That the product $B(\vec\beta)R(\mat{D})$
with $\mat{D}$ and $\vec\beta$ as in 
\eqref{eq:LT-PolarDec-b} and \eqref{eq:LT-PolarDec-c}
indeed equals $\bigl\{\tensor{L}{^a_b}\bigr\}$;
2)~that $B(\vec\beta)\in\group{GL}(\reals^4)$
is symmetric and positive definite; and
 3)~that $\mat{D}\in\group{SO}(3)$.
Now, the product of $B(\vec\beta)$ given 
by \eqref{eq:EmbedBoost} and $R(\mat{D})$ 
given by \eqref{eq:EmbedRotation} is 
\begin{equation}
\label{eq:ProofPolarDec-1}
\begin{pmatrix}
\gamma 
 & \gamma(\mat{D}^\top \vec\beta)^\top\\
\gamma\vec\beta 
 & \mat{D}+(\gamma-1)\vec n\otimes(\mat{D}^\top\vec n)^\top
\end{pmatrix}\,.
\end{equation} 
This must equal \eqref{eq:LorentzMatrix-1},
which for the upper-left entry implies 
$\gamma=c$ and the lower-left entry 
$\gamma\vec\beta=\vec b$. Since 
$\Vert\gamma\vec\beta\Vert^2
=\gamma^2\beta^2=\gamma^2-1$ this is 
only consistent with the latter equation 
if $\Vert\vec b\Vert^2=c^2-1$, which is 
just guaranteed by \eqref{eq:LorentzTransRel-1a}. 
Given $\gamma\vec\beta=\vec b$, equality of the upper-right entry of \eqref{eq:ProofPolarDec-1} and \eqref{eq:LorentzMatrix-1} is equivalent 
to $\mat{D}^\top\vec b=\vec a$. This is indeed 
satisfied by \eqref{eq:LT-PolarDec-b} since then
\begin{equation}
\label{eq:ProofPolarDec-2}
\mat{D}^\top\vec b=\mat{M}^\top\vec b-\vec a\frac{\Vert\vec b\Vert^2}{\gamma+1}
= \vec a\left(c-\frac{c^2-1}{\gamma+1}\right)=\vec a
\end{equation}
where we used \eqref{eq:LorentzTransRel-1a}
and \eqref{eq:LorentzTransRel-1b} for the 
second and $c=\gamma$for the third equality. 
This also shows equality of the lower-right 
entries, which requires       
\begin{equation}
\label{eq:ProofPolarDec-3}
\mat{D}
=\mat{M}-(\gamma-1)
 \vec n\otimes (\mat{D}^\top\vec n)^\top
=\mat{M}-(\gamma-1)\frac{\vec b\otimes(\mat{D}^\top\vec b)^\top}{\beta^2\gamma^2}\,,
\end{equation}
which is equivalent to \eqref{eq:LT-PolarDec-b}
in view of \eqref{eq:ProofPolarDec-2} and 
$\beta^2\gamma^2=\gamma^2-1$. Next we check orthogonality of $\mat{D}$:
\begin{equation}
\label{eq:ProofPolarDec-4}
\begin{split}
\mat{D}^\top \mat{D}
&=
\left(
\mat{M}^\top-\frac{\vec a\otimes\vec b^\top}{\gamma+1}
\right)
\left(
\mat{M}-\frac{\vec b\otimes\vec a^\top}{\gamma+1}
\right)\\
&=\mat{M}^\top\mat{M}
-\frac{\mat{M}^\top \vec b\otimes\vec a+\vec a\otimes (\mat{M}^\top \vec b)^\top}{\gamma+1}
+\Vert\vec b\Vert^2\frac{\vec a\otimes\vec a^\top}{(\gamma+1)^2}\\
&=\mat{E}_3+\vec a\otimes\vec a^\top\left(1-
\frac{2c}{\gamma+1}+\frac{c^2-1}{(\gamma+1)^2}\right)\\
&=\mat{E}_3\,,
\end{split}
\end{equation}
where we used \eqref{eq:LorentzTransRel-1b}
and \eqref{eq:LorentzTransRel-1c} in the third and 
$c=\gamma$ in the last step. Similarly we 
could have proven $\mat{D}\,\mat{D}^\top=\mat{E}_3$ 
using the relations \eqref{eq:LorentzTransRel-2}. 
Finally, symmetry of $B(\vec\beta)$ is obvious 
and positive definiteness follows from its 
eigenvalues, which depend on the modulus $\beta$
of $\vec\beta$ but not on its direction. 
Hence we may choose, e.g.,  $\vec\beta=\beta\vec e_1$
to calculate its characteristic polynomial, 
which reads 
\begin{equation}
\label{eq:ProofPolarDec-5}
P(\lambda)=
\bigl[(\gamma-\lambda)^2-\beta^2\gamma^2\bigr](1-\lambda)^2\,.
\end{equation}
Hence the eigenvalues are all positive and given by    
\begin{equation}
\label{eq:BoostEigenvalues}
\lambda_{1}=\sqrt{\frac{1+\beta}{1-\beta}}\,,\qquad
\lambda_{2}=\sqrt{\frac{1-\beta}{1+\beta}}\,,\qquad
\lambda_{3{,}4}=1\,.
\end{equation}
Finally, given orthogonality of $R(\mat{D})$
and symmetry as well as positive-definiteness 
of $B(\vec\beta)$, we see that the right-hand 
side of \eqref{eq:LT-PolarDec-a} ist just the 
polar decomposition of $\{\tensor{L}{^a_b}\}\in\group{GL}(\reals^4)$, which is unique -- if we agree on 
the order of the two terms on the right hand 
side, i.e. the orthogonal matrix $R(\mat{D})$ 
to the right of the positive-definite symmetric 
one. Had we chosen the opposite order, as in 
the third equality in \eqref{eq:LT-PolarDec-a}, 
the two matrices would again be uniquely determined
with the same rotation but a different (rotated) 
boost $\vec\beta'=\mat{D}^{-1}\vec\beta$. This is 
a simple consequence of the general relation 
\begin{equation}
\label{eq:ProofPolarDec-6}
R(\mat{D})B(\vec\beta)R(\mat{D}^{-1})
=B(\mat{D}\vec\beta)\,,
\end{equation}
that is an immediate consequence of 
\eqref{eq:EmbedRotation} and 
\eqref{eq:EmbedBoost}. 
Now, since 
$\beta=\vec b/c$ and $\mat{D}^{-1}=\mat{D}^\top$,
equation \eqref{eq:ProofPolarDec-2} just shows  
the second equation of \eqref{eq:LT-PolarDec-c}.
\end{proof}

\subsection{Polar decomposition of boost products}
\label{sec:BoostProducts}
As in section \eqref{sec:LorentzGroupMatrix}
we pick a state of motion $s\in\som$ which we 
take as our zeroth (time-like) basis vector 
for $V$; i.e. we choose $e_0=s$ and complement 
this to an orthonormal basis 
$\{e_0,e_1,e_2,e_3\}$ of $V$. With respect to 
that choice we shall now speak of pure boosts, 
whose matrix representatives look like 
\eqref{eq:EmbedBoost} and, and also of 
``polar decomposition''. The task is to polar decompose
the product of boosts $B(\vec\beta_1)$ and 
 $B(\vec\beta_2)$:
\begin{equation}
\label{eq:ProductBoosts-1}
B(\vec\beta_1)B(\vec\beta_2)=B(\vec\beta)R(\mat{D})
\,.
\end{equation}   
The boost parameter $\vec\beta$ and the rotation 
matrix $\mat{D}$ on the right-hand side are 
uniquely determined by $\vec\beta_1$ and 
$\vec\beta_2$. In other words, there are 
functions
\begin{subequations}
\label{eq:ProductBoosts-2}
\begin{alignat}{2}
\label{eq:ProductBoosts-2a}
\vec\beta:
 \mathring{B}_1(\reals^3)\times
 \mathring{B}_1(\reals^3)
&\rightarrow \mathring{B}_1(\reals^3)\,,\\
\label{eq:ProductBoosts-2b}
\mat{D}:
 \mathring{B}_1(\reals^3)\times
 \mathring{B}_1(\reals^3)
&\rightarrow \group{SO}(3)\,,
\end{alignat}
\end{subequations}
which we now determine. For that we just need to 
follow the procedure outlined in 
section\,\ref{sec:LorentzGroupMatrix}. Using the matrix representation \eqref{eq:EmbedBoost} for
$B(\vec\beta_1)$ and  $B(\vec\beta_2)$ we obtain 
a matrix of the form \eqref{eq:LorentzMatrix-1}
with 
\begin{subequations}
\label{eq:ProductBoosts-3}
\begin{alignat}{1} 
\label{eq:ProductBoosts-3a}
c &\,=\,
\gamma_1\gamma_2(1+\vec\beta_1\vec\beta_2)\,,\\
\label{eq:ProductBoosts-3b}
\vec a&\,=\,
\gamma_1\gamma_2\bigl(\vec\beta_2+\vec\beta_1^\Vert+\gamma^{-1}_2\vec\beta_1^\perp\bigr)\,,\\
\label{eq:ProductBoosts-3c}
\vec b&\,=\,
\gamma_1\gamma_2\bigl(\vec\beta_1+\vec\beta_2^\Vert+\gamma^{-1}_1\vec\beta_2^\perp\bigr)\,,\\
\label{eq:ProductBoosts-3d}
\mat{M}&\,=\,
\mat{E}_3+
(\gamma_1-1)\,\vec n_1\otimes \vec n_1^\top+
(\gamma_2-1)\,\vec n_2\otimes \vec n_2^\top
\nonumber\\
&\qquad\quad+
\bigl[\beta_1\gamma_1\beta_2\gamma_2+
(\gamma_1-1)(\gamma_2-1)(\vec n_1\cdot\vec n_2)\bigr]\,\vec n_1\otimes\vec n_2^\top\,,
\end{alignat}
\end{subequations}
where the superscripts $\Vert$ and $\perp$ on 
$\vec\beta_1$ and $\vec\beta_2$ refer to the 
projections parallel and perpendicular to the ``other'' $\vec\beta$, i.e. $\vec\beta_2$ and 
$\vec\beta_1$ respectively. From $c=\gamma$ and 
\eqref{eq:LT-PolarDec-c} we get 
\begin{subequations}
\label{eq:ProductBoosts-4}
\begin{alignat}{3}
\label{eq:ProductBoosts-4a}
\vec\beta &\,=\,\vec b/c&&\,=\,\frac{\vec\beta_1+\vec\beta_2^\Vert+\gamma_1^{-1}\vec\beta_2^\perp}{1+\vec\beta_1\cdot\vec\beta_2}&&\,=\,\vec\beta_1\vplus\vec\beta_2\,,\\
\label{eq:ProductBoosts-4b}
\vec\beta' &\,=\,\vec a/c&&\,=\,\frac{\vec\beta_2+\vec\beta_1^\Vert+\gamma_2^{-1}\vec\beta_1^{\perp}}{1+\vec\beta_1\cdot\vec\beta_2}&&\,=\,\vec\beta_2\vplus\vec\beta_1\,,
\end{alignat}
and, since $c=\gamma$, 
\begin{equation}
\label{eq:ProductBoosts-4c}
\gamma=\gamma(\beta)
=\gamma(\beta')
=\gamma_1\gamma_2(1+\vec\beta_1\cdot\vec\beta_2)
=\gamma_1\gamma_2\bigl(1+\beta_1\beta_2\,\cos(\varphi)\bigr)\,,
\end{equation}
\end{subequations}
if $\varphi$ denotes the angle between 
$\vec\beta_1$ and $\vec\beta_2$.
Hence the boost contained in the polar 
decomposition of the product of boosts with 
parameters $\vec\beta_1$ and $\vec\beta_2$ is 
just the boost with Einstein added parameters 
$\vec\beta=\vec\beta_1\vplus\vec\beta_2$.

Next we turn to the rotation $\mat{D}$. 
Like $\vec\beta$ it is a function of 
$\vec\beta_1$ and $\vec\beta_2$ and is 
given a special name:
\begin{definition}
\label{def:ThomasRotation}
The rotation matrix $\mat{D}$ resulting from 
polar decomposition \eqref{eq:ProductBoosts-1}
of the product of two boosts 
$B(\vec\beta_1)\,B(\vec\beta_2)$ is called
the \textbf{Thomas rotation} and denoted by
$\mat{T}(\vec\beta_1,\vec\beta_2)$.
\end{definition}
The Thomas rotation can be written down by 
forming the expression on the right-hand 
side of \eqref{eq:LT-PolarDec-b} using 
\eqref{eq:ProductBoosts-4}, but the algebraic 
expression is complex and not immediately 
telling. It can be understood in simple terms as 
follows. First we observe that it is a linear 
combination of $\mat{E}_3$, 
$\vec n_1\otimes\vec n_1$, 
$\vec n_2\otimes\vec n_2$,  
$\vec n_1\otimes\vec n_2$, and 
$\vec n_2\otimes\vec n_1$. Hence it maps the plane 
$\Span\{\vec n_1,\vec n_2\}= \Span\{\vec\beta_1,\vec\beta_2\}$ into itself and 
leaves the orthogonal complement pointwise fixed; 
in other words: it is a rotation in the plane 
spanned by the two boost velocities. That plane 
contains $\vec a$ and $\vec b$ and we know from 
 \eqref{eq:ProofPolarDec-2} that 
$\vec b=\mat{D}\vec a$.\footnote{Note that 
whereas $\vec b=\mat{D}\vec a$ is always 
true, $\mat{D}$  need generally not be a 
rotation in the plane spanned by 
$\vec a$ and $\vec b$. For the Lorentz transformation resulting from the composition 
of two boosts, however, this is the case.} 
Hence the rotation angle $\theta$ is the 
angle between $\vec a$ and $\vec b$,
\begin{equation}
\label{eq:ProductBoosts-5}
\cos(\theta)
=\frac{\vec a\cdot\vec b}{\Vert\vec a\Vert\,\Vert\vec b\Vert}=\frac{\vec a\cdot\vec b}{\gamma^2-1}\,,
\end{equation}
counted positively with respect to the 
orientation given by the ordered pair 
$\{\vec a,\vec b\}$. Here we used  
$\Vert\vec a\Vert=\Vert\vec b\Vert=\sqrt{\gamma^2-1}$ (compare 
\eqref{eq:LorentzTransRel-1a} and \eqref{eq:LorentzTransRel-1b}, using 
$c=\gamma$) in the last step. 

On the other hand, according to the general formula for the rotation angle, we have  
\begin{equation}
\label{eq:ProductBoosts-6}
1+2\cos(\theta)
=\mathrm{trace}(\mat{D})
=\mathrm{trace}(\mat{M})-\frac{\vec a\cdot\vec b}{\gamma+1}\,,
\end{equation}
using \eqref{eq:LT-PolarDec-b} in the last step.
Replacing $\vec a\cdot\vec b$ according to 
\eqref{eq:ProductBoosts-5} leads to an equation 
for $\cos(\theta)$ that we can solve:
\begin{equation}
\label{eq:ProductBoosts-7}
\cos(\theta)
=\frac{\mathrm{trace}(\mat{M})-1}{\gamma+1}\,.
\end{equation}
Now, \eqref{eq:ProductBoosts-3d} gives 
\begin{equation}
\begin{split}
\label{eq:ProductBoosts-8}
\mathrm{trace}(\mat{M})-1
&=3
+(\gamma_1-1)
+(\gamma_2-1)
+\beta_1\gamma_1\beta_2\gamma_2\,\cos(\varphi)\\
&\qquad+(\gamma_1-1)(\gamma_2-1)\,\cos^2(\varphi)\\
&=2+\gamma_1\gamma_2\bigl(1+\beta_1\beta_2\cos(\varphi)\bigr)\\&\qquad-(\gamma_1-1)(\gamma_2-1)\,\sin^2(\varphi)\\
&=1+(\gamma+1)-(\gamma_1-1)(\gamma_2-1)\,\sin^2(\varphi)\,.
\end{split}
\end{equation}
where, as in \eqref{eq:ProductBoosts-4c},
$\varphi$ denotes the angle between the 
velocities, i.e. 
\begin{equation}
\label{eq:ProductBoosts-9}
\cos(\varphi):=\vec n_1\cdot\vec n_2\,.
\end{equation}
Hence \eqref{eq:ProductBoosts-7} becomes
\begin{equation}
\label{eq:ProductBoosts-10}
\cos(\theta)
=1-\frac{(\gamma_1-1)(\gamma_2-1)\sin^2(\varphi)}{1+\gamma_1\gamma_2+\sqrt{(\gamma_1^2-1)(\gamma_2^2-1)}\ \cos(\varphi)}\,.
\end{equation}
We have deliberately written the denominator in 
the second term, which is just $1+\gamma$, in 
terms of $\gamma_1$, $\gamma_2$, and $\varphi$,
so as to explicitly display $\theta$ as function 
of the moduli of the two boost velocities (i.e. 
their $\gamma$-factors) and the angle $\varphi$ between them.  

Alternatively, instead of 
$(\gamma_1,\gamma_2,\varphi)$ we may express 
$\cos(\theta)$ as function of $(\gamma_1,\gamma_2,\gamma)$. This is achieved by 
writing $\sin^2(\varphi)=1-\cos^2(\varphi)$
and replacing all occurrences of $\cos(\varphi)$ in \eqref{eq:ProductBoosts-10} with the expression 
that follows from  \eqref{eq:ProductBoosts-4c}:
\begin{equation}
\label{eq:ProductBoosts-11}
\cos(\varphi)=\frac{\gamma-\gamma_1\gamma_2}{\sqrt{(\gamma^2_1-1)(\gamma^2_2-1)}}\,.
\end{equation}
This gives after a few elementary steps
\begin{equation}
\label{eq:ProductBoosts-12}
\cos(\theta)=\frac{(1+\gamma+\gamma_1+\gamma_2)^2}{(1+\gamma)(1+\gamma_1)(1+\gamma_2)}-1\,.
\end{equation}
This expression is remarkable for its simple 
structure and permutation symmetry in 
$\{\gamma,\gamma_1,\gamma_2\}$.
An alternative way to write it is in terms 
of $\cos(\theta/2)=\sqrt{(\cos(\theta)+1)/2}$ and 
$\cosh(\rho):=\gamma$, so that 
$(1+\gamma)=2\,\cosh^2(\rho/2)$, as well as the 
corresponding equations for $\gamma_i$ in 
terms of $\rho_i$ ($i=1,2$)\footnote{The greek letter $\rho$ is chosen because this quantity 
is usually referred to as ``rapidity'', 
introduced by \citet{Robb:OptGeom}.} 
\begin{equation}
\label{eq:ProductBoosts-13}
\cos(\theta/2)=\frac{1+\cosh(\rho)+\cosh(\rho_1)+\cosh(\rho_2)}{4\,\cosh(\rho/2)\cosh(\rho_1/2)\cosh(\rho_2/2)}\,.
\end{equation}
These formulae, as as well as \eqref{eq:ProductBoosts-10}, are well 
known in the literature; an early 
appearance of \eqref{eq:ProductBoosts-13} 
being \citep[formula (124)]{Macfarlane:1962}. 
Elementary derivations are often claimed to 
be unduly tedious or even a ``herculean'' 
task'' \citep[p.\,28]{Ungar:BeyondEinstein}, 
and more elegant ways using Clifford algebras
have been given \citep{Urbantke:1990}. 
However, we believe that the derivation given 
here is sufficiently easy for presentation 
in, say, a basic lecture on SR. 

\begin{figure}
\noindent
\centering
\includegraphics[width=0.45\linewidth]{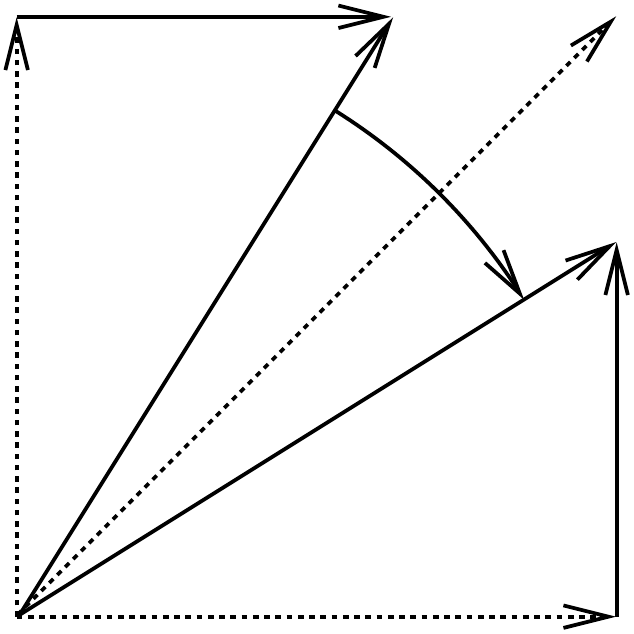}
\put(-17,-10){$\vec\beta_1$}
\put(9,33){\begin{rotate}{90}$\gamma^{-1}\vec\beta_2$\end{rotate}}
\put(-174,150){$\vec\beta_2$}
\put(-130,165){$\gamma^{-1}\vec\beta_1$}
\put(-80,39){\begin{rotate}{31}$\vec\beta_1\vplus\vec\beta_2$\end{rotate}}
\put(-122,74){\begin{rotate}{59}$\vec\beta_2\vplus\vec\beta_1$\end{rotate}}
\put(-20,162){$\vec\beta_1+\vec\beta_2$}
\put(-45,85){$\theta$}
\caption{\label{fig:Fig-ThomasRotation}
\small Non commutativity of velocity addition 
due to Thomas rotation. Shown ist the addition 
of two perpendicular velocities of equal 
magnitude $\beta=0.8$ corresponding to  
$\gamma^{-1}=0.6$. The Thomas rotation
$\mat{T}(\vec\beta_1,\vec\beta_2)$ rotates by an
positive angle $\theta$ in the oriented plane 
spanned by the ordered pair $\{\vec a,\vec b\}$.
Since $\vec a$ is proportional to 
$\vec\beta_2\vplus\vec\beta_1$ and $\vec b$
to $\vec\beta_1\vplus\vec\beta_2$, this 
rotation is in the clockwise -- i.e. negative
-- orientation with respect to the 
ordered pair $\{\vec\beta_1,\vec\beta_2\}$.}
\end{figure}

It is important to note that the right hand side 
of \eqref{eq:ProductBoosts-12} depends on the 
reference state which respect to which the 
$\vec\beta$'s and hence the $\gamma$'s are 
defined. In fact, the $(-\gamma)$'s are just the 
Minkowskian scalar products between the corresponding states of motion with the 
reference state. In this way  \eqref{eq:ProductBoosts-12} can be written as 
rational function of scalar products of states 
of motion with the reference state, thereby 
displaying the dependence on the latter in a 
basis independent way. We will later adopt 
precisely that strategy: to reduce all 
statements to functions of scalar products 
of states or motion.

\subsection{On the magnitude of Thomas rotation}
\label{sec:MoreThomasRotation}
In Fig.\,\ref{fig:Fig-ThomasRotation} we 
show the Thomas rotation for the special 
case of mutually perpendicular velocities of 
equal magnitude; i.e. $\cos(\varphi)=0$
and $\gamma_1=\gamma_2=\gamma_*$. In this 
case \eqref{eq:ProductBoosts-10} reduces to 
\begin{equation}
\label{eq:ThomasRot-Example}
\cos(\theta)=\frac{2\gamma_*}{1+\gamma^2_*}\,,
\end{equation}
which is a monotonically decreasing function 
of $\gamma_*\in(1,\infty)$ ranging from $1$ 
to zero and corresponding to a monotonically 
increasing angle from $0$ to $\pi/2$. 

More generally, we may ask for the angle 
$\varphi$ at which $\theta$ is largest for 
given $\gamma_{1{,}2}$. This can be answered 
using either expression 
\eqref{eq:ProductBoosts-10} or 
\eqref{eq:ProductBoosts-12}, if in the latter 
case we regard $\gamma$ again as function of
$\gamma_{1{,}2}$ and $\varphi$ according to 
\eqref{eq:ProductBoosts-4c}. This latter 
possibility turns out to be slightly more 
convenient, so let's follow this strategy and 
consider \eqref{eq:ProductBoosts-12}. Since 
$\cos$ is monotonically decreasing in 
$[0,\pi]$, a maximum of $\theta$ corresponds 
to a minimum of $\cos(\theta)$, which we now 
seek. The only dependence of $\cos(\theta)$
on $\varphi$ is through $\gamma$. The 
$\varphi$ derivative of the latter is 
$\gamma':=d\gamma/d\varphi=-\beta_1\gamma_1\beta_2\gamma_2\sin(\varphi)$,
and $d\cos(\theta)/d\varphi=
\bigl(d\cos(\theta)/d\gamma\bigr)\gamma'$.
Hence stationary points exist either for 
$\gamma'=0$, which is for the boundary values 
$\varphi=0$ or $\varphi=\pi$ that correspond 
to  aligned and anti-alingned velocities and 
which correspond to $\theta=0$ (hence a minimum), 
or for $d\cos(\varphi)/d\gamma=0$, containing 
the maxima. In view of \eqref{eq:ProductBoosts-12}, 
the latter equation is equivalent to 
\begin{equation}
\label{eq:ThomasAngleMax-1}
\frac{d}{d\gamma}\,
\frac{(1+\gamma+\gamma_1+\gamma_2)^2}{(1+\gamma)}
=\frac{1+\gamma+\gamma_1+\gamma_2}{(1+\gamma)^2}\,
(1+\gamma-\gamma_1-\gamma_2)=0\,.
\end{equation}
Since all $\gamma$'s are positive, this is 
in turn equivalent to 
\begin{equation}
\label{eq:ThomasAngleMax-2}
\gamma=\gamma_{\rm max}:=\gamma_1+\gamma_2-1
\end{equation}
which, according to \eqref{eq:ProductBoosts-11},
corresponds to an angle $\varphi_{\rm max}$ between 
the velocities that satisfies
\begin{equation}
\label{eq:ThomasAngleMax-3}
\cos(\varphi_{\rm max})=\frac{\gamma_{\rm max}-\gamma_1\gamma_2}{\sqrt{(\gamma^2_1-1)(\gamma^2_2-1)}}
=-\ \sqrt{\frac{(\gamma_1-1)(\gamma_2-1)}{(\gamma_1+1)(\gamma_2+1)}}\,.
\end{equation}
The fact that this is negative means that 
$\varphi_{\rm max}>\pi/2$. This means that the maximal 
Thomas angle is only obtained for obtuse angles 
between the velocities. The value of the maximal 
Thomas angle is given by inserting 
\eqref{eq:ThomasAngleMax-2} into \eqref{eq:ProductBoosts-12}:
\begin{equation}
\begin{split}
\label{eq:ThomasAngleMax-4}
\cos(\theta_{\max})
=&\ \frac{(1+\gamma_{\rm max}+\gamma_1+\gamma_2)^2}%
  {(1+\gamma_{\rm max})(1+\gamma_1)(1+\gamma_2)}-1\\
=&\ \frac{3(\gamma_1+\gamma_2)-\gamma_1\gamma_2-1}
     {(1+\gamma_1)(1+\gamma_2)}\\
=&\ 1-2\,\frac{(\gamma_1-1)(\gamma_2-1)}{(\gamma_1+1)(\gamma_2+1)}\\
&=\ 1-2\cos^2(\varphi_{\rm max})\\
&=\ -\ \cos(2\varphi_{\rm max})\,.
\end{split}
\end{equation}
From this we infer 
\begin{equation}
\label{eq:ThomasAngleMax-5}
\cos(\theta_{\rm max})+\cos(2\varphi_{\rm max})
=2\,
\cos\left(\frac{\theta_{\rm max}+2\varphi_{\rm max}}{2}\right)
\cos\left(\frac{\theta_{\rm max}-2\varphi_{\rm max}}{2}\right)
=0\,,
\end{equation}
which in turn implies that either 
$\theta_{\rm max}+2\varphi_{\rm max}$ or 
$\theta_{\rm max}-2\varphi_{\rm max}$ is 
an odd-integer multiple of $\pi$. But 
$\varphi_{\rm max}$ is obtuse, i.e. between 
$\pi/2$ and $\pi$, as follows from  \eqref{eq:ThomasAngleMax-3}. If either 
$\gamma_1$ and/or $\gamma_2$ approach the 
value $1$ from above $\cos(\varphi_{\rm max})$
approaches the value zero from below and hence 
$\varphi_{\rm max}$ approaches $\pi/2$ from 
above. In this case $\theta_{\rm max}$ 
approaches zero from below if we refer both 
angles to the orientation given to the 
plane of rotation by the ordered pair 
$\{\vec\beta_1,\vec\beta_2\}$ 
(compare Fig.\,\ref{fig:Fig-ThomasRotation}).
This shows that with that convention our unique solution to \eqref{eq:ThomasAngleMax-5} is  
given by 
\begin{equation}
\label{eq:ThomasAngleMax-6}
\theta_{\rm max}=\pi-2\varphi_{\rm max}\,.
\end{equation}
So $\theta_{\rm max}$ varies between $0$ 
and $-\pi$ and its modulus exceeds the right 
angle $-\pi/2$ if $\cos(\theta_{\rm max})$
turns negative. This, according to the third
equation in \eqref{eq:ThomasAngleMax-4}, is the 
case if  
\begin{equation}
\label{eq:ThomasAngleMax-7}
\frac{(\gamma_1-1)(\gamma_2-1)}{(\gamma_1+1)(\gamma_2+1)}>\frac{1}{2}\,.	
\end{equation}
For equal velocities, i.e. 
$\gamma_1=\gamma_2=\gamma_*=1/\sqrt{1-\beta^2_*}$
this happens for
\begin{equation}
\label{eq:ThomasAngleMax-8}
\gamma_*>\frac{\sqrt{2}+1}{\sqrt{2}-1}\approx 5.828
\end{equation}
corresponding to 
\begin{equation}
\label{eq:ThomasAngleMax-9}
\beta_*>\frac{2^{5/4}}{2^{1/2}+1}
\approx 0.985\,,
\end{equation}
that is, 98.5\% of the velocity of light.

\subsection{On the algebraic structure 
of Einstein addition}
\label{sec:AlgebraicStructure}
Using Definition\,\ref{def:ThomasRotation} we shall now write 
\begin{equation}
\label{eq:AlgStr-1}
B(\vec\beta_1)B(\vec\beta_2)
=B(\vec\beta_1\vplus\vec\beta_2)\, 
 R\bigl(\mat{T}(\vec\beta_1,\vec\beta_2)\big)\,.
\end{equation}
Taking the transposed of that equation
and using the symmetry of $B$, the orthogonality 
of $R$, and equivariance property  \eqref{eq:ProofPolarDec-6}, we 
get\footnote{Historically it is interesting 
to note that this formula has already been 
written down (in different but easy-to-translate 
notation) by 
\citet[p.\,169, formula\,(7)]{Silberstein:1914}.
Without explicitly relating it to any 
orthogonal transformation, mere non-commutativity has been noted from the very beginning; e.g.,  
by \citet[pp.\,905-6]{Einstein-SRT:1905}
and \citet[p.\,44]{Laue-SRT:1911}. Einstein 
remarked (p.\,905) that the parallelogram-law 
for velocity addition is only approximately 
valid and at the same time finds it ``remarkable'' 
(p.\,906) that the expression for the modulus 
of the composed velocities is a symmetric 
function in these velocities. Laue, after 
writing down the law of velocity addition, 
states: ``The strangest (merkw\"urdigste) 
thing is that the two velocitied to be added do not enter 
on equal footing (gleichberechtigt)''. Laue's
text also contains an explicit discussion of 
the special case of orthogonal velocities and 
a figure equivalent to our 
Fig.\,\ref{fig:Fig-ThomasRotation}
\citep[p.\,44, Fig.\,4]{Laue-SRT:1911}.}
\begin{equation}
\label{eq:AlgStr-2}
\vec\beta_1\vplus\vec\beta_2=\mat{T}(\vec\beta_1,\vec\beta_2)\bigr(\vec\beta_2\vplus\vec\beta_1\bigl)\,,
\end{equation}
showing non-commutativity for non-collinear 
velocities. That we already know from 
$\vec b=\mat{D}\vec a$ -- see line above \eqref{eq:ProductBoosts-5} -- and
\eqref{eq:ProductBoosts-4}. An immediate 
consequence of \eqref{eq:AlgStr-2} is 
\begin{equation}
\label{eq:AlgStr-3}
\mat{T}(\vec\beta_2,\vec\beta_1)=
\mat{T}^{-1}(\vec\beta_1,\vec\beta_2)
\end{equation}
We also note the following: The vector 
$\vec\beta_1\vplus\vec\beta_2$ is a linear combination of $\vec n_1$ and 
$\vec n_2$ with coefficients that only involve 
scalar products of these vectors; that is, the coefficients are invariant under 
$(\vec n_1,\vec n_2)\rightarrow (-\vec n_1,-\vec n_2)$ (even functions). The rotation matrix $T(\vec\beta_1,\vec\beta_2)$ is a linear 
combination of the identity and terms 
proportional to $\vec n_a\otimes\vec n_b^\top$
($a,b\in\{1,2\}$) with coefficients also 
depending only on the scalar products of 
these vectors. This implies 
\begin{subequations}
\label{eq:AlgStr-4}
\begin{alignat}{1}
\label{eq:AlgStr-4a}
(-\vec\beta_1)\vplus(-\vec\beta_2)
&\,=\,-\,(\vec\beta_1\vplus\vec\beta_2)\,,\\
\label{eq:AlgStr-4b}
\mat{T}(-\vec\beta_1\,,\,-\vec\beta_2)
&\,=\,\mat{T}(\vec\beta_1,\vec\beta_2)\,.
\end{alignat}	
\end{subequations}
Other immediate consequences of the same 
remark are:
\begin{subequations}
\label{eq:AlgStr-5}
\begin{alignat}{1}
\label{eq:AlgStr-5a}
(\mat{D}\vec\beta_1)\vplus(\mat{D}\vec\beta_2)
&\,=\,\mat{D}(\vec\beta_1\vplus\vec\beta_2)\,,\\
\label{eq:AlgStr-5b}
\mat{T}[\mat{D}\vec\beta_1,\mat{D}\vec\beta_2]
&\,=\,\mat{D}\,\mat{T}[\vec\beta_1,\vec\beta_2]\mat{D}^{-1}\,,
\end{alignat}	
\end{subequations}
for any $\mat{D}\in\group{SO}(3)$.

Now, a general Lorentz transformation 
is written as 
\begin{equation}
\label{eq:GenLT}
L(\vec\beta,\mat{D}):=B(\vec\beta)R(\mat{D})\,.
\end{equation}   
The composition of two such transformations is 
then given by 
\begin{equation}
\label{eq:CompositionLT-1}
\begin{split}
L(\vec\beta,\mat{D})
&=L(\vec\beta_1,\mat{D}_1)\,L(\vec\beta_2,\mat{D}_2)\\
&=B(\vec\beta_1)\,R(\mat{D}_1)\,B(\vec\beta_2)\,R(\mat{D}_2)\\
&=B(\vec\beta_1)\,R(\mat{D}_1)\,B(\vec\beta_2)\,
  R(\mat{D}^{-1}_1)\,R(\mat{D}_1)\,R(\mat{D}_2)\\
&=B(\vec\beta_1)\,B(\mat{D_1}\vec\beta_2)\,R(\mat{D}_1\mat{D}_2)\\
&=B(\vec\beta_1\vplus\mat{D}_1\vec\beta_2)\,R\bigl(\mat{T}[\vec\beta_1,\mat{D}_1\vec\beta_2]\,\mat{D}_1\,\mat{D}_2\bigr)\,,
\end{split}
\end{equation} 
so that 
\begin{subequations}
\label{eq:CompositionLT-2}
\begin{alignat}{1}
\label{eq:CompositionLT-2a}
\vec\beta
&\,=\,\vec\beta_1\vplus\mat{D}_1\vec\beta_2\,,\\
\label{eq:CompositionLT-2b}
\mat{D}
&\,=\,\mat{T}[\vec\beta_1,\vec\beta_2]\,\mat{D}_1\,\mat{D}_2\,.
\end{alignat}
\end{subequations}
This should be compared with the group 
composition law of the semi-direct product 
$\reals^3\rtimes\group{SO}(3)$, which would be 
given by 
$\vec\beta=\vec\beta_1+\mat{D}_1\vec\beta_2$
and $\mat{D}=\mat{D}_1\,\mat{D}_2$, i.e. $\vplus$ 
replaced by $+$ and the Thomas rotation always 
the identity, as it it is the case for the 
Galilei group (with $\vec\beta$ replaced by $\vec v$). 

In order to deduce the parameters for the inverse
transformation $L^{-1}(\vec\beta,\mat{D})$ we first 
note that $B^{-1}(\vec\beta)=B(-\vec\beta)$, as is already obvious from \eqref{eq:EmbedBoost}.
Taking the inverse of both sides of 
\eqref{eq:GenLT} and using \eqref{eq:ProofPolarDec-6}
then gives 
\begin{equation}
\label{eq:AlgStr-6}
L^{-1}(\vec\beta,\mat{D})
:= R(\mat{D}^{-1})B(-\vec\beta)
= B(-\mat{D}^{-1}\vec\beta)\,R(\mat{D}^{-1})\\
=L(-\mat{D}^{-1}\vec\beta,\mat{D}^{-1})\,
\end{equation} 
Again this can be compared with the inverse 
of a semi-direct product $\reals^3\rtimes\group{SO}(3)$, which would be given by just the same 
formula.   

The Thomas rotation, which we already 
identified in \eqref{eq:AlgStr-2} as 
directly responsible for the non-commutativity 
of velocity composition, is also responsible 
for other remarkable properties.  One of them 
is the fact that the inverse of a composed 
velocity differs from the transposed 
composition of the inverse velocities.  
In fact, the inverse of the combination 
$B(\vec\beta_1)B(\vec\beta_2)$ is obviously 
$B(-\vec\beta_2)B(-\vec\beta_1)$, the polar decomposition of which will then,
contain a boost with velocity 
$(-\vec\beta_2)\vplus(-\vec\beta_1)$. 
The latter differs from the inverse velocity 
$-(\vec\beta_1\vplus\vec\beta_2)$ of the original
combination $B(\vec\beta_1)B(\vec\beta_2)$ by a
Thomas rotation:
\begin{equation}
\label{eq:MocanuParadox}
\begin{split}
-(\vec\beta_1\vplus\vec\beta_2)
&=(-\vec\beta_1)\vplus(-\vec\beta_2)\\
&=\mat{T}(-\vec\beta_1,-\vec\beta_2)\bigl((-\vec\beta_2)\vplus(-\vec\beta_1)\bigr)\\
&=\mat{T}(\vec\beta_1,\vec\beta_2)
  \bigl((-\vec\beta_2)\vplus(-\vec\beta_1)\bigr)\\
& \ne (-\vec\beta_2)\vplus(-\vec\beta_1)\,,
\end{split}
\end{equation}
Here we used \eqref{eq:AlgStr-4a} in the first,
\eqref{eq:AlgStr-2} in the second, and \eqref{eq:AlgStr-4b} in the third equality.
This has often been considered paradoxical, following \citet{Mocanu:1986}. It is  
known in the literature as ``Mocanu Paradox'' \citep{Ungar:1989}.

Another consequence of the Thomas rotation is the 
failure of associativity of Einstein 
addition.\footnote{The first to explicitly 
note non-associativity and the r\^ole of Thomas 
rotation in it seems to have been \citet{Ungar:1989}.} 
To see this let us start from associativity of 
composition of Lorentz transformations with three
pure boosts:
\begin{equation}
\label{eq:AlgStr-7}
B(\vec\beta_1)
\bigl(B(\vec\beta_2)
B(\vec\beta_3)\bigr)
=
\bigl(B(\vec\beta_1)
B(\vec\beta_2)\bigr)
B(\vec\beta_3)	
\end{equation}
Applying polar decomposition 
\eqref{eq:ProductBoosts-1} with 
$\vec\beta=\vec\beta_1\vplus\vec\beta_2$
and $\mat{D}=\mat{T}(\vec\beta_1,\vec\beta_2)$ 
on each side gives for the left-hand side
\begin{subequations}
\label{eq:AlgStr-8}
\begin{equation}
\label{eq:AlgStr-8a}
\begin{split}
& B(\vec\beta_1)\,B(\vec\beta_2\vplus\vec\beta_3)\,
R\bigl(\mat{T}(\vec\beta_2,\vec\beta_3)\bigr)\\
=&\ 
B\bigl(\vec\beta_1\vplus(\vec\beta_2\vplus\vec\beta_3)
\bigr)\ R\bigl(\mat{T}(\vec\beta_1,\vec\beta_2\vplus\vec\beta_3)\bigr)
R\bigl(\mat{T}(\vec\beta_2,\vec\beta_3)\bigr)\,.
\end{split}
\end{equation}
and, slightly more complicated, for the 
right-hand side
\begin{equation}
\label{eq:AlgStr-8b}
\begin{split}
& B(\vec\beta_1\vplus\vec\beta_2)\,
R\bigl(\mat{T}(\vec\beta_1,\vec\beta_2)\bigr)\,
B(\vec\beta_3)\\
=\ &
B(\vec\beta_1\vplus\vec\beta_2)\,
B\bigl(\mat{T}(\vec\beta_1,\vec\beta_2)\vec\beta_3\bigr)\,
R\bigl(\mat{T}(\vec\beta_1,\vec\beta_2)\bigr)\\
= \ & B\bigl((\vec\beta_1\vplus\vec\beta_2)\vplus \mat{T}(\vec\beta_1,\vec\beta_2)\vec\beta_3\bigr)\
R\bigl(\mat{T}(\vec\beta_1\vplus\vec\beta_2\,,\,\mat{T}(\vec\beta_1,\vec\beta_2)\vec\beta_3)\bigr)
R\bigl(\mat{T}(\vec\beta_1,\vec\beta_2)\bigr)\,,
\end{split}
\end{equation}
\end{subequations}
where we again used \eqref{eq:ProofPolarDec-6}
from the first to the second line.
As both sides are in polar decomposed form,
boost and rotation parts must separately be 
equal, leading for the boosts to 
\begin{subequations}
\label{eq:AlgStr-9}
\begin{alignat}{1}
\label{eq:AlgStr-9a}
\vec\beta_1\vplus(\vec\beta_2\vplus\vec\beta_3)
&\,=\,
(\vec\beta_1\vplus\vec\beta_2)\vplus \mat{T}(\vec\beta_1,\vec\beta_2)\vec\beta_3	\,,\\
\label{eq:AlgStr-9b}
(\vec\beta_1\vplus\vec\beta_2)\vplus\vec\beta_3
&\,=\,
\vec\beta_1\vplus\bigl(\vec\beta_2\vplus \mat{T}(\vec\beta_2,\vec\beta_1)\vec\beta_3\bigr)\,,
\end{alignat}
\end{subequations}
where \eqref{eq:AlgStr-9b} follows immediately 
from \eqref{eq:AlgStr-9a} by setting 
$\vec\beta'_3:=\mat{T}(\vec\beta_1,\vec\beta_2)\vec\beta_3$, so 
that, according to \eqref{eq:AlgStr-3}, 
$\vec\beta_3:=\mat{T}(\vec\beta_2,\vec\beta_1)\vec\beta'_3$. 
Dropping the prime on $\vec\beta'_3$ then 
gives \eqref{eq:AlgStr-9b}. 

Equations \eqref{eq:AlgStr-9} show explicitly 
how the existence of the Thomas precession 
obstructs associativity. Formula 
\eqref{eq:AlgStr-9a} and  \eqref{eq:AlgStr-9b}
are identical with the ``right weak associative 
law of velocity composition'' and the ``left 
weak associative law of velocity composition'',
respectively, stated by \citet[p.\,71, expression iia,b]{Ungar:1988}. There it is also stated that 
the proof of such identities ``is lengthy and, hence, requires the use of computer algebra'' \citep[p.\,72]{Ungar:1988}. As we have just 
seen, this is an exaggeration. 

Another interesting and immediate consequence 
from \eqref{eq:AlgStr-7} is obtained by 
specialising to $\vec\beta_1=\vec\beta_3$. In 
this case the triple product is $B(\vec\beta_1)B(\vec\beta_2)B(\vec\beta_1)$, which is 
symmetric and positive definite, hence already 
polar decomposed. Therefore the rotational 
part on the right-hand side of \eqref{eq:AlgStr-8a} must be the identity, which leads to 
\begin{equation}
\label{eq:AlgStr-10}
\mat{T}(\beta_1,\beta_2)
=\mat{T}(\beta_1,\beta_2\vplus\beta_1)
=\mat{T}(\beta_1\vplus\beta_2,\beta_2)\,,
\end{equation}
where we also used \eqref{eq:AlgStr-3}
and the second equation follows from the first 
by inversion and exchange the indices 1 and 2.

Even though Einstein addition fails 
associativity, it is  remarkable that it 
does maintains a property that is usually 
implied by it. To explain this, let us first 
make the obvious observation that, like in 
ordinary vector addition, the neutral element 
for Einstein addition is still the zero 
velocity and the unique left- and right 
inverse of $\vec\beta$ is $(-\beta)$. Hence for 
all $\vec\beta$ we have
\begin{equation}
\label{eq:AlgStr-11}
\vec\beta\vplus\vec 0=\vec 0\vplus\vec\beta=\vec\beta	
\end{equation}
and 
\begin{equation}
\label{eq:AlgStr-12}
\vec\beta\vplus(-\vec\beta)=(-\vec\beta)\vplus\vec\beta=\vec 0\,.
\end{equation}
Assume for a moment that associativity did 
hold. We could then uniquely solve an equation like
\begin{equation}
\label{eq:AlgStr-13}
\vec\beta_1\vplus\vec\beta_2=\vec\beta_3
\end{equation}
for $\vec\beta_1$ given $\vec\beta_2$ and 
$\vec\beta_3$, or for  $\vec\beta_2$ given 
$\vec\beta_1$ and $\vec\beta_3$. The way to 
achieve this would in the first case be to 
$\vplus$-multiply \eqref{eq:AlgStr-13} from 
the right with $(-\vec\beta_2)$ and then 
use associativity to show that the left-hand 
side is just $\vec\beta_1$ whereas the 
right-hand side is 
$(\vec\beta_3)\vplus(-\vec\beta_2)$. 
Alternatively, left $\vplus$-multiplication 
with $-\vec\beta_1$ would determine 
$\vec\beta_2$ as 
$(-\vec\beta_1)\vplus\vec\beta_3$. Now, in 
reality, we do not have associativity and we 
cannot proceed in this way. But -- and that 
is a remarkable fact --  we can still write 
down explicit expressions solving 
\eqref{eq:AlgStr-13} for either $\beta_1$ 
or $\beta_2$. Moreover, at least for 
$\beta_2$, the expression is just that we 
would have derived on account of associativity, 
as just discussed. We have 
\begin{theorem}
\label{thm:CompositionSolvability}
The unique solutions to \eqref{eq:AlgStr-13} 
are
\begin{subequations}
\label{eq:AlgStr-14}	
\begin{alignat}{1}
\label{eq:AlgStr-14a}	
\vec\beta_1
&\,=\, \vec\beta_3\vplus 
    \bigl(
    -\mat{T} (\vec\beta_3,\vec\beta_2)
    \vec\beta_2
    \bigr)\,,\\
\label{eq:AlgStr-14b}	
\vec\beta_2
&\,=\, (-\vec\beta_1)\vplus\vec\beta_3\,.
\end{alignat}
\end{subequations}
\end{theorem}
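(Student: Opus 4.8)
The plan is to lift the scalar equation \eqref{eq:AlgStr-13} to the group $\POLor$ and then read everything off from uniqueness of polar decomposition. The master relation is \eqref{eq:AlgStr-1}, which says that $\vec\beta_1\vplus\vec\beta_2=\vec\beta_3$ holds precisely when the symmetric positive-definite (``boost'') factor in the polar decomposition of $B(\vec\beta_1)B(\vec\beta_2)$ equals $B(\vec\beta_3)$; equivalently, $B(\vec\beta_1)B(\vec\beta_2)=B(\vec\beta_3)\,R(\mat{T}(\vec\beta_1,\vec\beta_2))$. To solve for one of the two left factors I would multiply this identity by the inverse of the \emph{other} boost and then polar-decompose the result, exploiting that the surviving left-hand side is itself a pure boost, so that uniqueness of polar decomposition forces the two boost parts to agree and the rotation parts to cancel.

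For \eqref{eq:AlgStr-14b} I would left-multiply by $B(-\vec\beta_1)=B(\vec\beta_1)^{-1}$ to get $B(\vec\beta_2)=B(-\vec\beta_1)B(\vec\beta_3)\,R(\mat{T}(\vec\beta_1,\vec\beta_2))$. Applying \eqref{eq:AlgStr-1} to the product $B(-\vec\beta_1)B(\vec\beta_3)$ exhibits its polar decomposition with boost part $B\bigl((-\vec\beta_1)\vplus\vec\beta_3\bigr)$; since the left-hand side $B(\vec\beta_2)$ has trivial rotation part, uniqueness yields at once $\vec\beta_2=(-\vec\beta_1)\vplus\vec\beta_3$. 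This reproduces the answer already obtained algebraically in Exercise~\ref{ex:1}, i.e. \eqref{eq:SR-VelAdd-Inv}, and here \textbf{existence} is immediate by running the computation in reverse: $B(\vec\beta_1)B\bigl((-\vec\beta_1)\vplus\vec\beta_3\bigr)$ telescopes to $B(\vec\beta_3)$ times a rotation.

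For \eqref{eq:AlgStr-14a} I would instead right-multiply by $B(-\vec\beta_2)=B(\vec\beta_2)^{-1}$ and use the conjugation identity \eqref{eq:ProofPolarDec-6} to push the Thomas rotation past the boost, obtaining $B(\vec\beta_1)=B(\vec\beta_3)\,B\bigl(-\mat{T}(\vec\beta_1,\vec\beta_2)\vec\beta_2\bigr)\,R(\mat{T}(\vec\beta_1,\vec\beta_2))$. Reading off the boost part gives the \emph{implicit} relation $\vec\beta_1=\vec\beta_3\vplus\bigl(-\mat{T}(\vec\beta_1,\vec\beta_2)\vec\beta_2\bigr)$, which still refers to the unknown $\vec\beta_1$ inside the Thomas rotation. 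The decisive step is to remove this self-reference by means of \eqref{eq:AlgStr-10}: because $\vec\beta_1\vplus\vec\beta_2=\vec\beta_3$, one has $\mat{T}(\vec\beta_1,\vec\beta_2)=\mat{T}(\vec\beta_1\vplus\vec\beta_2,\vec\beta_2)=\mat{T}(\vec\beta_3,\vec\beta_2)$, which depends only on the given data. Substituting this produces the explicit formula \eqref{eq:AlgStr-14a}, and since every step was forced, its \textbf{uniqueness}.

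The main obstacle is \textbf{existence} in this second case, i.e. confirming that the explicit $\vec\beta_1$ of \eqref{eq:AlgStr-14a} genuinely solves \eqref{eq:AlgStr-13}, not merely that any solution must equal it; substituting back leaves a residual product of Thomas rotations that must be shown to act trivially. I expect the clean way around this to be the observation that all three velocities lie in the single plane $\Span\{\vec\beta_2,\vec\beta_3\}$ --- an Einstein sum of two vectors stays in their span, cf. \eqref{eq:SR-VelAdd-4a} --- so that every Thomas rotation occurring here is a rotation about the common normal of that plane. These rotations then lie in one abelian $\group{SO}(2)$, their angles simply add, and the cancellation forcing the residual factor to be the identity follows from \eqref{eq:AlgStr-3} together with \eqref{eq:AlgStr-10}. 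Alternatively, once uniqueness is in hand, existence can be had more cheaply from the bijectivity of the boost-induced maps: the left factor $\vec\beta_1\mapsto\vec\beta_1\vplus\vec\beta_2$ is a continuous injection that is proper, since $\beta_1\to1$ forces $\gamma(\vec\beta_1\vplus\vec\beta_2)\to\infty$ by \eqref{eq:SR-VelAdd-6}, and a proper injection of the connected ball into itself is onto.
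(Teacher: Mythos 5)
Your uniqueness argument is, in substance, exactly the paper's proof. For \eqref{eq:AlgStr-14b} the paper specialises the weak associative law \eqref{eq:AlgStr-9} (itself obtained by polar-decomposing a triple boost product), which encodes precisely your ``left-multiply by $B(-\vec\beta_1)$ and invoke uniqueness of the polar factors'' manipulation; for \eqref{eq:AlgStr-14a} the paper right-multiplies by $L^{-1}(\vec\beta_2,\mat{D}_2)$ on the level of the parametrised elements \eqref{eq:AlgStr-15}--\eqref{eq:AlgStr-16} and then, exactly as you do, uses \eqref{eq:AlgStr-10} together with \eqref{eq:AlgStr-13} to convert the unknown-dependent $\mat{T}(\vec\beta_1,\vec\beta_2)$ into the data-dependent $\mat{T}(\vec\beta_3,\vec\beta_2)$. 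So the core of your proposal coincides with the paper's route.

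Where you go beyond the paper is in flagging that this only shows that any solution of \eqref{eq:AlgStr-13} must equal \eqref{eq:AlgStr-14a}; existence is a separate matter. You are right to insist on this, and in fact the paper never supplies the converse either: reversing its argument (set $\mat{D}_2:=\id$, $\mat{D}_3:=\mat{T}(\vec\beta_3,\vec\beta_2)$ and define $(\vec\beta_1,\mat{D}_1)$ by \eqref{eq:AlgStr-16}) returns \eqref{eq:AlgStr-13} only if $\mat{D}_1=\id$, which is equivalent to the identity
\begin{equation*}
\mat{T}\bigl(\vec\beta_3\,,\,-\mat{T}(\vec\beta_3,\vec\beta_2)\vec\beta_2\bigr)\,
\mat{T}(\vec\beta_3,\vec\beta_2)=\id\,,
\end{equation*}
i.e.\ precisely what has to be proven. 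Of your two proposed repairs, the first is not a proof as it stands. The planarity observation is correct---all vectors and all Thomas rotations involved do lie in $\Span\{\vec\beta_2,\vec\beta_3\}$ and its $\group{SO}(2)$---but the displayed identity relates Thomas rotations of \emph{different} argument pairs (one argument rotated and sign-flipped), and it does not follow formally from \eqref{eq:AlgStr-3} and \eqref{eq:AlgStr-10} even granting that all these rotations commute: chasing it with those identities (together with \eqref{eq:AlgStr-5}, left cancellation, and the rotation-part equality extracted from \eqref{eq:AlgStr-8}) only reproduces the statement one started from. It can be verified from the explicit angle formula \eqref{eq:ProductBoosts-10}, but that is genuine additional work, not a corollary of the listed algebraic identities. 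Your second repair, by contrast, is sound and closes the gap: injectivity of $\vec x\mapsto\vec x\vplus\vec\beta_2$ follows from the uniqueness already established; properness follows from \eqref{eq:SR-VelAdd-6}, since $1+\vec x\cdot\vec\beta_2>1-\beta_2>0$ on the ball, so the image is closed in $\mathring{B}_1(\reals^3)$; openness of the image, however, requires Brouwer's invariance of domain (continuity, injectivity and properness alone do not make a map open), after which connectedness of the ball forces surjectivity. With that one ingredient made explicit, your proof is complete---and on the existence side it is more complete than the paper's own.
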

\begin{proof}
The proof of \eqref{eq:AlgStr-14b}
is just given by left $\vplus$-multiplication 
with $(-\vec\beta_1)$. We now use formula \eqref{eq:AlgStr-9} in which 
we replace $\vec\beta_1$ with $(-\vec\beta_1)$,
$\vec\beta_2$ with $\vec\beta_1$, and 
$\vec\beta_3$ with $\vec\beta_2$.
The Thomas term $\mat{T}(\vec\beta_1,\vec\beta_2)$
then turns into $\mat{T}(-\vec\beta_1,\vec\beta_1)$
which is the identity. Hence, in this special 
case, we may proceed as if associativity holds 
and get \eqref{eq:AlgStr-14b}.
For the proof of \eqref{eq:AlgStr-14a} we have to 
go a little further and start from the general relation 
\begin{subequations}
\label{eq:AlgStr-15}
\begin{equation}
\label{eq:AlgStr-15a}
L(\vec\beta_1,\mat{D}_1)\,
L(\vec\beta_2,\mat{D}_2)=
L(\vec\beta_3,\mat{D}_3)\,,	
\end{equation}
which reads in terms of parameters, according to 
\eqref{eq:CompositionLT-2},
\begin{alignat}{1}
\label{eq:AlgStr-15b}	
\vec\beta_3&\,=\,\vec\beta_1\vplus \mat{D}_1\vec\beta_2\,,\\
\label{eq:AlgStr-15c}
\mat{D}_3&\,=\,T(\vec\beta_1,\mat{D}_1\vec\beta_2)\mat{D}_1\mat{D}_2\,.
\end{alignat}
\end{subequations}
On the group level we know how to solve 
\eqref{eq:AlgStr-15a}
for $L(\vec\beta_1,\mat{D}_1)$ through right-multiplication 
with $L^{-1}(\vec\beta_2,\mat{D}_2)$. From 
\eqref{eq:AlgStr-6}  and 
\eqref{eq:CompositionLT-2} we also know the 
respective parameter expressions for inversion 
and multiplication. Hence we get 
\begin{subequations}
\label{eq:AlgStr-16}
\begin{equation}
\label{eq:AlgStr-16a}
\begin{split}
L(\vec\beta_1,\mat{D}_1)
&=L(\vec\beta_3,\mat{D}_3)L(-\mat{D}^{-1}_2\vec\beta_2\,,\,\mat{D}_2^{-1})\\
&=L\bigl(\vec\beta_3\vplus(-\mat{D}_3\mat{D}_2^{-1}\vec\beta_2)\,,\,\mat{T}(\vec\beta_3,-\mat{D}_2^{-1}\vec\beta_2)\mat{D}_3\mat{D}_2^{-1}\bigr)\,,
\end{split}
\end{equation}
so that for the parameters we get
\begin{alignat}{1}
\label{eq:AlgStr-16b}	
\vec\beta_1&\,=\,\vec\beta_3\vplus 
(-\mat{D}_3\mat{D}_2^{-1}\vec\beta_2)\,,\\
\label{eq:AlgStr-16c}
\mat{D}_1&\,=\,T(\vec\beta_3,-\mat{D}_3\mat{D}_2^{-1}\vec\beta_2)\mat{D}_3\mat{D}^{-1}_2\,.
\end{alignat}
\end{subequations}  
We note that the equations \eqref{eq:AlgStr-15}
and \eqref{eq:AlgStr-16} are equivalent sets and 
valid for all $(\vec\beta_i,\mat{D}_i)$, $i=1,2,3$.
Next we observe that \eqref{eq:AlgStr-13}
is just \eqref{eq:AlgStr-15b} for the special 
case in which $\mat{D}_1$ is the identity. Hence 
we consider \eqref{eq:AlgStr-15} and \eqref{eq:AlgStr-16} for $\mat{D}_1$ being the 
identity. In 
that case \eqref{eq:AlgStr-15c} becomes  
\begin{equation}
\label{eq:BoostCompositionSolvability-4}
\begin{split}
\mat{D}_3\mat{D}^{-1}_2
&=\mat{T}(\vec\beta_1,\vec\beta_2)\\
&=\mat{T}(\vec\beta_1\vplus\vec\beta_2,\vec\beta_2)
\quad\text{(because of \eqref{eq:AlgStr-10})}\\
&=\mat{T}(\vec\beta_3,\vec\beta_2)
\quad\,\ \qquad\text{(because of \eqref{eq:AlgStr-13})}
\,.	
\end{split}	
\end{equation}
Inserting this into \eqref{eq:AlgStr-16b} proves
\eqref{eq:AlgStr-14a}.	
\end{proof}   

We can summarise the algebraic structure 
realised by $\vplus$ on the open ball 
$\mathring{B}_1(\reals^3)$ in modern 
terminology as follows; see also 
Fig.\,\ref{fig:Fig-AlgStr}:
\begin{itemize}
\item
Let $M$ be a set and $\phi:M\times M$ a map,
denoted by $(a,b)\mapsto\phi(a,b)=:a\cdot b$. 
Then the pair $(M,\phi)$ is called a 
\emph{magma} (sometimes also 
\emph{groupoid}). Hence 
$\bigl(\mathring{B}_1(\reals^3),\vplus\bigr)$ 
is a magma. 
\item
A magma $(M,\phi)$ is called a 
\emph{semigroup} if $\phi$ is associative, 
i.e. $a\cdot (b\cdot c)=(a\cdot b)\cdot c$
for all $a,b,c\in M$. Hence 
$\bigl(\mathring{B}_1(\reals^3),\vplus\bigr)$ 
is \emph{not} a semigroup.
\item
A magma $(M,\phi)$ is called a 
\emph{quasigroup} if for each pair 
$(a,b)\in M\times M$ there is a unique pair 
$(x,y)\in M\times M$ such that $a\cdot x=b$
and $y\cdot a=b$. This property is also 
called \emph{divisibility}. 
Theorem\,\ref{thm:CompositionSolvability}
then shows that $\bigl(\mathring{B}_1(\reals^3),\vplus\bigr)$ is a quasigroup.
\item
A magma $(M,\phi)$ is said to have an  \emph{identity} if there exists an $e\in M$ 
such that $e\cdot a=a\cdot e=a$ for all 
$a\in M$. Such an $e$ is necessarily 
unique (even if  $(M,\phi)$ is not a 
quasigroup). As for Einstein addition 
$\vec 0$ is an identity (compare \eqref{eq:AlgStr-11}), we infer that 
$\bigl(\mathring{B}_1(\reals^3),\vplus\bigr)$ 
is a \emph{quasigroup with identity}.
\item
In a quasigroup with identity each element
$a$ has a unique left and a unique right 
inverse, namely that $x$ and that $y$ satsifying
$x\cdot a=e$ and $a\cdot y=e$. $x$ and $y$ 
need not be identical. If they are, we write 
$x=y=:a^{-1}$. From \eqref{eq:AlgStr-12} we 
see that 
$\bigl(\mathring{B}_1(\reals^3),\vplus\bigr)$ 
is a \emph{quasigroup with identity and 
coinciding left-right inverses}. 
\item
A quasigroup with identity is called a 
\emph{loop}. An associative loop is a group. 
Hence $\bigl(\mathring{B}_1(\reals^3),\vplus\bigr)$ is a loop that is not a group, but is special insofar as left-and right-inverse elements 
coincide.   
\end{itemize}
\begin{figure}[ht]
\centering
\includegraphics[width=0.5\linewidth]{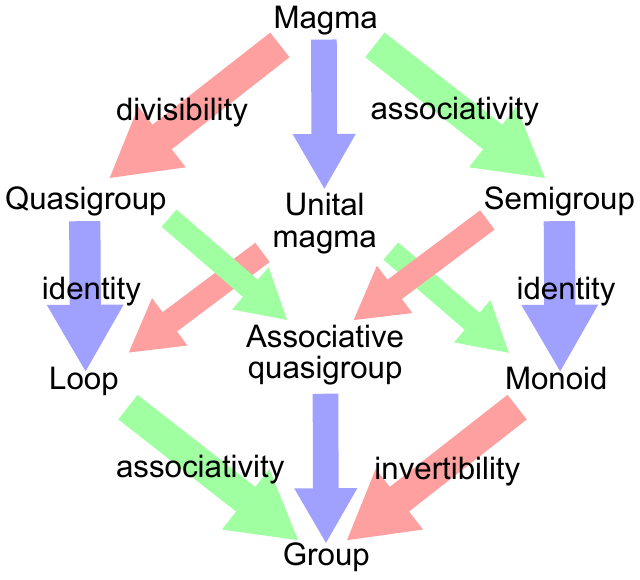}
\caption{\label{fig:Fig-AlgStr}
\small Hierarchy of algebraic structures. 
Einstein addition $\vplus$ endows the open 
ball $\mathring{B}_1(\reals^3)$ with the 
structure of a loop, which is just short of 
being a group by its failure to satisfy associativity. 
(Picture source \url{https://commons.wikimedia.org/wiki/File:Magma_to_group3.svg}. Picture attribution: 
Tomruen, CC0, via Wikimedia Commons)}  
\end{figure}
\newpage

\section{Velocity subtraction: 
the new (geometric) story}
\label{sec:RelativeVelocities}
In this section, which is the heart of 
this paper, we will present a geometric 
view of Lorentz transformations and how to 
decompose them as boosts and rotations. This 
will clarify the invariant meaning behind the Einstein law of velocity addition. All this 
will result in a geometrically satisfying 
definition of the notion of 
``relative velocity'' between two states of 
motion which we will call their 
``link velocity'', indicating that this 
definition rests on the so-called-called 
``boost link theorem'' that we state and 
prove. The conceptually 
important point to keep in mind is that 
the link-velocity between two states of 
motion needs to be referred to a third state 
of motion $s$. All constructions are 
``geometric''  insofar as all expressions 
involve only vectors and their scalar products. 

\subsection{
States of motion and the non-naturalness of 
polar decoposition
\label{sec:PolDecNonNat}
}
Above we used the polar decomposition to 
decompose a Lorentz transformation
into a spatial rotation and a boost. It is 
important to realise that this operation is 
\emph{not} natural. It depends on a preferred 
state of motion which in the matrix formulation 
above was given by the choice of the 
timelike vector $e_0$ of the chosen basis. Any 
rotation $R(\mat{D})$ takes place in the corresponding ``rest space'' that is the 
orthogonal complement of $e_0$. Hence, any 
of our rotations $R(\mat{D})$ acts in a 
spacelike 2-plane within that ``space'' and 
pointwise fixes the timleike 2-plane orthogonal 
to it. The latter always contains $e_0$. 
Similarly, any Boost $B(\vec\beta)$ takes place 
in a timelike 2-plane containing $e_0$ and 
pointwise fixes the spacelike orthogonal 
complement. Decompositions with respect 
to different choices of states of motion 
are \emph{a~priori} incomparable, as they 
refer to different ``spaces'' between which 
no natural identification (isomorphism) exists.  
  
Before we continue let us give the definition 
of the term ``state of motion'' that we just 
used:  
\begin{definition}
\label{def:StateOfMotion-1}
Let 
\begin{equation}
\label{eq:DefStateOfMotion}
V_1:=\{v\in V:\eta(v,v)=-1\}
\end{equation}
be the set of unit timelike vectors in $V$. 
It consists of two connected components, i.e. 
\begin{equation}
\label{eq:DefStateOfMotion-2}
V_1:=V_1^+\cup V_1^-\,,\quad 
V_1^+\cap V_1^-=\emptyset\,.
\end{equation}
If $v\in V_1^+$ then $-v\in V_1^-$. On $V_1$
we consider the equivalence relation 
$v\sim w\Leftrightarrow v=\pm w$. Elements in 
the quotient set 
\begin{equation}
\label{eq:DefStateOfMotion-3}
\som:=V_1/\!\sim
\end{equation}
are called \textbf{states of motion}. They may be 
faithfully represented by, say, elements of 
$V_1^+$, e.g. as follows: pick any element 
$s\in=V_1^+$; then
\begin{equation}
\label{eq:DefStateOfMotion-4}
\som=\{v\in V_1: \eta(v,s)\leq -1\}\,.
\end{equation} 
\end{definition}

As remarked, polar decomposition ist not 
a natural operation on $\Lor$. 
The additional structure that is needed is a 
state of motion or, equivalently, a \emph{Euclidean} inner product 
$g$ on $V$. The equivalence is seen as follows:
We identify $\som$ with $V_1^+$; then any 
$s\in\som$ defines a Euclidean inner product on 
$(V,\eta)$ by 
\begin{equation}
\label{eq:DefStateOfMotion-5}
g:=\eta+2\sigma\otimes\sigma\,,
\end{equation} 
where $\sigma:=\eta_{\downarrow}(s)$. 
Conversely, given $V$ and a Lorentzian metric 
$\eta$ as well as a Euclidean metric $g$, 
we consider the Euclidean sphere 
$S^3:=\{v\in V:g(v,v)=1\}$ (which is compact) 
and on it the function $Q:S^3\rightarrow\reals$,
$v\mapsto\eta(v,v)$. It has precisely 
two negative minima on $S^3$ corresponding to 
a pair of antipodal timelike vectors 
$\pm v\in V$ and hence, upon normalisation, determine a unique state of motion. 

We now give the general definition 
of ``polar decomposition''\footnote{We 
review the general theory of polar 
decomposition for finite-dimensional real 
vector spaces in Appendix\,\ref{sec:PolarDecomposition}.} 
for Lorentz transformations: 
\begin{definition}
\label{def:PolarDecomposition}
Given $L\in\Lor$, 
its \textbf{polar decomposition relative to a 
state of motion $s\in\som$} is 
as follows: Let $\sigma:=\eta_{\downarrow}(s)$
and $g$ defined by means of $s$ and $\eta$ 
as in \eqref{eq:DefStateOfMotion-5}. 
Write $L=BR$ where $R$ is orthogonal with 
respect to $g$ and where $B$ is 
symmetric and positive definite with 
respect to $g$; that is,  
$g(Bv,w)=g(v,Bw)$ for all $v,w\in V$ and 
$g(Bv,Bv)>0$ for all  $v\in V\backslash\{0\}$. 
For given $L$ the factors $R$ and $B$ always 
exist, are unique, and are both again elements 
of $\Lor$.  
\end{definition}
Note that the map $R$ fixes $s$; in fact, 
it pointwise fixes a timelike subspace of $V$ containing $s$ 
and acts non-trivially (if $R\ne\id_V$) on the 
two-dimensional spacelike plane orthogonal to 
it (the plane of rotation). Likewise, $B$ 
pointwise fixes a two-dimensional spacelike 
subspace of $V$ within the orthogonal complement 
of $s$ and acts non-trivially (if $B\ne\id_V$) 
on the two-dimensional timelike subspace of $V$ 
containing $s$. This leads us to 
\begin{definition}
\label{def:BoostRotation}
We call $L\in\Lor$ a \textbf{(spatial) rotation  
relative to $s\in\som$} iff it pointwise fixes 
a two-dimensional timelike subspace of $V$ 
containing $s$.  We call $L$ a 
\textbf{boost relative to $s\in\som$} iff 
it  pointwise fixes a two-dimensional 
spacelike subspace of $V$ within the 
orthogonal complement of $s$. 
\end{definition}
Suppose $L\in\Lor$ is a pure rotation relative 
to $s\in\som$. Then it is easy to see that $L$ is 
also a pure rotation relative $s'\ne s$ iff
$s'$ lies in the timelike orthogonal complement 
of the spacelike plane of rotation, i.e. if 
the timelike plane $\Span\{s,s'\}$ is pointwise 
fixed. Likewise, if $L\in\Lor$ is a pure boost 
relative to $s\in\som$, it is also a pure boost 
relative $s'\ne s$ iff $s'$ lies in the timelike 
plane of the boost, i.e. if the spacelike orthogonal 
complement of $\Span\{s,s'\}$ is poinwise fixed 
by $L$. 

To say that the polar decomposition of $\Lor$ 
is ``non-natural'' means that it only exists 
relative to an additional structural input, here 
the choice of some $s\in\som$. We will see below that 
given that choice the decomposition into boost and 
rotation can be easily formulated without ever 
talking about polar decomposition. This alternative
formulation will be better suited for comparison 
with the Galilei-Newton case that we perform in the final
section. 

\subsection{The kinematical setting}
\label{sec:KinematicalSetting}
We recall that the space of states $\som$ 
is identified with a connected component of the 
spacelike hyperboloid of unit timelike vectors
in $(V,\eta)$. Elements of $\som$ shall be 
denoted by the letter $s$, possibly with 
lower-case indices for distinction. We shall 
simplify notation by denoting the Minkowski 
inner product by a dot, i.e. $\eta(u,v)=:
u\cdot v$.  Further we write $u^2:=u\cdot u$ 
and $\Vert u\Vert:=\sqrt{\vert u^2\vert}$.
The latter does not define a norm in $V$ 
due to the indefiniteness of $\eta$, but it
does on any spacelike subspace $T_s\som$. 

We shall identify the tangent space to 
$\som$ at $s$ with the $\eta$-orthogonal 
complement of $s$ in $V$:
\begin{equation}
\label{eq:DefTangentSpace}
T_s\som=R_s:=\{u\in V:u\cdot s=0\}\,.
\end{equation}
In order to notationally distinguish tangent 
vectors to $\som$ from other vectors in $V$ 
we shall set them in bold. 

Associated to any $s\in\som$ are projection 
endomorphisms in $V$ parallel $(\Vert)$ and
 perpendicular $(\perp)$ to $s$: 
\begin{subequations}
\label{eq:DefProjEndo}
\begin{alignat}{2}
\label{eq:DefProjEndo-a}
&\proj_s^\Vert 
&&\,=\,-s\otimes s\,,\\
\label{eq:DefProjEndo-b}
&\proj_s^\perp 
&&\,=\,\id_v+s\otimes s\,.
\end{alignat}
\end{subequations}
As already explained in 
section\,\ref{sec:GenOrthGroupAlg} below
equation \eqref{eq:OrthogonalLieAlgebra},
 we shall identify 
$\mathrm{End}(V)=V\otimes V^*$ with 
$V\otimes V$, so that, e.g., an element 
$u\otimes v\in V\otimes V$ corresponds 
to the endomorphism 
$V\ni w\mapsto (v\cdot w)\,u\in V$. Hence 
\begin{equation}
\label{eq:ProjEndoActing}
\proj_s^\Vert(v)=-(v\cdot s)\,s
\quad\text{and}\quad
\proj_s^\perp(v)=v+(v\cdot s)\,s\,.
\end{equation}

\begin{definition}
\label{def:RelVel-1}
Given two states of motion $s$ and $s_1$. 
The \textbf{relative velocity between $s_1$ 
and $s$, judged from $s$,} is defined by 
\begin{equation}
\label{eq:DefRelVel-1}
\vec\beta(s,s_1):=\frac{\proj_s^\perp(s_1)}{\Vert\proj_s^\Vert(s_1)\Vert}\ \in\ T_s\som\,.
\end{equation}
This, in geometric terms, is just the 
ordinary definition of velocity (in 
units of $c$) in SR.
\end{definition}
The apparently redundant second reference to 
$s$ expressed in the phrase ``judged from $s$''
will be justified below. It turns out to be 
necessary because relative velocities 
between two states need reference to a third 
one, as already emphasised above, and that 
third one needs not be any of the given two 
ones. 

Noting that $\Vert P^\Vert_s(s_1)\Vert=-(s\cdot s_1)>0$, the expression in \eqref{eq:DefRelVel-1}
is just 
\begin{equation}
\label{eq:DefRelVel-2}
\vec\beta(s,s_1)=-s-\frac{s_1}{s\cdot s_1}\,.
\end{equation}
The squared modulus is 
\begin{equation}
\label{eq:DefRelVel-3}
\beta^2(s,s_1):=\Vert\vec\beta(s,s_1)\Vert^2=1-(s\cdot s_1)^{-2}\,,
\end{equation}
or (again taking into account that 
$(s\cdot s_1)<0$)
\begin{equation}
\label{eq:DefRelVel-4}
\gamma(s,s_1):=-\,(s\cdot s_1)=\frac{1}{\sqrt{1-\beta^2(s,s_1)}}\,.
\end{equation}
This is just the usual ``gamma-factor'' 
associated to any relative velocity. 
Note that the modulus  
\eqref{eq:DefRelVel-3} is symmetric in 
its arguments, i.e. $\beta(s,s_1)=\beta(s_1,s)$,
whereas the vectors 
$\vec\beta(s,s_1)\in T_s\som$ and 
$\vec\beta(s_1,s)\in T_{s_1}\som$ lie in 
different vector spaces and cannot be 
compared directly. In particular, 
a reciprocity statement, like  
``$\vec\beta(s,s_1)=-\vec\beta(s_1,s)$'',
would make no sense. We will see below  
how the reference to a common reference 
state will eventually render such 
a statement meaningful. 

It follows from our earlier discussion of boost transformation that there is a unique boost 
relative to $s$ transforming $s$ to $s_1$. 
We will denote it either by $B(s,s_1)$ or 
$B(s,\vec\beta)$, with $\vec\beta\in T_s\som$ 
given by \eqref{eq:DefRelVel-2}. In fact, its 
form can just be read off (by abstraction) 
from \eqref{eq:EmbedBoost}, 
now setting $\vec n:=\vec\beta/\beta$ and 
$\gamma=(1-\beta^2)^{-1/2}$:
\begin{subequations}
\label{eq:RelBoost-1}
\begin{alignat}{1}
\label{eq:RelBoost-1a}
B(s,\vec\beta)
&\,=\, \id_V+(\gamma-1)
 (-s\otimes s+\vec n\otimes\vec n)
 + \beta\gamma(s\otimes\vec n-\vec n\otimes s)\\
\label{eq:RelBoost-1b}
&\,=\,\proj^\perp_{(s,\vec n)}
    +\gamma \proj^\Vert_{(s,\vec n)}
    +\beta\gamma\,s\wedge\vec n\,.
\end{alignat}
\end{subequations}
The second line results from the first 
by observing that 
$(-s\otimes s+\vec n\otimes\vec n)$
is just the $\eta$-orthogonal projector 
onto $\Span\{s,\vec n\}$ which we 
denoted by $\proj^\Vert_{(s,\vec n)}$. 
Accordingly, $\proj^\perp_{(s,\vec n)}:=\id_V-\proj^\Vert_{(s,\vec n)}$. 
Now, any of the expressions 
\eqref{eq:RelBoost-1} immediately implies 
\begin{subequations}
\label{eq:RelBoost-2}
\begin{alignat}{1}
\label{eq:RelBoost-2a}
B(s,\vec\beta)s&\,=\,
\gamma(s+\beta\vec n)\,,\\
\label{eq:RelBoost-2b}
B(s,\vec\beta)\vec n&\,=\,
\gamma(\vec n+\beta s)\,,\\
\label{eq:RelBoost-2c}
B(s,\vec\beta)\vec v&\,=\,\vec v\ (\forall\vec v: \vec v\cdot s=\vec v\cdot\vec n=0)\,,
\end{alignat}
\end{subequations}
which clearly qualifies $B(s,\vec\beta)$
uniquely as the boost in the $s$-$\vec n$ 
plane with velocity $\vec\beta=\beta\vec n$.  
Replacing $\vec n$ according to \eqref{eq:DefRelVel-2} by $\vec n=(1/\beta\gamma)(s_1-\gamma s)$ in \eqref{eq:RelBoost-1a} leads
after a short calculation to 
\begin{equation}
\label{eq:RelBoost-3}
B(s,s_1)=\id_V+
\frac{%
 s\otimes s
+s_1\otimes s_1
+s\wedge s_1
-2\gamma s_1\otimes s}{\gamma+1}\,.
\end{equation}
Since $\gamma=-(s\cdot s_1)$, this shows that 
$B(s_1,s)$ is a rational function of $s$ and 
$s_1$.  This is in contrast to 
\eqref{eq:RelBoost-1a}, which, if expressed in 
terms of $\vec\beta=\beta\vec n$, still has 
terms proportional to $\gamma$ (rather than 
$\gamma^2$) which is not rational in $\beta$. 

Even though it is obvious from its derivation, we can verify directly 
that  \eqref{eq:RelBoost-3} is a 
boost in the plane spanned by $s$ and 
$s_1$ mapping $s$ to $s_1$. In fact, 
expression \eqref{eq:RelBoost-3} 
immediately leads to  
\begin{subequations}
\label{eq:RelBoost-4}
\begin{alignat}{2}
\label{eq:RelBoost-4a}
& B(s,s_1)s&&\,=\,s_1\,,\\
\label{eq:RelBoost-4b}
& B(s,s_1)s_1&&\,=\,-s+2\gamma s_1\,,\\
\label{eq:RelBoost-4c}
& B(s,s_1)u&&\,=\,u\ (\forall u: u\cdot s=u\cdot s_1=0)\,.
\end{alignat}
\end{subequations}
From (\ref{eq:RelBoost-4a},\ref{eq:RelBoost-4b}) 
we easily infer by short calculations that all 
three scalar products $s\cdot s=-1$, $s_1\cdot s_1=-1$, and $s\cdot s_1=-\gamma$ are left 
invariant under $B(s,s_1)$, which together with 
\eqref{eq:RelBoost-4c} shows that 
indeed all scalar products are left invariant and that hence $B(s_1,s)$
is the said boost.

Equations 
(\ref{eq:RelBoost-4a},\ref{eq:RelBoost-4b}) 
also allow us to easily determine the 
action of $B(s,s_1)$ onto 
$\vec\beta(s,s_1)\in T_s\som$, which 
necessarily results in an element of 
$T_{s_1}\som$. From \eqref{eq:DefRelVel-2} 
we infer:
\begin{equation}
\label{eq:RelBoost-5}
\begin{split}
B(s,s_1)\vec\beta(s,s_1)
&=B(s,s_1)\left(-s-\frac{s_1}{(s\cdot s_1)}\right)\\
&=\left(-s_1-\frac{-s-2(s\cdot s_1)s_1}{s\cdot s_1}\right)\\
&=-\left(-s_1-\frac{s}{s_1\cdot s}\right)\\
&=-\vec\beta(s_1,s)\,,
\end{split}
\end{equation}  
where we used once more \eqref{eq:DefRelVel-2} in the last step.
This  is the right form to state a  
reciprocity of relative velocities at this point\footnote{We will later state another 
one in which both ``velocities'' 
(more precisely: ``link-velocities'',
see below) refer to the same reference
state; see Corollary\ref{thm:Reciprocity}.}  if both refer to different reference 
states. Namely, whereas 
$\vec\beta(s,s_1)\in T_s\som$ and 
$\vec\beta(s_1,s)\in T_{s_1}\som$ 
are incomparable because they lie in 
different tangent spaces, 
$B(s,s_1)\vec\beta(s,s_1)$ can be 
compared to  $\vec\beta(s_1,s)$ (both
lying in $T_{s_1}\som$) and 
likewise $B(s_1,s)\vec\beta(s_1,s)$
can be compared to $\vec\beta(s,s_1)$
(both lying in $T_s\som$) with the 
result that these are, respectively, 
the negative of each other. Moreover, 
as we will show in 
Appendix\,\ref{sec:ParallelTransport},
the linear isometry $B(s,s_1):T_s\som\rightarrow
T_{s_1}\som$ is just that resulting 
from parallel transport along the 
unique geodesic within $\som$ (with 
respect to the metric $\eta$ restricted 
to $T\som$) connecting $s$ with $s_1$.
Hence reciprocity can also be stated with 
$B(s,s_1)$ being interpreted as parallel 
transport along connecting geodesics.  

At this point we recall that any boost 
transformation \eqref{eq:RelBoost-1}
is in the image of the exponential map. 
In fact 
\begin{proposition}
Let $\vec\beta=\beta\vec n$ and 
$\vec\rho=\rho\vec n$; then
\begin{subequations}
\label{eq:BoostExponential1}
\begin{alignat}{1}
\label{eq:BoostExponential1-a}
\exp\bigl(s\wedge \vec   
\rho\bigr)&\,=\,B(s,\vec\beta)\,,\\
\label{eq:BoostExponential1-b}
\text{where}\quad \rho &\,=\,\tanh^{-1}(\beta)\,.
\end{alignat}
\end{subequations}
\end{proposition}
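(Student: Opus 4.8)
The plan is to reduce everything to the algebra of the single endomorphism $\ell := s\wedge\vec n$, where I have written $s\wedge\vec\rho = \rho\,(s\wedge\vec n)$ using $\vec\rho=\rho\vec n$. First I would make explicit, via the identification of $\End(V)$ with $V\otimes V$ recalled after \eqref{eq:OrthogonalLieAlgebra}, that $\ell$ acts by $\ell(w)=(\vec n\cdot w)\,s-(s\cdot w)\,\vec n$. Evaluating on the two distinguished vectors, and crucially using $s\cdot s=-1$, $\vec n\cdot\vec n=1$, $s\cdot\vec n=0$, gives $\ell(s)=\vec n$ and $\ell(\vec n)=s$, while $\ell$ annihilates the $\eta$-orthogonal complement of $\Span\{s,\vec n\}$.

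The key step is then to observe that $\ell^2=\proj^\Vert_{(s,\vec n)}$, the $\eta$-orthogonal projector onto the timelike plane $\Span\{s,\vec n\}$: indeed $\ell^2(s)=s$, $\ell^2(\vec n)=\vec n$, and $\ell^2$ kills the complement. It is precisely the Lorentzian sign $s\cdot s=-1$ that produces $\ell^2=+\proj^\Vert_{(s,\vec n)}$ rather than $-\proj^\Vert_{(s,\vec n)}$, and this is what will convert trigonometric into hyperbolic functions. Consequently the powers close: $\ell^3=\ell$, and more generally $\ell^{2k}=\proj^\Vert_{(s,\vec n)}$ for $k\geq 1$ and $\ell^{2k+1}=\ell$ for $k\geq 0$.

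With the powers under control I would sum the exponential series directly, separating even and odd terms:
\begin{equation}
\exp(\rho\ell)
= \id + \Bigl(\sum_{k\geq 1}\tfrac{\rho^{2k}}{(2k)!}\Bigr)\ell^2
       + \Bigl(\sum_{k\geq 0}\tfrac{\rho^{2k+1}}{(2k+1)!}\Bigr)\ell
= \id + (\cosh\rho-1)\,\proj^\Vert_{(s,\vec n)} + \sinh\rho\;(s\wedge\vec n)\,.
\end{equation}
Finally I would compare this with the boost \eqref{eq:RelBoost-1b}, rewritten using $\proj^\perp_{(s,\vec n)}=\id_V-\proj^\Vert_{(s,\vec n)}$ in the form $B(s,\vec\beta)=\id_V+(\gamma-1)\,\proj^\Vert_{(s,\vec n)}+\beta\gamma\,(s\wedge\vec n)$. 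Matching the two coefficients requires exactly $\cosh\rho=\gamma$ and $\sinh\rho=\beta\gamma$, and these two conditions are together equivalent to the single relation $\tanh\rho=\beta$, i.e. $\rho=\tanh^{-1}(\beta)$, since $\cosh\rho=(1-\tanh^2\rho)^{-1/2}$ recovers $\gamma$ and $\sinh\rho=\tanh\rho\,\cosh\rho$ recovers $\beta\gamma$.

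I expect no genuine obstacle here: the whole argument is a finite computation once $\ell^2=\proj^\Vert_{(s,\vec n)}$ is established, and the only point demanding care is the bookkeeping of the metric signs, which is exactly what distinguishes the boost (hyperbolic) case from the rotation (trigonometric) one.
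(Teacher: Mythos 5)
Your proof is correct and takes essentially the same approach as the paper's: both rest on the key identity $(s\wedge\vec n)^2=\proj^\Vert_{(s,\vec n)}$, split the exponential series into even and odd parts to produce the $\cosh\rho$ and $\sinh\rho$ coefficients, and identify the result with the boost \eqref{eq:RelBoost-1b} through $\gamma=\cosh\rho$ and $\beta\gamma=\sinh\rho$. The only cosmetic difference is that you write the sum as $\id_V+(\cosh\rho-1)\,\proj^\Vert_{(s,\vec n)}+\sinh\rho\;s\wedge\vec n$ instead of the paper's equivalent form $\proj^\perp_{(s,\vec n)}+\cosh\rho\,\proj^\Vert_{(s,\vec n)}+\sinh\rho\;s\wedge\vec n$.
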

\begin{proof}
We start by noting that 
\begin{equation}
\label{eq:BoostExponential2}
\begin{split}
(s\wedge\vec n)^2
&=
(s\otimes\vec n-\vec n\otimes s)\circ
(s\otimes\vec n-\vec n\otimes s)
=-s\otimes s+\vec n\otimes\vec n\\ &=\proj^\Vert_{(s,\vec n)}\,. 
\end{split}
\end{equation} 
Hence, decomposing the exponential series 
into even and odd powers gives
\begin{equation}
\label{eq:eq:BoostExponential3}
\begin{split}
\exp(\rho\,s\wedge\vec n)
&=\sum_{k=0}^\infty\frac{\rho^k}{k!}(s\wedge\vec n)^k\\
&=\sum_{k=0}^\infty\frac{\rho^{2k}}{(2k)!}(s\wedge\vec n)^{2k}
 +\sum_{k=0}^\infty\frac{\rho^{2k+1}}{(2k+1)!}(s\wedge\vec n)^{2k+1}
\end{split}
\end{equation} 
In view of \eqref{eq:BoostExponential2}, each term in the 
first (even) sum is proportional to $\proj^\Vert_{(s,\vec n)}$
except the first ($k=0$), which equals 
$\id_V=\proj^\Vert_{(e,\vec n)}+\proj^\perp_{(e,\vec n)}$. 
In the second (odd) sum, again because of  \eqref{eq:BoostExponential2} and also because of 
$\proj^\Vert_{(s,\vec n)}\circ (s\wedge\vec n)
=(s\wedge\vec n)\circ\proj^\Vert_{(s,\vec n)}
=s\wedge\vec n$ we have 
$(s\wedge\vec n)^{2k+1}=(s\wedge\vec n)$  for all 
$k\geq 0$. Hence 
\begin{equation}
\label{eq:eq:BoostExponential4}
\exp(\rho\,s\wedge\vec n)
=\proj^\perp_{(s,\vec n)}+\cosh(\rho)\,\proj^\Vert_{(s,\vec n)}
+\sinh(\rho)\,s\wedge\vec n\,,
\end{equation} 
which is just \eqref{eq:RelBoost-1b} taking into 
account \eqref{eq:BoostExponential1-b}, i.e.
$\gamma=1/\sqrt{1-\beta^2}=\cosh(\rho)$ and 
$\beta\gamma=\sinh(\rho)$.
\end{proof}
 
Finally we show how to repeat the initial 
matrix-calculation that led to the addition 
formula, now in a geometric fashion making 
manifest that all the velocities appearing 
in it refer to the same reference state 
$s$ and are hence elements of the same 
vector space. Using \eqref{eq:RelBoost-1a} 
and $s\cdot\vec\beta_i=0$ ($i=1,2$) we get 

\begin{equation}
\label{eq:VelocityAddGeom-1}
\begin{split}
\bigl(B(s,\vec\beta_1)\circ B(s,\vec\beta_2)\bigr)s
&=B(s,\vec\beta_1)
  \bigl[\gamma_2(s+\vec\beta_2)\bigr]\\
&=\gamma_2B(s,\vec\beta_1)s+
  \gamma_2B(s,\vec\beta_1)\vec\beta_2\\
&=\gamma_1\gamma_2(s+\vec\beta_1)\\
&\quad+ \gamma_2\bigl[
\vec\beta_2+(\gamma_1-1)(\vec n_1\cdot\vec \beta_2)\,\vec n_1+\gamma_1(\vec\beta_1\cdot\vec\beta_2)\,s
\bigr]\\
&=\gamma_1\gamma_2(1+\vec\beta_1\cdot\vec\beta_2)\,s\\
&\quad +\gamma_1\gamma_2\bigl[\vec\beta_1
 +(\vec n_1\cdot\vec\beta_2)\,\vec n_1\bigr]
+\gamma_2\bigl[\vec\beta_2-
(\vec n_1\cdot\vec\beta_2)\,\vec n_1\bigr]\\
&=\gamma_1\gamma_2(1+\vec\beta_1\cdot\vec\beta_2)
\left(s+\frac{\vec\beta_1+\vec\beta_2^\Vert+\gamma_1^{-1}\vec\beta_2^\perp}{1+\vec\beta_1\cdot\vec\beta_2}\right)\,,
\end{split}
\end{equation}
where, as before, the superscripts $\Vert$ and $\perp$ refer to the projections parallel and 
perpendicular to $\vec n_1$. This is of the 
form of a pure boost relative to $s$ acting 
on $s$:
\begin{equation}
\label{eq:VelocityAddGeom-2}
B(s,\vec\beta)s=\gamma(s+\vec\beta)
\end{equation}
with 
\begin{subequations}
\label{eq:VelocityAddGeom-3}
\begin{alignat}{1}
\label{eq:VelocityAddGeom-3a}
\gamma &\,=\,\gamma_1\gamma_2
 (1+\vec\beta_1\cdot\vec\beta_2)\,,\\
\label{eq:VelocityAddGeom-3b}
\vec\beta &\,=\,\frac{\vec\beta_1
 +\vec\beta_2^\Vert+\gamma_1^{-1}\vec\beta_2^\perp}{1+\vec\beta_1\cdot\vec\beta_2}=\vec\beta_1\vplus\vec\beta_2\,.
\end{alignat}
\end{subequations}
Note that we only calculated the action of 
$B(\vec\beta_1,s)B(\vec\beta_2,s)$ on $s$, 
not the map as such, which also contains the 
Thomas rotation in the spacelike plane 
perpendicular to $s$ and which hence acts as 
the identity on $s$. This is why the Thomas 
rotation does not appear here.  

As a final remark in this subsection we wish to 
point out an interesting relation with hyperbolic 
geometry. We fix a reference state $s$ and set  
$s_i:=B(s,\vec\beta_i)s$ for $i=1,2$. We consider 
the open unit ball in the tangent space of $\som$
at $s$,
\begin{equation}
\label{eq:BallTangent-1}
\U:=\{\vec v\in T_s\som: \vec v\cdot\vec v<1\}\,,
\end{equation}
in which the velocities relative to $s$ take their
values. On $\U$ we define the hyperbolic distance 
function $d:\U\times\U\rightarrow\reals_{\geq 0}$
which assigns to each pair $\vec\beta_{1{,}2}\in\U$ 
the Riemannian geodesic distance\footnote{Equivalently: hyperbolic 
angle or rapidity.} between $s_{1{,}2}=\gamma_{1{,}2}(s+\vec\beta_{1{,}2})$. In this way $(\U,d)$ becomes 
isometric to $(\som,d_h)$, where $d_h$ is the distance 
function induced by the Riemannian metric $h$ on $\som$.
Explicitly, the distance function $d$ is given by  
\begin{equation}
\label{eq:BallTangent-2}
d(\vec\beta_1,\vec\beta_2)
:=\text{arcosh}\bigl(-s_1\cdot s_2\bigr)
=
\text{arcosh}\bigl(
\gamma_1\gamma_2(1-\vec\beta_1\cdot\vec\beta_2)
\bigr)\,.
\end{equation}
Now, (\ref{eq:VelocityAddGeom-1}-\ref{eq:VelocityAddGeom-3})
show that for any boost $B(s,\vec\beta)$ relative to 
$s$ we have 
\begin{equation}
\label{eq:BallTangent-3}
s'_i:=B(s,\vec\beta)s_i=B(s,\vec\beta\vplus\vec\beta_i)s
\end{equation}
But as boosts preserve scalar products we have
$s'_1\cdot s'_2=s_1\cdot s_2$, which implies
\begin{equation}
\label{eq:BallTangent-4}
d(\vec\beta\vplus\vec\beta_1\,,\,\vec\beta\vplus\vec\beta_2)
=d(\vec\beta_1,\vec\beta_2)\,,
\end{equation}
for any triple $(\beta,\beta_1,\beta_2)$ 
of points in $\U$. This equations says that, 
with respect to $d$, velocity addition $\vplus$ 
defines an isometric ``action'' of $\U$ on itself. 
The origin of this lies of course in the proper 
isometric action of the Lorentz group on $(\som,h)$.
One may ask to what extent relativistic addition
is characterised  by \eqref{eq:BallTangent-4}. 
In other words: what is the most general map 
$f:\U\times\U\rightarrow\U$ that satisfies  
\begin{equation}
\label{eq:BallTangent-5}
d\bigl(f(\vec\beta,\vec\beta_1),f(\vec\beta,\vec\beta_2)\bigr)
=d(\vec\beta_1,\vec\beta_2)\,,
\end{equation} 
for any triple $(\beta,\beta_1,\beta_2)$ of 
points in $\U$? This has been solved by 
\citet[Theorem\,2]{Benz:2000} who proved that $f$ 
equals $\vplus$ iff $f$ satisfies the following two 
conditions: 
1)~$\Vert f(\vec\beta,\vec 0)\Vert=\Vert\vec\beta\Vert$ and
2)~$\gamma(f(\vec\beta_1,\vec\beta_2))\,f(\vec\beta_1,\vec\beta_2)-\gamma(\vec\beta_2)\vec\beta_2=
c\vec\beta_1$, where $c>0$. The first seems obvious and the 
second acquires a straightforward physical meaning if we 
recall that the momentum relative to $s$ of a particle 
at velocity $\vec\beta\in T_s\som$ is proportional to 
$\gamma(\vec\beta)\vec\beta$. The second condition 
then just says that boosting a particle with 
boost-velocity $\vec\beta$ adds to it a momentum 
positively proportional to $\vec\beta$ (all with reference 
to $s$). Using this momentum-interpretation, the two 
conditions may be replaced by others that also include
the non-relativistic case \citep{Benz:2002}.

\subsection{Defining link velocity}
\label{sec:BoostLinkProblem}
Given three states of motion, $s$, $s_1$, and 
$s_2$. It turns out that there exists a unique 
boost $B(\vec\beta,s)$ relative to $s$ that maps 
$s_1$ to $s_2$:
\begin{equation}
\label{eq:BLP1}
B(s,\vec\beta)s_1=s_2\,.
\end{equation}
This is the affirmative answer to the so-called 
``boost-link-problem'' that we shall prove in an 
elementary fashion in the next subsection. More 
precisely, the results of the next subsection can be 
summarised as follows:  
\begin{theorem}
\label{thm:BoostLink-1}
For any three given states of motion 
$(s,s_1,s_2)$ there exists a unique 
$\vec\beta\in T_s\som$, 
given by a rational function 
$\vec\beta=\vec\beta(s,s_1,s_2)$ of $(s,s_1,s_2)$,
satisfying \eqref{eq:BLP1} (see
\eqref{eq:BoostLink-7a} below). The boost 
$B(s,s_1,s_2):=
B\bigl(s,\vec\beta(s,s_1,s_2)\bigr)$
is then also a rational function of 
$(s,s_1,s_2)$ (see \eqref{eq:BoostLink-11} 
below). The function $\vec\beta$ is Lorentz
equivariant; i.e. for any $\som$-preserving 
(no time reversal) Lorentz transformation $L$ 
we have
\begin{equation}
\label{eq:LorentzEquivariance}
\vec\beta(Ls,Ls_1,Ls_2)=L\vec\beta(s,s_1,s_2)\,.
\end{equation}
\end{theorem}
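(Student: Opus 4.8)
My plan is to impose the linking condition \eqref{eq:BLP1} directly and solve for the unknown $\vec\beta\in T_s\som$ using only $\eta$-scalar products, so that rationality and equivariance become transparent. First I would decompose $s_1$ and $s_2$ relative to $s$, writing $s_i=\gamma_i s+\vec w_i$ with $\gamma_i:=-(s\cdot s_i)$ and $\vec w_i:=\proj_s^\perp(s_i)\in T_s\som$, and record $\vec w_i\cdot\vec w_j=s_i\cdot s_j+\gamma_i\gamma_j$. Applying the boost in the explicit form \eqref{eq:RelBoost-1a} to $s_1$ and collecting the components along $s$ and along $T_s\som$, the single equation $B(s,\vec\beta)s_1=s_2$ splits into the scalar relation $(\mathrm{i})$ $\gamma\,p=\gamma_2-\gamma\gamma_1$, where $p:=\vec\beta\cdot\vec w_1$, together with the tangential relation $(\mathrm{ii})$ $\vec w_2-\vec w_1=[\tfrac{\gamma^2 p}{\gamma+1}+\gamma\gamma_1]\,\vec\beta$.

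Relation $(\mathrm{ii})$ already forces the linking velocity to be proportional to $\vec w_2-\vec w_1=\proj_s^\perp(s_2-s_1)$, which fixes its direction geometrically (the boost axis is the $s$-rest-space projection of $s_2-s_1$). Introducing the proper velocity $q:=\gamma\vec\beta$, so that $q\cdot q=\gamma^2-1$, the bracket in $(\mathrm{ii})$ collapses after substituting $(\mathrm{i})$ to the simple factor $(\gamma_1+\gamma_2)/(\gamma+1)$, giving $q=\tfrac{\gamma+1}{\gamma_1+\gamma_2}(\vec w_2-\vec w_1)$. The only remaining unknown is $\gamma$, pinned down by $q\cdot q=\gamma^2-1$, which reads $\tfrac{(\gamma+1)^2}{(\gamma_1+\gamma_2)^2}D=(\gamma-1)(\gamma+1)$ with $D:=\Vert\vec w_2-\vec w_1\Vert^2$. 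The step I expect to be the real crux — and the point where one might naively fear a square root surviving (either from $\beta=\tanh\rho$ or from normalising the boost direction) — is that cancelling one factor $(\gamma+1)$ linearises this into $\gamma=[(\gamma_1+\gamma_2)^2+D]/[(\gamma_1+\gamma_2)^2-D]$. Using $D=(\gamma_1-\gamma_2)^2-2(s_1\cdot s_2+1)$ this is manifestly a rational function of the scalar products, so $\vec\beta=q/\gamma$ is rational in $(s,s_1,s_2)$, yielding \eqref{eq:BoostLink-7a}; feeding it back into \eqref{eq:RelBoost-1a} makes $B(s,s_1,s_2)$ rational as well.

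Because every step above is forced, uniqueness is immediate. For genuine existence I would check that the constructed $\vec\beta$ is admissible, i.e. $\gamma\geq 1$: since $\gamma-1=\tfrac{2D}{(\gamma_1+\gamma_2)^2-D}$ and $D\geq 0$, it suffices that the denominator be positive. Writing $c_{12}:=-(s_1\cdot s_2)$, the denominator equals $2(2\gamma_1\gamma_2-c_{12}+1)$, and the elementary reverse Cauchy--Schwarz facts for future-pointing unit timelike vectors give $1\leq c_{12}<2\gamma_1\gamma_2$, so it is positive; hence $\gamma\geq 1$, with equality iff $s_1=s_2$. One then verifies directly that this $\vec\beta$ satisfies \eqref{eq:BLP1}. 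As an independent check, the same existence and uniqueness follow from the equivalence $B(s,\vec\beta)s_1=s_2\Leftrightarrow\vec\beta\vplus\vec\beta(s,s_1)=\vec\beta(s,s_2)$ (the Thomas rotation in $B(s,\vec\beta)B(s,\vec\beta(s,s_1))$ fixes $s$) combined with the quasigroup divisibility established in Theorem\,\ref{thm:CompositionSolvability}.

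For the equivariance \eqref{eq:LorentzEquivariance} I would argue conceptually rather than from the formula. For any $\som$-preserving Lorentz transformation $L$, conjugation carries a boost relative to $s$ to a boost relative to $Ls$ with transported velocity, $L\,B(s,\vec\beta)\,L^{-1}=B(Ls,L\vec\beta)$; this is the covariant generalisation of \eqref{eq:ProofPolarDec-6} and is immediate from the geometric characterisation in Definition\,\ref{def:BoostRotation}, which is $L$-invariant. Applying $L$ to \eqref{eq:BLP1} then gives $B(Ls,L\vec\beta)(Ls_1)=Ls_2$, so $L\vec\beta$ solves the boost-link problem for the triple $(Ls,Ls_1,Ls_2)$, and uniqueness forces $\vec\beta(Ls,Ls_1,Ls_2)=L\vec\beta(s,s_1,s_2)$. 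Equivalently, invariance of all scalar products under $L$ together with $\vec w_i\mapsto L\vec w_i$ makes the explicit expression for $\vec\beta$ visibly equivariant.
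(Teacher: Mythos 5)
Your proposal is correct, and its core is the same as the paper's own proof: insert the explicit boost \eqref{eq:RelBoost-1a} into \eqref{eq:BLP1}, split the equation into the component along $s$ and the component in $T_s\som$, conclude that $\vec\beta$ must be proportional to $\proj_s^\perp(s_2-s_1)$, and pin down $\gamma$ by the norm condition; your $\gamma=\bigl[(\gamma_1+\gamma_2)^2+D\bigr]/\bigl[(\gamma_1+\gamma_2)^2-D\bigr]$ with $D=\Vert\proj_s^\perp(s_2-s_1)\Vert^2$ is literally \eqref{eq:BoostLink-5}, and your expression for $q=\gamma\vec\beta$ reproduces \eqref{eq:BoostLink-7a} and \eqref{eq:BoostLink-7b}. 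The differences are refinements rather than a different route, but they are worth recording. First, by working with the proper velocity $q$ throughout you keep every intermediate expression rational, whereas the paper's intermediate equation \eqref{eq:BoostLink-3} carries the factor $\sqrt{(\gamma+1)/(\gamma-1)}$ that only disappears upon squaring. Second, you verify \emph{inside} the proof that the denominator $(\gamma_1+\gamma_2)^2-D=2(1+2\gamma_1\gamma_2-\gamma_{12})$ is positive, so that $\gamma\geq 1$ and the candidate velocity lies in the open unit ball; the paper establishes this positivity only later, in \eqref{eq:OrthogonalProjectionPlane-7}, so your check tightens the logical order. Third, your equivariance argument---conjugation covariance $L\,B(s,\vec\beta)\,L^{-1}=B(Ls,L\vec\beta)$ combined with uniqueness---is structural and does not need the explicit formula, while the paper reads equivariance off the manifest form of \eqref{eq:BoostLink-7a} (a route you also mention); both are valid. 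Finally, your cross-check via the equivalence $B(s,\vec\beta)s_1=s_2\Leftrightarrow\vec\beta\vplus\vec\beta(s,s_1)=\vec\beta(s,s_2)$ together with the divisibility of Theorem\,\ref{thm:CompositionSolvability} is an independent existence-and-uniqueness argument the paper does not use for this theorem; it buys the conclusion without computation, but not the rational formula, which is the real content here. Like the paper (see the checks below \eqref{eq:BoostLink-11}), you correctly relegate the final confirmation that the constructed $\vec\beta$ satisfies \eqref{eq:BLP1} to a direct verification, since the componentwise derivation as written establishes necessity; this is unobjectionable, but in a polished write-up you should either perform that one-line substitution or note that your relations (i) and (ii) are jointly equivalent to \eqref{eq:BLP1}.
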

As an immediate consequence we note that for 
special $s$ we have already encountered 
the expression for the boost linking $s_1$ 
and $s_2$: 
\begin{corollary}
\label{thm:BoostLink-2} 
Let $(s,s_1,s_2)$ and $(s',s_1,s_2)$ be two 
triple of states where $s$ as well as $s'$ 
lie in the plane $\Span\{s_1,s_2\}$.
Then the boosts $B(s,\vec\beta)$ and 
$B(s',\vec\beta')$ which according to 
Theorem\,\ref{thm:BoostLink-1}
satisfy \eqref{eq:BLP1} are identical and 
given by 
\begin{equation}
\label{eq:BLP2}
B(s,\vec\beta)=B(s',\vec\beta')=\id_V+
\frac{%
 s_1\otimes s_1
+s_2\otimes s_2
+s_1\wedge s_2
-2\gamma_{12} s_2\otimes s_1}{
1+\gamma_{12}}\,,
\end{equation}
where 
$\gamma_{12}:= -(\vec s_1\cdot\vec s_2)$.
Note that the right-hand side of 
\eqref{eq:BLP2} is independent of $s$ 
and given by the expression 
\eqref{eq:RelBoost-3} with arguments 
$(s,s_1)$ changed to $(s_1,s_2)$. 
Hence all boosts linking $s_1$ and 
$s_2$ coincide as long as 
their reference state $s$ lies in the
plane spanned by $s_1$ and $s_2$. 
\end{corollary}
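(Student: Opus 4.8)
The plan is to exploit the observation, already flagged in the statement, that the right-hand side of \eqref{eq:BLP2} is literally the expression \eqref{eq:RelBoost-3} with the pair $(s,s_1)$ replaced by $(s_1,s_2)$. Call this map $B_{12}$. Everything established for \eqref{eq:RelBoost-3} then transfers verbatim under this relabelling: by \eqref{eq:RelBoost-4} the endomorphism $B_{12}$ is a genuine Lorentz transformation that sends $s_1\mapsto s_2$, acts as a boost inside the timelike plane $\Span\{s_1,s_2\}$, and pointwise fixes the two-dimensional spacelike plane $W:=\{u\in V: u\cdot s_1=u\cdot s_2=0\}$ orthogonal to that plane. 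In particular $B_{12}$ is manifestly independent of any reference state; it is determined solely by the ordered pair $(s_1,s_2)$.

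First I would check that $B_{12}$ qualifies as a \emph{boost relative to $s$} in the sense of Definition\,\ref{def:BoostRotation} for \emph{every} $s\in\som$ lying in the plane $\Span\{s_1,s_2\}$. That definition asks for a two-dimensional spacelike subspace inside $s^\perp$ that the map fixes pointwise, and I would take this subspace to be $W$. Since $s\in\Span\{s_1,s_2\}$ and $W$ is $\eta$-orthogonal to all of $\Span\{s_1,s_2\}$, we have $W\subseteq s^\perp$; moreover $W$ is spacelike and two-dimensional, and $B_{12}$ fixes it pointwise by \eqref{eq:RelBoost-4c}. Hence $B_{12}$ is a boost relative to $s$, and by the same token a boost relative to $s'$ because $s'$ too lies in $\Span\{s_1,s_2\}$.

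Finally I would invoke uniqueness. The map $B_{12}$ is a boost relative to $s$ satisfying \eqref{eq:BLP1}, so by the uniqueness clause of Theorem\,\ref{thm:BoostLink-1} it must coincide with $B(s,\vec\beta)$. Running the identical argument with $s'$ in place of $s$ gives $B(s',\vec\beta')=B_{12}$ as well. Therefore $B(s,\vec\beta)=B(s',\vec\beta')=B_{12}$, which is precisely \eqref{eq:BLP2}, and the advertised reference-state independence follows since the common value $B_{12}$ never mentions $s$ or $s'$ at all.

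I do not expect a serious computational obstacle here: the corollary is essentially the statement that ``boost relative to $s$'' depends only on the timelike plane of the boost, not on which state within that plane one singles out. The one point that genuinely needs care is the containment $W\subseteq s^\perp$, which is exactly where the hypothesis $s\in\Span\{s_1,s_2\}$ enters; without it $W$ would not sit inside $s^\perp$ and $B_{12}$ would generically fail to be a pure boost relative to $s$. It is also worth remarking in passing that a timelike $2$-plane meets the future hyperboloid in a curve, so $\Span\{s_1,s_2\}$ does contain states of motion and the hypothesis is non-vacuous.
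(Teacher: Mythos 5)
Your proof is correct and takes essentially the same route as the paper's: relabel \eqref{eq:RelBoost-3} to obtain the boost in the plane $\Span\{s_1,s_2\}$ mapping $s_1$ to $s_2$, observe that since $s$ (and likewise $s'$) lies in that plane this map is also a boost relative to $s$ (respectively $s'$) in the sense of Definition\,\ref{def:BoostRotation}, and conclude by the uniqueness clause of Theorem\,\ref{thm:BoostLink-1}. Your explicit check of the containment $W\subseteq s^\perp$ simply spells out the step the paper compresses into ``that plane contains $s$''.
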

\begin{proof}
We already know that expression 
\eqref{eq:RelBoost-3} with
arguments $(s,s_1)$ changed to 
$(s_1,s_2)$ gives the unique boost 
in the plane $\Span\{s_1,s_2\}$
mapping $s_1$ to $s_2$. But that 
plane contains $s$ and hence is also 
a boost relative to $s$. 
Uniqueness then implies the statement.  
\end{proof}

Based on Theorem\,\ref{thm:BoostLink-1}
we now make the following 
\begin{definition}
\label{def:RelVel-2}
Given three states of motion 
$s$, $s_1$, and $s_2$. 
The \textbf{link-velocity between $s_1$ 
and $s_2$ relative to $s$} is defined 
as that (unique!) $\vec\beta\in T_s\som$ 
solving \eqref{eq:BLP1}. We will speak of it 
as ``the velocity of $s_2$ against $s_1$ 
relative to $s$'' or ``the velocity of $s_2$ 
relative to $s_1$ judged from $s$'', or similar, 
so as to in any case avoid a double appearance 
of the word ``relative''.
\end{definition} 

In view of Theorem\,\ref{thm:BoostLink-1} we can 
characterise the link-velocity in the following way:
\begin{theorem}
\label{thm:LinkVelSection}
The link-velocity is a function 
\begin{equation}
\label{eq:LinkVelSection-1}
\begin{split}
\vec\beta:
\som\times\som &\rightarrow\Gamma T\som\\
(s_1,s_2)&\mapsto\vec\beta[s_1,s_2]
\end{split}
\end{equation} 
that maps any ordered pair $(s_1,s_2)$ of states 
to a section $\vec\beta[s_1,s_2]$ in the 
tangent-bundle of $\som$. That section is such 
that the value of $\vec\beta[s_1,s_2]$ at 
$s\in\som$ equals $\vec\beta(s,s_1,s_2)$ as 
defined in Theorem\,\ref{thm:BoostLink-1}, 
the explicit expression of which is given 
below in \eqref{eq:BoostLink-7a}. 
\end{theorem}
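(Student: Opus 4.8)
The plan is to recognise Theorem~\ref{thm:LinkVelSection} as essentially a repackaging of Theorem~\ref{thm:BoostLink-1}. The latter already produces, for every triple $(s,s_1,s_2)$, a single vector $\vec\beta(s,s_1,s_2)$; the former merely re-reads this assignment, with the pair $(s_1,s_2)$ held fixed, as a field over the base manifold $\som$ whose base point is $s$. Accordingly, I would fix an ordered pair $(s_1,s_2)\in\som\times\som$ and define the candidate section by $\vec\beta[s_1,s_2]\colon s\mapsto\vec\beta(s,s_1,s_2)$, where the right-hand side is the unique solution of \eqref{eq:BLP1} furnished by Theorem~\ref{thm:BoostLink-1}. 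Three things then need checking: that each value lies in the correct fibre, that the assignment is single-valued, and that it depends smoothly on the base point $s$.

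The first two are immediate from Theorem~\ref{thm:BoostLink-1} and require no computation. The fibre (tangency) condition $\vec\beta(s,s_1,s_2)\in T_s\som$ is part of that theorem's conclusion, so for every $s$ the value sits over $s$ in the bundle $T\som$, recalling the identification $T_s\som=\{u\in V:u\cdot s=0\}$ from \eqref{eq:DefTangentSpace}. Single-valuedness is exactly the uniqueness clause: there is precisely one $\vec\beta\in T_s\som$ solving \eqref{eq:BLP1}, so $s\mapsto\vec\beta(s,s_1,s_2)$ is a genuine function rather than a relation. Hence $\vec\beta[s_1,s_2]$ is a well-defined set-theoretic section, and the outer assignment $(s_1,s_2)\mapsto\vec\beta[s_1,s_2]$ is thereby a well-defined map into fields over $\som$.

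The only real work is upgrading this set-theoretic section to an element of $\Gamma T\som$, i.e. establishing smoothness. Here I would invoke the rationality clause of Theorem~\ref{thm:BoostLink-1}: $\vec\beta(s,s_1,s_2)$ is given by the rational expression \eqref{eq:BoostLink-7a} in its three vector arguments, hence in particular a rational --- and therefore $C^\infty$ --- function of $s$ for fixed $(s_1,s_2)$, provided its denominator does not vanish on $\som$. This non-vanishing is in fact guaranteed a~posteriori by the existence clause of the same theorem: since a solution exists for every triple of states, the closed-form expression must return a finite value throughout $\som\times\som\times\som$, so its denominator is nowhere zero there. Concretely one expects the denominator to be assembled from the scalar products $s\cdot s_1$, $s\cdot s_2$, $s_1\cdot s_2$, each of which is $\le -1$ for future-pointing unit timelike vectors and so stays bounded away from zero, which is precisely what renders the quotient smooth.

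I therefore expect the main --- though modest --- obstacle to be exactly this denominator analysis: confirming that the particular combination of (negative) scalar products appearing downstairs in \eqref{eq:BoostLink-7a} cannot degenerate as $s$ ranges over $\som$. Everything else is a matter of reading off properties of $\vec\beta(s,s_1,s_2)$ that Theorem~\ref{thm:BoostLink-1} has already secured, so once the subsequent subsection supplies the explicit formula the present statement follows with essentially no further effort.
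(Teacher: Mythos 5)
Your proposal is correct and takes essentially the same route as the paper, which states Theorem~\ref{thm:LinkVelSection} without a separate proof precisely because it is the repackaging of Theorem~\ref{thm:BoostLink-1} that you describe: existence and uniqueness give a well-defined value $\vec\beta(s,s_1,s_2)\in T_s\som$ over each base point $s$, and the rationality of \eqref{eq:BoostLink-7a} in the scalar products gives smoothness of the resulting section. Your concluding denominator check is also the right one and is immediate: since $\gamma_1,\gamma_2,\gamma_{12}\geq 1$ for states of motion, the denominator $\gamma_1^2+\gamma_2^2+\gamma_{12}-1\geq 2$ in \eqref{eq:BoostLink-7a} is bounded away from zero on all of $\som$ (so the a~posteriori appeal to existence is not even needed).
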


\begin{remark}
The condition of equivariance 
\eqref{eq:LorentzEquivariance} is equivalent 
to the statement that the map \eqref{eq:LinkVelSection-1} 
is invariant under $\group{Lor}$. Indeed, 
$\group{Lor}$ acts on the domain 
$\som\times\som$ by taking the cartesian-product 
of its natural action on $\som$: 
\begin{subequations}
\label{eq:LorActionsDomains}
\begin{equation}
\label{eq:LorActionsDomains-a}
\begin{split}
\Phi: \Lor\times (\som\times\som)
&\rightarrow (\som\times\som)\\
\bigl(L,(s_1,s_2)\bigr)
&\mapsto\Phi_L(s_1,s_2):=(Ls_1,Ls_2)\,.
\end{split}
\end{equation}
Moreover, $\group{Lor}$ also acts on the co-domain 
(the target) $\Section T\som$ as folllows: 
\begin{equation}
\label{eq:LorActionsDomains-b}
\begin{split}
\Psi: \Lor\times\Section T\som
&\rightarrow \Section T\som\\
(L,\sigma)
&\mapsto\Psi_L(\sigma):=L\circ\sigma\circ L^{-1}\,.
\end{split}
\end{equation}
\end{subequations}
Note that $L$ acts directly on the image of $\sigma$
via its defining representation on $V$, since according 
to \eqref{eq:DefTangentSpace} we identified $T_s\som$
with $R_s:=\{u\in V: u\cdot s=0\}$. Restricting the 
defining action of $L$ on $V$ to $R_s\subset V$ 
results in an isometry between $R_s\som$ and 
$R_{Ls}\som$, i.e. between $T_s\som$ and 
$T_{Ls}\som$. Now, by standard construction (already 
applied in \eqref{eq:LorActionsDomains-b}), actions on 
domains and codomains always combine to an action on 
the set of maps. Applied to the map $\vec\beta$ in 
\eqref{eq:LinkVelSection-1} with actions 
\eqref{eq:LorActionsDomains-a} on the domain 
and \eqref{eq:LorActionsDomains-b} on the 
co-domain, we get an action of $\Lor$ on the set of 
maps $\som\times\som\rightarrow\Section T\som$, given 
by   
\begin{equation}
\label{eq:LinkVelLorInv-1}
T_L(\vec\beta):=\Psi_L\circ\vec\beta\circ\Phi^{-1}_L\,.
\end{equation}
Equation \eqref{eq:LorentzEquivariance}
is then equivalent to the statement that 
the map \eqref{eq:LinkVelSection-1} is $\Lor$-invariant:  
\begin{equation}
\label{eq:LinkVelLorInv}
T_L(\vec\beta)=\vec\beta \quad(\forall\,L\in\Lor)\,.
\end{equation} 
\end{remark}

The fact that the link-velocity $\vec\beta(s,s_1,s_2)$
is a ternary has been discussed before by a particular 
school following \citet{Oziewicz2006,Oziewicz2007,Oziewicz2011,Oziewicz.Page2011}, 
who interpret this fact as an ``astonishing conflict 
of the Lorentz group with relativity'' \citep{Oziewicz2011}; 
see also \citep{Celakoska:2008,Celakoska.EtAl:2015,Koczan:2023}.
But \eqref{eq:LinkVelLorInv} shows that this is an unwarranted 
complaint.

\subsection{Solving the boost-link problem}
\label{sec:BoostLinkProof}
In this section we shall prove and 
elaborate on Theorem\,\ref{thm:BoostLink-1},
using the following notation: 
The element $\vec\beta\in T_s\som$
that we wish \eqref{eq:BLP1} to solve for 
is again written as $\vec\beta=\beta\vec n$
with $\vec n\in T_s\som$ a unit vector.
The norm $\beta$ of $\vec\beta$ can equivalently be expresses as usual by 
$\gamma:=1/\sqrt{1-\beta^2}$, i.e. 
$\beta=\sqrt{1-\gamma^{-2}}$.
Next to that we define the other 
``gammas'' via the three possible scalar products between 
the $\{s,s_1,s_2\}$:
\begin{subequations}
\label{eq:gammas}
\begin{alignat}{3}
\label{eq:gammas-1}
\gamma_1
&\,:=\,-s\cdot s_1\quad
&&\Rightarrow\quad\beta_1 
&&\,:=\,\sqrt{1-\gamma_1^{-2}}\,,\\
\label{eq:gammas-2}
\gamma_2
&\,:=\,-s\cdot s_2\quad
&&\Rightarrow\quad\beta_2 
&&\,:=\,\sqrt{1-\gamma_2^{-2}}\,,\\
\label{eq:gammas-12}
\gamma_{12}
&\,:=\,-s_1\cdot s_2\quad
&&\Rightarrow\quad\beta_{12} 
&&\,:=\,\sqrt{1-\gamma_{12}^{-2}}\,.
\end{alignat}
\end{subequations} 
In what follows $(s,s_1,s_2)$ and hence 
$(\gamma_1,\gamma_2,\gamma_{12})$ and 
$(\beta_1,\beta_2,\beta_{12})$ are 
considered given, whereas $(\beta,\vec n)$,
or equivalently $(\gamma,\vec n)$, are to
be determined as functions of the former. 
This is the task to which we now turn. 

Inserting the expression \eqref{eq:RelBoost-1a} for 
$B(\vec\beta,s)$ into \eqref{eq:BLP1}
leds to 
\begin{equation}
\label{eq:BoostLink-1}
\begin{split}
s_2-s_1
=s\,&\bigl[(\gamma-1)\gamma_1+\beta\gamma(\vec n\cdot s_1)\bigr]\\
+\,\vec n &\bigl[(\gamma-1)(\vec n\cdot s_1)+\beta\gamma\gamma_1\bigr]\,.
\end{split}
\end{equation}
The right-hand side gives the components 
of $(s_2-s_1)$ parallel and perpendicular
to $s$. The parallel component of the 
left-hand side is 
$\proj_s^\Vert(s_2-s_1)=s(\gamma_2-\gamma_1)$. 
Equating this to the $s$-term of the 
right-hand side allows to express 
$(\vec n\cdot s_1)$ as follows:
\begin{equation}
\label{eq:BoostLink-2}
\vec n\cdot s_1=
\frac{\gamma_2-\gamma\gamma_1}{\beta\gamma}\,.
\end{equation}
The perpendicular component of the 
left-hand side is $\proj_s^\perp(s_2-s_1)$
which we equate to the $\vec n$-term on the 
right-hand side. In the latter we 
replace $\vec n\cdot s_1$ with the 
expression just found in  \eqref{eq:BoostLink-2}. This leads, 
after a short calculation, to 
\begin{equation}
\label{eq:BoostLink-3}
\vec n=\sqrt{\frac{\gamma+1}{\gamma-1}}\
\frac{\proj_s^\perp(s_2-s_1)}{\gamma_1+\gamma_2}\,.
\end{equation}
This is not yet the solution since an 
unknown, $\gamma$, still appears on 
the right-hand side. But we can 
determine $\gamma$ by using the fact 
that $\vec n$ has unit norm. Recalling
that  
\begin{subequations}
\label{eq:BoostLink-4}
\begin{alignat}{1}
\label{eq:BoostLink-4a}
\proj_s^\perp(s_2-s_1)
&\,=\,
s_2-s_1-(\gamma_2-\gamma_1)\,s\,,\\
\label{eq:BoostLink-4b}
\Vert\proj_s^\perp(s_2-s_1)\Vert^2
&\,=\,\gamma_1^2+\gamma_2^2+2(\gamma_{12}-\gamma_1\gamma_2-1)\,,
\end{alignat}
\end{subequations}
we can take the square of \eqref{eq:BoostLink-3} 
and solve the ensuing equation for $\gamma$, 
which gives 
\begin{equation}
\label{eq:BoostLink-5}
\gamma=\gamma(s,s_1,s_2):=\frac{\gamma_1^2+\gamma_2^2+\gamma_{12}-1}{1+2\gamma_1\gamma_2-\gamma_{12}}\,,
\end{equation}
where we think of the right-hand side as 
rational function in the scalar products 
\eqref{eq:gammas}.

Note that now all terns of the right-hands side 
of \eqref{eq:BoostLink-5}, and hence also on the 
right-hand side of  \eqref{eq:BoostLink-3}, are 
expressed in terms of given quantities. Hence we 
succeeded in proving existence and uniqueness 
of the solution for the boost-link-problem and 
also justified Definition\,\ref{def:RelVel-2}.

In the sequel we shall use the abbreviation
\begin{equation}
\label{eq:BoostLink-6}
\vec{(s_2-s_1)_\perp}:=\proj^\perp_s(s_2-s_1)
\end{equation}
which we write in bold so stress that this 
is an element in $T_s\som$. It is just a 
simple linear combination of $s$, $s_1$, 
and $s_2$ as shown in \eqref{eq:BoostLink-4a}, 
but it will be notationally more compact 
and also geometrically more transparent 
to write $\vec{(s_2-s_1)_\perp}$. Now,  
multiplying \eqref{eq:BoostLink-3} with 
$\beta=\gamma^{-1}\sqrt{\gamma^2-1}$ we 
get, using \eqref{eq:BoostLink-5}, 
\begin{subequations}
\label{eq:BoostLink-7}
\begin{alignat}{1}
\label{eq:BoostLink-7a}
\vec\beta
&\,=\,\vec\beta(s,s_1,s_2):= \frac{\gamma_1+\gamma_2}{\gamma_1^2+\gamma_2^2+\gamma_{12}-1}\,\vec{(s_2-s_1)_\perp}\,,\\
\label{eq:BoostLink-7b}
\gamma\vec\beta&\,=\,\gamma(s,s_1,s_2)\vec\beta(s,s_1,s_2)
=
\frac{\gamma_1+\gamma_2}{1+2\gamma_1\gamma_2-\gamma_{12}}\,\vec{(s_2-s_1)_\perp}\,.
\end{alignat}
\end{subequations}
Note that in view of \eqref{eq:BoostLink-4a}
the right-hand sides are linear combinations of 
$s, s_1$, and $s_2$ with coefficients that are 
rational functions in the scalar products between 
these states. This is what we mean by saying that  
\eqref{eq:BoostLink-7a} is itself a rational function 
of $(s,s_1,s_2)$. It is the first such function 
mentioned in Theorem\,\ref{thm:BoostLink-1}. The 
equivariance condition  \eqref{eq:LorentzEquivariance}
is obvious from these remarks. 

We also note that the
fraction of the right-hand side of 
\eqref{eq:BoostLink-7a} is a symmetric 
function in $(s_1,s_2)$, whereas 
$\vec{(s_2-s_1)_\perp}$ is clearly 
antisymmetric. Hence we have 
\begin{corollary}
\label{thm:Reciprocity}
Link-velocities obey the reciprocity relation
\begin{equation}
\label{eq:BoostLink-8}
\vec\beta(s,s_2,s_1)=-\,\vec\beta(s,s_1,s_2)\,.
\end{equation}
\end{corollary}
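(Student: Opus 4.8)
The plan is to read the reciprocity off directly from the closed form \eqref{eq:BoostLink-7a}, since, as already hinted in the paragraph preceding the statement, the only thing to do is to track how the interchange $s_1\leftrightarrow s_2$ acts on the two factors appearing there, and to observe that one is symmetric while the other is antisymmetric.

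First I would record the behaviour of the three scalar products under the swap. From their definitions in \eqref{eq:gammas} one has $\gamma_1=-s\cdot s_1$ and $\gamma_2=-s\cdot s_2$, so interchanging $s_1$ and $s_2$ exchanges $\gamma_1\leftrightarrow\gamma_2$, whereas $\gamma_{12}=-s_1\cdot s_2$ is manifestly symmetric and is therefore left fixed. Consequently the scalar coefficient
\begin{equation}
\frac{\gamma_1+\gamma_2}{\gamma_1^2+\gamma_2^2+\gamma_{12}-1}
\end{equation}
in \eqref{eq:BoostLink-7a}, being a symmetric function of $(\gamma_1,\gamma_2)$ with $\gamma_{12}$ held fixed, is invariant under the interchange.

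Next I would treat the vector factor. Since $\proj^\perp_s$ is linear and depends only on $s$ (not on $s_1,s_2$), the swap sends $\vec{(s_2-s_1)_\perp}=\proj^\perp_s(s_2-s_1)$ to $\proj^\perp_s(s_1-s_2)=-\,\vec{(s_2-s_1)_\perp}$, i.e. it simply reverses the sign. Multiplying the invariant coefficient by the sign-reversed vector then yields $\vec\beta(s,s_2,s_1)=-\,\vec\beta(s,s_1,s_2)$, which is exactly \eqref{eq:BoostLink-8}.

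I do not anticipate any genuine obstacle here: the statement is an immediate structural consequence of the explicit rational expression \eqref{eq:BoostLink-7a}, and the only point requiring (entirely routine) verification is the symmetry type of each factor under $s_1\leftrightarrow s_2$, which is settled by the definitions \eqref{eq:gammas} together with the linearity of the projector $\proj^\perp_s$. If one wished to avoid relying on the closed form, an alternative would be to argue geometrically---that the boost linking $s_2$ back to $s_1$ relative to $s$ is the inverse of the one linking $s_1$ to $s_2$, and that $B^{-1}(s,\vec\beta)=B(s,-\vec\beta)$ by \eqref{eq:EmbedBoost}---but the direct computation above is shorter and self-contained.
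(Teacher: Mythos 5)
Your proposal is correct and is essentially identical to the paper's own argument: the paper, too, observes that the scalar coefficient in \eqref{eq:BoostLink-7a} is a symmetric function of $(s_1,s_2)$ (since the swap exchanges $\gamma_1\leftrightarrow\gamma_2$ and fixes $\gamma_{12}$), while $\vec{(s_2-s_1)_\perp}$ is antisymmetric, so the product changes sign. Nothing is missing; the alternative geometric argument you sketch at the end is also sound but unnecessary.
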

Note that \eqref{eq:BoostLink-8} makes
sense since both sides refer to the same 
reference-state $s$, i.e. both sides are 
elements of the same vector space $T_s\som$. 

We now compute the boost as function
of $(s,s_1,s_2)$, i.e. we insert the expressions \eqref{eq:BoostLink-3} for 
$\vec n$, \eqref{eq:BoostLink-5} for 
$\gamma$, and \eqref{eq:BoostLink-7b}
for $\gamma\vec\beta$ into \eqref{eq:RelBoost-1a} and obtain
\begin{equation}
\begin{split}
\label{eq:BoostLink-9}
 B(s,s_1,s_2)
:=\ & B\bigl(s,\vec\beta(s,s_1,s_2))\\
=\ & \id_V+(\gamma-1)
   (-s\otimes s+\vec n\otimes\vec n)
   + s\wedge\gamma\vec\beta\,.
\end{split}
\end{equation}
The various terms simplify as follows: 
From \eqref{eq:BoostLink-5} we get
\begin{subequations}
\label{eq:BoostLink-10}
\begin{equation}
\label{eq:eq:BoostLink-10a}
-\,(\gamma-1)s\otimes s=-\,\frac{(\gamma_1-\gamma_2)^2+2(\gamma_{12}-1)}{1+2\gamma_1\gamma_2-\gamma_{12}}\,s\otimes s\,.
\end{equation}
From \eqref{eq:BoostLink-3} and again \eqref{eq:BoostLink-5}
\begin{equation}
\label{eq:BoostLink-10b}
\begin{split}
(\gamma-1)\vec n\otimes\vec n
&\,=\,\frac{\gamma+1}{(\gamma_1+\gamma_2)^2}\ 
  \vec{(s_2-s_1)_\perp}\otimes\vec{(s_2-s_1)_\perp}\\
&\,=\,\frac{\vec{(s_2-s_1)_\perp}\otimes\vec{(s_2-s_1)_\perp}}{1+2\gamma_1\gamma_2-  
            \gamma_{12}} \,.
\end{split}
\end{equation}
And from \eqref{eq:BoostLink-7b} 
\begin{equation}
\label{eq:BoostLink-10c}
s\wedge\gamma\vec\beta
=\frac{\gamma_1+\gamma_2}{1+2\gamma_1\gamma_2-\gamma_{12}}\, s\wedge\vec{(s_2-s_1)_\perp}\,.
\end{equation}
Using \eqref{eq:BoostLink-4a}, the sum of \eqref{eq:BoostLink-10b} and 
\eqref{eq:BoostLink-10c} is, abbreviating 
$s_{21}:=(s_2-s_1)$,
\begin{equation}
\label{eq:eq:BoostLink-10d}
\begin{split}
& (\gamma-1)\vec n\otimes\vec n+s\wedge\gamma\vec\beta
=\\
&\quad\frac{
(\gamma_1-\gamma_2)^2\, s\otimes s
+s_{21}\otimes s_{21}
+2\gamma_1\,s\otimes s_{21}
-2\gamma_2\,s_{21}\otimes s}{1+2\gamma_1\gamma_2-\gamma_{12}}\,.
\end{split}
\end{equation}
\end{subequations}
Finally, adding this to
\eqref{eq:eq:BoostLink-10a} we obtain 
for \eqref{eq:BoostLink-9} 
\begin{equation}
\begin{split}
\label{eq:BoostLink-11}
 & B(s,s_1,s_2)=\\
&\quad \id_V+\frac{
2(1-\gamma_{12})\, s\otimes s
+s_{21}\otimes s_{21}
+2\gamma_1\,s\otimes s_{21}
-2\gamma_2\,s_{21}\otimes s}{1+2\gamma_1\gamma_2-\gamma_{12}}\,.
\end{split}
\end{equation}
This is the second rational function mentioned 
in Theorem\,\ref{thm:BoostLink-1}.

A few easy checks reassure us that 
\eqref{eq:BoostLink-11} is indeed the 
right expression. First of all, the 
tensor structure of the right-hand side 
makes it immediately evident that  
$B(s,s_1,s_2)$ maps the 2-dimensional
timelike plane $\Span\{s,s_{12}\}$ 
to itself and fixes points in the 
2-dimensional spacelike orthogonal 
complement. Second, a simple computation 
gives $B(s,s_1,s_2)s_1=s_2$ and, third, 
another simple computation for the 
special case where $s=s_1$ (hence 
$\gamma_1=1$ and $\gamma_2=\gamma_{12}$) turns expression \eqref{eq:BoostLink-11} into \eqref{eq:BLP2}. 

Finally we mention an alternative way to 
derive \eqref{eq:BoostLink-11}, based on the 
Cartan-Dieudonné theorem\footnote{Let $(V,\eta)$
be an $n$-dimensional real vector space with 
non-degenerate symmetric bilinear form 
$\eta\in V^*\otimes V^*$.
Let $\group{O}(V,\eta)$ be the corresponding 
orthogonal group, defined as in 
\eqref{eq:OrthogonalLieGroup}. 
The \emph{Cartan-Dieudonné theorem} states 
that any $L\in \group{O}(V,\eta)$ is the 
composition of at most $n$ reflections 
at non-null (non-degenerate) hyperplanes.
We recall that if $u\in V$ is non-null, i.e.
$\eta(u,u)\ne 0$, so that we may assume
$\eta(u,u)=\varepsilon\in\{1,-1\}$, the reflection $\rho_u\in\group{O}(V,\eta)$ at 
the hyperplane $u^\perp\subset V$ perpendicular to $u$ is defined by 
$\rho_u(v):=v-2\varepsilon \eta(v,u)\,u$.
We refer to \citep[Chap.\,6.6]{Jacobson:BasicAlgebraI}
for a proof of the general Cartan-Dieudonné 
theorem. If one relaxes the upper bound on 
the number of reflections from $n$ to 
$2n-1$, the proof becomes much easier; 
see, e.g., 
\citep[p.\,101, Theorem\,6]{Giulini:2006b}.}, according to which we 
can write the linking boost by the composition of two reflections; see \citep{Urbantke:2003}.

\subsection{Base-point dependence of 
the link-velocity}
\label{sec:BasePointDependence}
We have just seen that the boost relative 
to $s$ that links $s_1$ with $s_2$ takes place in the plane spanned by $s$ and 
$(s_2-s_1)$, the velocity being given 
by \eqref{eq:BoostLink-7a}. 
For fixed $s_1$ and $s_2$ the velocities 
$\vec\beta(s,s_1,s_2)$ vary with $s$ 
in a twofold way. First, their ``directions'' differ in the 
sense that they are elements on the 
tangent spaces $T_s\som$ depending on 
$s$. Second, the magnitude also depends 
on $s$ in an interesting way that we 
now wish to elaborate on. We do this 
by studying $\gamma(s_2,,s_1;s)$ as a 
function of $s$.

From Corollary\,\ref{thm:BoostLink-2} we know that $B(s,s_1,s_2)$, and hence 
$\gamma(s,s_1,s_2)$ does not depend 
on $s$ as long as 
$s\in\Span\{s_1,s_2\}$. In that 
case we obviously have $\gamma=\gamma_{12}$.
This suggests to compute the deviation 
of $\gamma$ from $\gamma_{12}$ in 
dependence of the amount by which $s$ 
``sticks out'' of 
$\Span\{s_1,s_2\}$; that is, by 
the norm of the $\eta$-orthogonal 
projection of $s$ into the orthogonal 
complement of $\Span\{s_1,s_2\}$.
In order to determine the latter, we note 
\begin{lemma}
\label{thm:ParallelProjectionPlane}
Let $s_1,s_{2}\in \som$; the $\eta$-orthogonal projector into 
$\Span\{s_1,s_2\}$ is 
\begin{equation}
\label{eq:ParallelProjectionPlane}
\proj_{(s_1,s_2)}^\Vert=
\frac{s_1\otimes s_1+s_2\otimes s_2
-\gamma_{12}(s_1\otimes s_2+s_2\otimes s_1)}{\gamma^2_{12}-1}\,.
\end{equation}
\end{lemma}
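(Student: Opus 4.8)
The plan is to recognise \eqref{eq:ParallelProjectionPlane} as the standard Gram-matrix formula for $\eta$-orthogonal projection onto the plane $W:=\Span\{s_1,s_2\}$, and then to confirm it through the defining properties of such a projector. Under the identification of $\End(V)$ with $V\otimes V$ used throughout (so that $u\otimes v$ acts as $w\mapsto(v\cdot w)\,u$, cf.\ \eqref{eq:ProjEndoActing}), the $\eta$-orthogonal projector onto a subspace with basis $\{v_1,v_2\}$ is $\proj=\sum_{i,j}(G^{-1})_{ij}\,v_i\otimes v_j$, where $G_{ij}:=v_i\cdot v_j$ is the Gram matrix. First I would record that for $v_1=s_1$, $v_2=s_2$ the scalar products $s_1\cdot s_1=s_2\cdot s_2=-1$ and $s_1\cdot s_2=-\gamma_{12}$ give
\[
G=\begin{pmatrix}-1 & -\gamma_{12}\\ -\gamma_{12} & -1\end{pmatrix},\qquad \det G=1-\gamma_{12}^2,\qquad G^{-1}=\frac{1}{1-\gamma_{12}^2}\begin{pmatrix}-1 & \gamma_{12}\\ \gamma_{12} & -1\end{pmatrix}.
\]
Substituting into $\sum_{i,j}(G^{-1})_{ij}\,v_i\otimes v_j$ and clearing the overall sign (since $1-\gamma_{12}^2=-(\gamma_{12}^2-1)$) produces exactly the right-hand side of \eqref{eq:ParallelProjectionPlane}.

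To turn this derivation into a self-contained verification I would check the three properties that characterise the $\eta$-orthogonal projector onto $W$. Symmetry (i.e.\ $\eta$-self-adjointness) is manifest, because the tensor $s_1\otimes s_1+s_2\otimes s_2-\gamma_{12}(s_1\otimes s_2+s_2\otimes s_1)$ is invariant under interchange of its two slots. That $\proj_{(s_1,s_2)}^\Vert$ restricts to the identity on $W$ follows from a one-line computation: applying it to $s_1$ and using $s_1\cdot s_1=-1$, $s_2\cdot s_1=-\gamma_{12}$ gives
\[
\proj_{(s_1,s_2)}^\Vert s_1=\frac{(-1)s_1+(-\gamma_{12})s_2-\gamma_{12}\bigl((-\gamma_{12})s_1+(-1)s_2\bigr)}{\gamma_{12}^2-1}=\frac{(\gamma_{12}^2-1)\,s_1}{\gamma_{12}^2-1}=s_1,
\]
and symmetrically $\proj_{(s_1,s_2)}^\Vert s_2=s_2$. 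Finally, for any $u$ orthogonal to both $s_1$ and $s_2$ each tensor term contributes a factor $s_i\cdot u=0$, so $\proj_{(s_1,s_2)}^\Vert u=0$; hence the operator kills $W^\perp$ and fixes $W$, which together with its symmetry identifies it as the claimed orthogonal projector.

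The only point requiring care is that $W$ be a genuine, non-degenerate $2$-plane, equivalently that the denominator $\gamma_{12}^2-1$ not vanish. This I would justify by the reversed Cauchy--Schwarz inequality for future-pointing timelike vectors: $\gamma_{12}=-(s_1\cdot s_2)\geq 1$, with equality iff $s_1=s_2$. Thus for distinct states $s_1\neq s_2$ one has $\gamma_{12}>1$, the plane $\Span\{s_1,s_2\}$ is timelike of signature $(1,1)$ (so $\eta|_W$ is non-degenerate and $G$ invertible), and the formula is well defined. There is no real obstacle beyond this; the statement is an elementary linear-algebra computation whose sole subtlety is the non-degeneracy of the spanned plane.
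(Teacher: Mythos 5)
Your proof is correct, and its verification half coincides with the paper's own proof: the paper likewise observes that the expression is symmetric under exchange of the two tensor slots (hence $\eta$-self-adjoint), that every term annihilates the $\eta$-orthogonal complement of $W:=\Span\{s_1,s_2\}$, and that $\proj_{(s_1,s_2)}^\Vert s_1=s_1$, whence $\proj_{(s_1,s_2)}^\Vert s_2=s_2$ by symmetry and idempotence follows. What you add is genuinely useful on two counts. First, the Gram-matrix formula $\proj=\sum_{i,j}(G^{-1})_{ij}\,v_i\otimes v_j$ turns the paper's bare verification into a derivation: it explains where the coefficients and the denominator $\gamma_{12}^2-1$ come from, and it generalises at once to $\eta$-orthogonal projection onto any non-degenerate subspace. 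Second, you make explicit a hypothesis the paper leaves tacit: the formula only makes sense when $\gamma_{12}^2-1\neq 0$, i.e.\ when $s_1\neq s_2$, which you justify via the reversed Cauchy--Schwarz inequality $\gamma_{12}=-(s_1\cdot s_2)\geq 1$ with equality iff $s_1=s_2$. This same fact guarantees that $\eta\vert_W$ is non-degenerate, so that $V=W\oplus W^\perp$ and the characterisation you invoke (fixes $W$, kills $W^\perp$, $\eta$-symmetric) really does determine the projector uniquely --- a point that is silently assumed in the paper's one-line conclusion.
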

\begin{proof}
We note that $\proj_{(s_1,s_2)}^\Vert$
is a linear combination of tensor 
products of elements taken from the set 
$\{s_1,s_2\}$ which is symmetric under 
exchange of $s_1$ and $s_2$. Hence it is 
a $\eta$-symmetric linear map whose kernel 
contains the $\eta$-orthogonal complement
of $\Span\{s_1,s_2\}$. An easy 
calculation shows 
$\proj_{(s_1,s_2)}^\Vert s_1=s_1$, hence  
also $\proj_{(s_1,s_2)}^\Vert s_2=s_2$ (symmetry) and therefore
 $\proj_{(s_1,s_2)}^\Vert\circ \proj_{(s_1,s_2)}^\Vert=\proj_{(s_1,s_2)}^\Vert$. 
\end{proof}

In passing we note that expression 
\eqref{eq:ParallelProjectionPlane}
immediately leads to 
\begin{equation}
\label{eq:ParallelProjectionPlane-s}
s_\Vert:=\proj_{(s_1,s_2)}^\Vert s=
\frac{(\gamma_{12}\gamma_2-\gamma_1) s_1+
      (\gamma_{12}\gamma_1-\gamma_2) s_2}
{\gamma_{12}^2-1}\,.
\end{equation}
Moreover, the $\eta$-orthogonal projector into the $\eta$-orthogonal complement 
of $\Span\{s_1,s_2\}$ is 
\begin{equation}
\label{eq:OrthogonalProjectionPlane-1}
\proj_{(s_1,s_2)}^\perp=\id_V-
\proj_{(s_1,s_2)}^\Vert\,,
\end{equation}
so that the squared norm of the 
projection 
\begin{equation}
\label{eq:OrthogonalProjectionPlane-2}
s_\perp:=\proj^\perp_{(s_1,s_2)}s
\end{equation}
becomes 
\begin{equation}
\label{eq:OrthogonalProjectionPlane-3}
\begin{split}
\Vert s_\perp\Vert^2
&=s_\perp\cdot s_\perp
=s\cdot s_\perp
=s\cdot s-s\cdot s_\Vert\\
&=\frac{1-\gamma_1^2-\gamma_2^2-\gamma_{12}^2+2\gamma_1\gamma_2\gamma_{12}}{\gamma_{12}^2-1}\,,
\end{split}
\end{equation}
where we used 
\eqref{eq:ParallelProjectionPlane-s}
in the last step. We rewrite this as 
\begin{equation}
\label{eq:OrthogonalProjectionPlane-4}
\gamma_1^2+\gamma_2^2-1=
-\Vert s_\perp\Vert^2(\gamma_{12}^2-1)
-\gamma^2_{12}+2\gamma_{12}\gamma_1\gamma_2\,,
\end{equation}
which we use to eliminate 
$\gamma_1^2+\gamma_2^2-1$ in the 
denominator on the right-hand side of 
\eqref{eq:BoostLink-5}. This leads to the 
desired final formula 
\begin{equation}
\label{eq:OrthogonalProjectionPlane-5}
\gamma=\gamma_{12}-\Vert s_\perp\Vert^2\
\frac{\gamma_{12}^2-1}{1+2\gamma_1\gamma_2-\gamma_{12}}\,.
\end{equation}

We note that the fraction on the 
right-hand side is non-negative 
and zero iff $\gamma_{12}=1$, i.e. iff 
$s_1=s_2$. Indeed, writing 
\begin{equation}
\label{eq:OrthogonalProjectionPlane-6}
s_1=\gamma_1(s+\vec\beta_1)
\quad\text{and}\quad
s_2=\gamma_2(s+\vec\beta_2)\,,
\end{equation}  
with $\beta_{1{,}2}\in T_s\som$, 
we have $\gamma_{12}=-s_1\cdot s_2=
\gamma_1\gamma_2(1-\vec\beta_1\cdot\vec\beta_2)$ and hence 
\begin{equation}
\label{eq:OrthogonalProjectionPlane-7}
1+2\gamma_1\gamma_2-\gamma_{12}=
1+\gamma_1\gamma_2(1+\vec\beta_1\cdot\vec\beta_2)>1\,.
\end{equation}  
Equation \eqref{eq:OrthogonalProjectionPlane-5}
therefore shows that $\gamma\leq\gamma_{12}$
with equality iff $s_\perp=0$, i.e. 
$s\in\Span\{s_1,s_2\}$.

A specific example may be helpful to 
develop some intuition of how $\gamma$
may deviate from $\gamma_{12}$. 
For that we assume that $s$ is tilted 
symmetrically against $s_1$ and $s_2$,
i.e that  
\begin{equation}
\label{eq:TiltDependence-1}
\gamma_1=\gamma_2=:\gamma_*\,.
\end{equation}
In that case \eqref{eq:BoostLink-5}
immediately leads to
\begin{equation}
\label{eq:TiltDependence-2}
\gamma=\frac{%
 2\gamma^2_*+\gamma_{12}-1}
{2\gamma^2_*-\gamma_{12}+1}\,.
\end{equation}
Note that from \eqref{eq:OrthogonalProjectionPlane-7}
we already know that the denominator is $>1$. For fixed $\gamma_{12}$ this gives
$\gamma$ as monotonically decreasing function of $\gamma_*$ that 
asymptotically approaches $\gamma=1$
(and hence the corresponding 
$\beta=\sqrt{1-\gamma^{-2}}$ approaching
$\beta=0$) as $\gamma_*$ tends to 
infinity; the graph is shown to the left 
in Fig.\,\ref{fig:GammaVersusTilt}. 
$\gamma$ assumes its maximum 
for the minimal $\gamma_*$, which is that 
where $s\in\Span\{s_1,s_2\}$ and 
due to $\gamma_1=\gamma_2$ just bisects 
$s_1$ and $s_2$. Hence  
$s=(s_1+s_2)/\sqrt{2(1+\gamma_{12})}$ 
and the minimal $\gamma_*$ is 
$\gamma_*=-s\cdot s_1=\sqrt{(1+\gamma_{12})/2}$
at which $\gamma$ takes its maximal value $\gamma=\gamma_{12}$. 

\begin{figure}[ht]
\includegraphics[width=.44\textwidth]{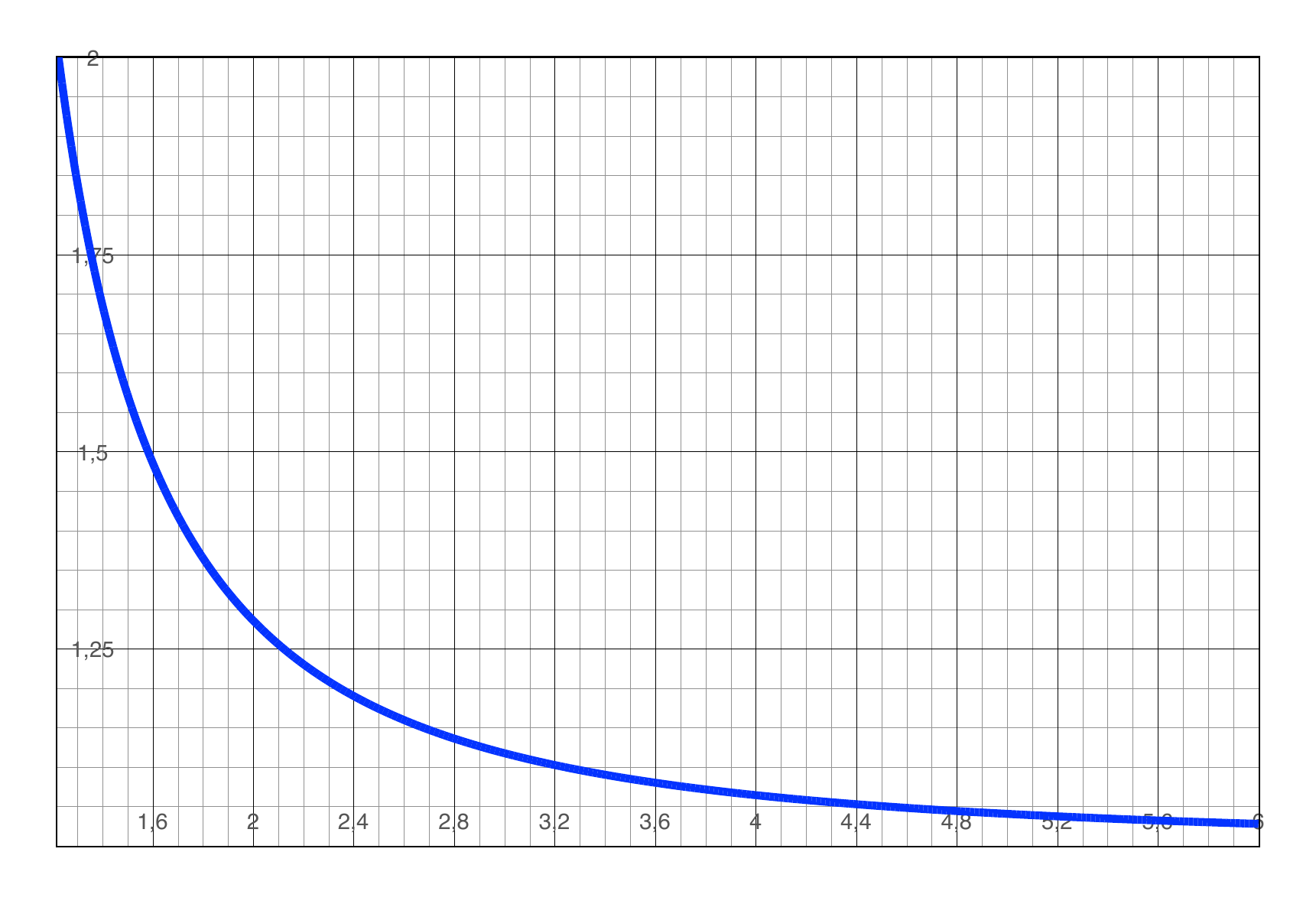}
\hspace{22pt}
\includegraphics[width=.46\textwidth]{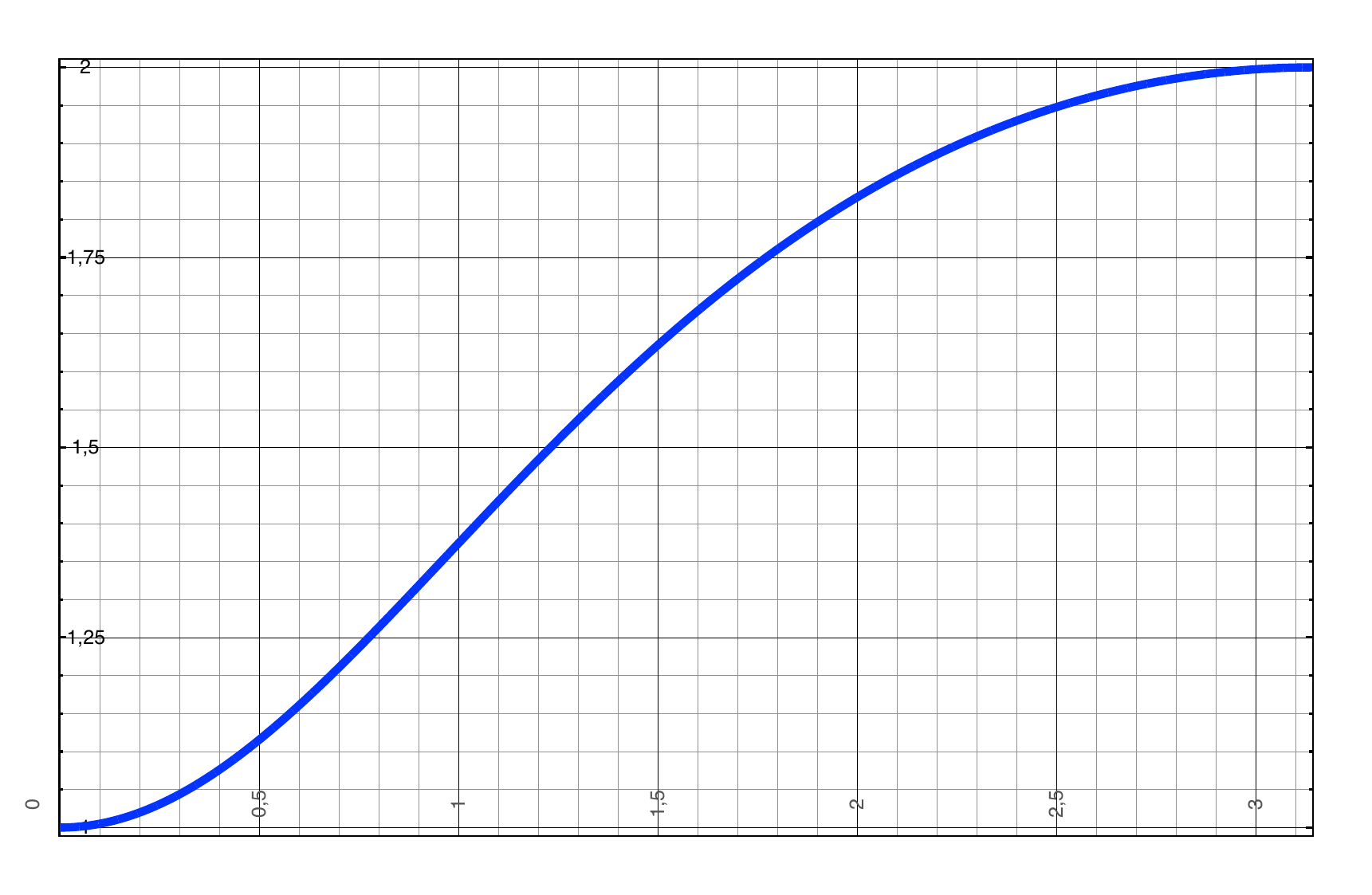}
\put(-275,27){\footnotesize $\gamma=\frac{2\gamma_*^2+(\gamma_{12}-1)}{2\gamma_*^2-(\gamma_{12}-1)}$}
\put(-3,8){\small $\varphi$}
\put(-167,98){\small $\gamma$}
\put(-110,27){\footnotesize $\gamma=\frac{\gamma_{12}(3-\cos\varphi)-\cos\varphi -1)}{\gamma_{12}(1+\cos\varphi)-3\,\cos\varphi +1}$}
\put(-195,8){\small $\gamma_*$}
\put(-353,98){\small $\gamma$}
\caption{\label{fig:GammaVersusTilt}
Graphs of $\gamma(\gamma_*)$
and $\gamma(\varphi)$ showing how $\gamma(s,s_1,s_2)$ decreases for 
increasing ``tilt'' of $s$ against 
$\Span\{s_1,s_2\}$.}
\end{figure}

Another way to parametrise 
$\gamma$ is to use
\eqref{eq:OrthogonalProjectionPlane-6}
where $\beta_1=\beta_2=:\beta_*$ and
$\varphi$ the angle between 
$\vec\beta_1$ and $\vec\beta_2$.  
We have 
\begin{equation}
\label{eq:TiltDependence-3}
\gamma_{12}=-s_1\cdot s_2
=\gamma^2_*(1-\beta_*^2\,\cos\varphi)
=\gamma_*^2(1-\cos\varphi)+\cos\varphi\,.
\end{equation}
using $\beta_*^2=(1-\gamma_*^{-2})$
in the last step. Solving this for
$\gamma^2_*$
\begin{equation}
\label{eq:TiltDependence-4}
\gamma^2_*=\frac{\gamma_{12}-\cos\varphi}
{1-\cos\varphi}
\end{equation}
we can use it to eliminate $\gamma^2_*$ 
in \eqref{eq:TiltDependence-2} in favour 
of $\varphi$:
\begin{equation}
\label{eq:TiltDependence-5}
\gamma=\frac{\gamma_{12}\,(3-\cos\varphi)-\cos\varphi-1}{\gamma_{12}\,(1+\cos\varphi)-3\cos\varphi+1}\,.
\end{equation}
Now $\gamma$ is a strictly monotonically
increasing function in $\varphi\in[0,\pi]$ 
connecting the minimum at 
$(\varphi=0,\gamma=1)$ with the maximum 
at $(\varphi=\pi,\gamma=\gamma_{12})$.
The graph is shown to the right in 
Fig.\,\ref{fig:GammaVersusTilt}.

\section{Comparison with 
Galilei-Newton spacetimes}
\label{sec:GalileiNewton}
It is interesting to compare the geometric 
developments above to the corresponding one 
in for Galilei-Newton spacetime. For that we 
start with a summary of the decomposition of a Lorentz transformation into a boost and a rotation,
without using the notion of a polar 
decomposition, which is really not 
essential here.

\subsection{
Boost-Rotation decomposition in SR rephrased
\label{sec:BoostRotationNoPolar}
}
Given $L\in\Lor$ we do the following: 
\begin{enumerate}
\item
Choose a state $s\in\som$ and let $s_1:=Ls$.
\item
Let $B:=B(s,s_1)$ as in \eqref{eq:RelBoost-3}, 
i.e. the unique Lorentz transformation that 
maps $s$ to $s_1$ and pointwise fixes the 
$\eta$-orthogonal complement to 
$\Span\{s,s_1\}$. The corresponding 
$\vec\beta(s,s_1)$ is then given by 
\eqref{eq:DefRelVel-2}.
\item
Define 
\begin{equation}
\label{eq:DefRotationLor-1}
R:=B^{-1}\circ L\,.
\end{equation}
Clearly, $R$ is again a Lorentz transformation 
that fixes s: $Rs=s$. Hence it is an 
element of the stabiliser subgroup 
\begin{equation}
\label{eq:DefRotationLor-2}
\mathrm{Stab}_s\bigl(\Lor\bigr)
:=\bigl\{L\in\Lor:Ls=s\bigr\}\,,
\end{equation} 
which consists of Lorentz transformations 
that map the orthogonal complement 
to $s$ (which is a 3-dimensional vector 
space with Euclidean inner product induced 
from $\eta$) isometrically into itself. 
Hence each $\mathrm{Stab}_s\bigl(\Lor\bigr)$ 
is isomorphic to $\group{SO}(3)$, but for 
$s\ne s'$ the corresponding stabiliser 
subgroups differ. In fact, they are conjugate 
subgroups in $\Lor$,
\begin{equation}
\label{eq:DefRotationLor-3}
\mathrm{Stab}_{s'}\bigl(\Lor\bigr)
=
B(s,s')\circ\mathrm{Stab}_s\bigl(\Lor\bigr)\circ
[B(s,s')]^{-1}\,,
\end{equation} 
where $B(s,s')$ is as in 
$\eqref{eq:RelBoost-3}$ for $s_1=s'$. 
\item
Rewrite \eqref{eq:DefRotationLor-1} as  
\begin{equation}
\label{eq:DefRotationLor-4}
L=B\circ R
\end{equation} 
with both factors, $B$ and $R$, depending on $s$. 
The $s$-dependent decomposition \eqref{eq:DefRotationLor-4} 
may now be read as a polar decomposition with 
respect to the $s$-dependent Euclidean metric 
$g=\eta+2\sigma\otimes\sigma$, where 
$\sigma:=\eta_\downarrow(s)$, but that 
interpretation may be regarded as redundant.
\end{enumerate}

In Section\,\ref{sec:BoostRotationDecGN} 
below will describe the analogous procedure 
in the Galilei-Newton setting, where we will 
also find a decomposition like \eqref{eq:DefRotationLor-4}. There, in 
contrast, the factor $B$ will \emph{not} 
depend on the choice of $s$ whereas the 
situation for the ``rotational'' factor $R$
is just as in the SR case. But now this 
factorisation has no obvious interpretation 
as polar decomposition.

\subsection{
Galilei-Newton spacetime
\label{sec:GalileiNewtonST}
}
Like Minkowski spacetime, Galilei-Newton 
spacetime is an affine space 
$(M,V,+)$.\footnote{For readers unfamiliar with 
this notation, we explain the meaning of an affine 
space characterised by the triple $(M,V,+)$ in 
Appendix\,\ref{sec:AffineStructures}. Suffice 
it to say here that $M$ is a set (of events) 
and $V$ a real vector space that acts simply 
transitively by an action called ``$+$'' on 
$M$; i.e. $M\times V\ni (m,v)\mapsto m+v\in M$.}
However, their geometric structures differ.  
In case of Minkowski spacetime we had 
$(\eta,\uparrow)=(\text{spacetime metric},
\text{time-orientation})$, which for 
Galilei-Newton spacetime is replaced by 
a  pair $(\tau, h)$. Here $\tau\in V^*$ 
is an oriented time-difference function 
and $h\in [\kernel(\tau)]^*\otimes [\kernel(\tau)]^*$ is a symmetric, positive 
definite inner product on the 
$\kernel(\tau)$. We will explain these 
structures in turn. 
\begin{itemize}
\item
$\tau$ is an oriented time metric, i.e. 
it allows to assign an oriented time 
difference to any ordered pair $(p,q)$ of 
spacetime points, given by $\tau(p-q)$.
It can assume positive as well 
as negative values -- hence ``oriented''. 
A vector $v\in V$ is called future-pointing 
if $\tau(v)>0$ and past-pointing if 
$\tau(v)<0$. Correspondingly, $p\in M$ is 
called to the future or past of $q\in M$ if 
$(p-q)$ is future-pointing or past-pointing, 
respectively. Two events $p,q$ are called 
simultaneous iff their time difference 
vanishes, $\tau(p-q)=0$. Simultaneity defines 
an absolute\footnote{
i.e. invariant under the automorphism group 
of spacetime; compare \citep{Giulini:2002a}.}
equivalence relation given by: $p\sim q\Leftrightarrow (p-q)\in\kernel(\tau)$. 
The equivalence class $[p]$ of a point $p$ is 
then simply given by the 3-dimensional 
affine hyperplane $[p]=p+\kernel(\tau)$. 
For the sake of notational ease we shall 
write
\begin{equation}
\label{eq:DefV0}
V_0:=\kernel(\tau)
=\bigl\{v\in V:\tau(v)=0\bigr\}\,.
\end{equation}
Its associated dual space will be called 
$V_0^*$. An overall orientation of $V$ will 
induce a  orientation of $V_0$ in view of 
$\tau$.\footnote{A basis $\{e_1,e_2,e_3\}$ of 
$V_0$ is positively oriented iff its 
completion $\{e_0,e_1,e_2,e_3\}$ to a 
basis of $V$ with $\tau(e_0)>0$ is positively
oriented in $V$.} 
 
\item
$h\in V_0^*\otimes V_0^*$ is a positive definite
symmetric bilinear form, i.e. a  Euclidean metric,
on $V_0$. It defines a proper distance
function on any equivalence class of mutually 
simultaneous events through 
$\Vert p-q\Vert:=\sqrt{h(p-q,p-q)}$.
No spatial distance is associated to 
non-simultaneous events. 
\end{itemize}

In analogy to Definition\,\ref{def:StateOfMotion-1}
we now define the ``unit timelike vectors'' 
$V_1$, the ``future-oriented unit timelike 
vectors'' $V_1^+$, and the ``states of motion''  
as follows:  
\begin{subequations}
\label{eq:DefSoM-GN1}
\begin{alignat}{2}
\label{eq:DefSoM-GN1-a}
& V_1&&\,:=\,\bigl\{v\in V:\vert\tau(v)\vert=1\bigr\}\,,\\
\label{eq:DefSoM-GN1-b}
& V^+_1&&\,:=\,\bigl\{v\in V:\tau(v)=1\bigr\}\,.
\end{alignat}
\end{subequations}

\begin{definition}
\label{def:StateOfMotionGal}
The set $\som$ of \textbf{states of motion} 
is identified with $V^+_1$:
\begin{equation}
\label{eq:DefSoM-GN2}
\som:=V_1^+\,.
\end{equation}
$\som$ is a 3-dimensional real affine space 
over the Euclidean vector space $(V_0,h)$.  
At the same time it can also be regarded 
as a 3-dimensional Riemannian manifold with 
a flat Riemannian metric $h$ and hence a 
notion of global parallelism. 
\end{definition}

Whereas there clearly is a natural 
embedding  
\begin{equation}
\label{eq:GalileiEmbedding}
i:V_0\hookrightarrow V\,,
\end{equation}
there is no \emph{naturally} given projection 
map $V\rightarrow V_0$. The selection of such 
a map is equivalent to picking an element 
$s\in \som$. The corresponding projections 
from $V$  onto $\Span\{s\}$ and 
onto $V_0$ are then, respectively,  given 
by\footnote{In contrast to  
section\,\ref{sec:KinematicalSetting}, 
where due to the existence of a 
non-degenerate bilinear form $\eta$ we 
could identify $\End(V)$ with $V\otimes V$
in order to simplify notation, we here use 
the natural identification 
$\End(V)=V\otimes V^*$.}
\begin{subequations}
\label{eq:DefProjEndoGal}
\begin{alignat}{2}
\label{eq:DefProjEndoGal-a}
&\proj_s^\Vert 
&&\,=\,s\otimes \tau\,,\\
\label{eq:DefProjEndoGal-b}
&\proj_s^\top 
&&\,=\,\id_V-s\otimes \tau\,.
\end{alignat}
\end{subequations}
These should be compared to equations 
\eqref{eq:DefProjEndo}. An important 
difference is that all $\proj_s^\top$ 
in \eqref{eq:DefProjEndoGal-b}
project onto the same vector space 
$V_0$, independent of the state $s$.
In contrast, $\proj_s^\perp$ in 
\eqref{eq:DefProjEndo-b} projects 
onto the $\eta$-orthogonal complement 
of $s$, which clearly does depend on $s$. 
This is the reason why we made the notational 
change from $\proj_s^\perp$ in \eqref{eq:DefProjEndo-b} 
to $\proj_s^\top$ in \eqref{eq:DefProjEndoGal-b}: 
$V_0$ here is ``transversal'' (denoted by $\top$) 
to $\Span\{s\}$ but not in any defined sense 
``orthogonal'' (as $\perp$ would suggest). 

In analogy to Definition\,\ref{def:RelVel-1},
the relative velocity between $s_1\in\som$ 
and $s\in\som$ is now defined as in 
\eqref{eq:DefRelVel-1}, except that the length 
of vectors in $\Span\{s\}$ is measured 
by $\tau$, which is always 1:

\begin{definition}
\label{def:RelVelGN-1}
Given two states of motion $s$ and $s_1$. 
The \textbf{relative velocity between $s_1$ 
and $s$, judged from $s$,} is defined by 
\begin{equation}
\label{eq:DefRelVelGal-1}
\vec v(s,s_1):=\frac{\proj_s^\top(s_1)}{\tau\bigl(\proj_s^\Vert(s_1)\bigr)}
=s_1-s \in V_0.
\end{equation}
This, in geometric terms, is just the 
ordinary definition of velocity in 
Newtonian physics. 
\end{definition}
The second reference to $s$ expressed in the 
phrase ``judged from $s$'' is now indeed 
redundant due to the fact that all such 
relative velocities are members of the same 
space $V_0$. Clearly, we could have considered
the right-hand side of \eqref{eq:DefRelVelGal-1}
as element of $T_s\som$, as in \eqref{eq:DefRelVel-1}. But due to the 
flatness of $(\som,h)$ there are now
natural isomorphisms between all tangent 
spaces $T_s\som$ given by (path-independent) parallel transport.

\begin{remark}
\label{rem:GalileiStructure}
The geometric formulation of a Galilei-Newton
structure goes back to \citet[\S\,18]{Weyl:RZM1}.
Later generalisations are due to \citet{Friedrichs:1928}, \citet{Dombrowski.Horneffer:1964}, 
and also \citet{Kuenzle:1972}. Sometimes instead of 
$\tau$ only $\tau\otimes\tau$ is prescribed, i.e.
a degenerate ``time metric'' of rank one,  
which is equivalent to $\tau$ without the time 
orientation. Also, often $h\in V_0^*\otimes V_0^*$ is 
replaced with a $\bar h\in V\otimes V$ whose 
kernel is just $\Span\{\tau\}$. In fact, 
most modern texts follow this formulation, 
presumably in order to not deal with tensor 
fields over proper subspaces. Algebraically
both formulations are equivalent. In fact, 
there is a bijection between the sets of 
non-degenerate bilinear forms on $V_0$ and 
simply degenerate bilinear forms on $V^*$ 
whose kernel is $\Span\{\tau\}$. 
The bijective correspondence is this: 
Consider the natural isomorphism 
$V_0^*\otimes V_0^*\cong \mathrm{Lin}(V_0,V_0^*)$ 
und accordingly regard $h$ als Element von of  
$\mathrm{Lin}(V_0,V_0^*)$. Since $h$ is non-degenerate there exists the inverse map 
$h^{-1}\in \mathrm{Lin}(V_0^*,V_0)$. Let further 
$i:V_0\rightarrow V$ be the natural embedding and 
$i^*: V^*\rightarrow V_0^*$ its dual. The latter 
is given by 
$\lambda\mapsto i^*(\lambda):=\lambda\circ i$. 
Hence the kernel of $i^*$ is just given by 
$\Span\{\tau\}$.  Now we define 
$\bar h:=i\circ h^{-1}\circ i^*\in\mathrm{Lin}(V^*,V)\cong V\otimes V$. Since $i$ is injective 
and $h^{-1}$ an isomorphism, the kernel of 
$\bar h$ equals the kernel of $i^*$, i.e. 
$\Span\{\tau\}$.
\end{remark}

\subsection{
Automorphims of Galilei-Newton spacetime
\label{sec:AutGN}}
Like $\ILor\cong V\rtimes\Lor$, the identity component of 
the inhomogeneous Lorentz group (also called the 
proper orthochronous Poincaré group), is the 
automorphism group of $(M,V,+,\eta,o_V,o_T)$, 
where $o_V$ is the overall orientation of $V$ 
and $o_T$ is the time orientation, we define 
the inhomogeneous Galilei group to be the 
automorphism group of the geometric 
Galilei-Newton structure. 
\begin{definition}
\label{def:GalileiGroup}
The \textbf{inhomogeneous Galilei group} is 
the automorphism group of $(M,V,+,\tau,h,o_V)$; 
we write 
$\IGal:=\mathrm{Aut}(M,V,+,\tau,h,o_V)$.
Here $o_V$ stands for an orientation of $V$. 
Unlike in the special-relativistic case we 
do not need to also specify a time 
orientation $o_T$, since that is already 
provided by $\tau$. This group is 
a subgroup of the affine group 
$\mathrm{Aut}(M,V,+)$. Again it is isomorphic 
to a semi-direct product $\IGal\cong V\rtimes\Gal$,
where $\Gal$ may be considered as subgroup 
of $\group{GL}(V)$:
\begin{equation}
\label{eq:DefGal}
\begin{split}
\Gal:=&\mathrm{Aut}(V,\tau,h,o_V)\\
:=&\bigl\{G\in\mathrm{GL}(V):\tau\circ G=\tau,\ 
h\circ G\vert_{V_0}\times G\vert_{V_0}=h,\
\det(G)>0\bigr\}\,.
\end{split}
\end{equation} 
\end{definition} 
Note that any $G\in\group{GL}(V)$ that preserves 
$\tau$ also preserves $V_0=\kernel(\tau)$ 
i.e. it defines an element $G\vert_{V_0}\in\group{GL}(V_0)$ 
by restricting $G$ to $V_0$. That element is then 
required to be in $\group{SO}(V_0,h)$ by the 
second and third condition in \eqref{eq:DefGal} 
(preserving overall- and  time-orientation implies
that it also preserves space orientation). It is 
easy to see that this restriction-map defines a 
surjective homomorphism of groups, i.e. a 
``group projection'': 
\begin{equation}
\label{eq:GalGroupHom}
\pi:\group{Gal}\rightarrow\group{SO}(V_0,h)\,,\quad
G\rightarrow \pi(G):=G\vert_{V_0}\,.
\end{equation}

\begin{definition}
\label{def:BoostGal}
Elements in $\kernel(\pi)$ are called 
\textbf{boosts}. 
\end{definition}
\begin{proposition}
\label{thm:BoostGroupGal}
The set of boosts is given by 
\begin{equation}
\label{eq:SetBootsGal}
\group{Gal}_B:=\kernel(\pi)
=\bigl\{\id_V+\vec v\otimes\tau:\vec v\in V_0\bigr\}\,.
\end{equation}
$\group{Gal}_B$ is an abelian normal subgroup of
$\group{Gal}$ which is isomorphic to $V_0$ (considered
as abelian group with group multiplication given 
by vector addition). 
\end{proposition}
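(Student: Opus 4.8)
The plan is to first pin down $\kernel(\pi)$ explicitly and then read off the group structure, using that the kernel of a homomorphism is automatically normal. For the explicit description I would fix any state $s\in\som$, so that $\tau(s)=1$ and $V=\Span\{s\}\oplus V_0$; every $w\in V$ decomposes as $w=\tau(w)\,s+w_0$ with $w_0:=w-\tau(w)s\in V_0$. If $G\in\kernel(\pi)$, then $G$ fixes $V_0$ pointwise and preserves $\tau$, so $\tau(Gs)=\tau(s)=1$ forces $Gs=s+\vec v$ for the unique $\vec v:=Gs-s\in V_0$. Applying $G$ to the decomposition of an arbitrary $w$ then gives $Gw=\tau(w)(s+\vec v)+w_0=w+\tau(w)\,\vec v$, i.e. $G=\id_V+\vec v\otimes\tau$ in the identification $\End(V)=V\otimes V^*$. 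This proves the inclusion $\kernel(\pi)\subseteq\{\id_V+\vec v\otimes\tau:\vec v\in V_0\}$.

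For the reverse inclusion I would simply verify that every $G=\id_V+\vec v\otimes\tau$ with $\vec v\in V_0$ lies in $\group{Gal}$ and in $\kernel(\pi)$. Since $\tau(\vec v)=0$, one gets $\tau(Gw)=\tau(w)+\tau(w)\tau(\vec v)=\tau(w)$, so $\tau\circ G=\tau$; and for $w\in V_0$ one has $\tau(w)=0$, hence $Gw=w$, so $G|_{V_0}=\id_{V_0}\in\group{SO}(V_0,h)$. Finally $\det(G)=1+\tau(\vec v)=1>0$ by the rank-one determinant identity, so $G$ is invertible with the required orientation behaviour. This identifies $\kernel(\pi)$ with the stated set.

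The group structure then follows from one short computation. Because $(\vec v\otimes\tau)(\vec w\otimes\tau)$ sends $x\mapsto\tau(x)\,\tau(\vec w)\,\vec v=0$ (again using $\vec w\in V_0$), the composition of two boosts is
\[
(\id_V+\vec v\otimes\tau)(\id_V+\vec w\otimes\tau)=\id_V+(\vec v+\vec w)\otimes\tau.
\]
Hence $\vec v\mapsto\id_V+\vec v\otimes\tau$ is a group homomorphism from $(V_0,+)$ onto $\group{Gal}_B$; it is injective because $\tau\neq 0$, so it is an isomorphism $\group{Gal}_B\cong(V_0,+)$, and in particular $\group{Gal}_B$ is abelian with inverse $(\id_V+\vec v\otimes\tau)^{-1}=\id_V-\vec v\otimes\tau$. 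Normality is free, since $\group{Gal}_B=\kernel(\pi)$ is the kernel of the homomorphism $\pi$. (Alternatively, conjugation gives $G(\vec v\otimes\tau)G^{-1}=(G\vec v)\otimes(\tau\circ G^{-1})=(G\vec v)\otimes\tau$ with $G\vec v\in V_0$, exhibiting the conjugate as another boost.)

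There is no real obstacle here; the only point requiring a little care is the bookkeeping of how $u\otimes\lambda\in V\otimes V^*$ acts as an endomorphism, and the resulting product rule $(\vec v\otimes\tau)(\vec w\otimes\tau)=\tau(\vec w)\,\vec v\otimes\tau$, which collapses to zero precisely because boost velocities lie in $V_0=\kernel(\tau)$. Everything else—closure, the determinant check, and normality—follows mechanically once this vanishing is in hand.
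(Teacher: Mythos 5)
Your proof is correct and follows essentially the same route as the paper: where the paper picks an adapted basis $\{e_0,e_1,e_2,e_3\}$ with $\tau(e_0)=1$ and reads off the form $\id_V+\vec v\otimes\tau$ from the component conditions $\tensor{G}{^0_0}=1$, $\tensor{G}{^0_a}=0$, $\tensor{G}{^a_b}=\delta^a_b$, you use the equivalent invariant splitting $V=\Span\{s\}\oplus V_0$ for a state $s$ with $\tau(s)=1$, and both arguments then conclude with the composition law $B(\vec v_1)\circ B(\vec v_2)=B(\vec v_1+\vec v_2)$ and normality via the kernel property. If anything, you are slightly more thorough, since you verify the reverse inclusion explicitly (including the determinant check $\det(\id_V+\vec v\otimes\tau)=1+\tau(\vec v)=1>0$), which the paper leaves implicit.
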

\begin{proof}
Being the kernel of a group homomorphism  
$\group{Gal}_B$ is clearly a normal subgroup.
Let $\{e_0,e_1,e_2,e_3\}$ be an ``adapted'' 
basis of $V$, which means that $\tau(e_0)=1$
and $\tau(e_a)=0$ for $a=1,2,3$. Let further 
$\{\theta^0,\theta^1,\theta^2,\theta^3\}$ 
be the dual basis, i.e. 
$\theta^\alpha(e_\beta)=\delta^\alpha_\beta$; 
then $\theta^0=\tau$. Writing
\begin{equation}
\label{eq:BoostGal-1}
G=\tensor{G}{^\alpha_\beta}\,e_\alpha\otimes\theta^\beta\,,
\end{equation}
the condition $\tau\circ G=\tau$ is equivalent to 
$\tensor{G}{^0_\beta}\theta^\beta=\theta^0$, i.e.
$\tensor{G}{^0_0}=1$ and $\tensor{G}{^0_a}=0$ ($a=1,2,3$).
Moreover,
\begin{equation}
\label{eq:BoostGal-2}
\pi(G)=G\vert_{V_0}
=\tensor{G}{^a_b}\,e_a\otimes\theta^b
\quad (a,b=1,2,3)\,,
\end{equation}
so that $G\in\kernel(\pi)$ iff $\tensor{G}{^a_b}=\delta^a_b$. 
Taken together, the most general element in $\kernel(\pi)$ is 
of the form
\begin{equation}
\label{eq:BoostGal-3}
G=e_0\otimes\theta^0+\delta^a_be_a\otimes\theta^b+\tensor{G}{^a_0}e_a\otimes\theta^0=\id_V+\vec v\otimes\tau\,,
\end{equation}
where $\vec v:=\tensor{G}{^a_0}e_a\in V_0$. 
Such elements form an abelian subgroup isomorphic to 
$V_0$ inside $\group{Gal}$. The corresponding embedding (injective group homomorphism) 
is  
\begin{equation}
\label{eq:BoostEmbeddingGN}
B: V_0\rightarrow\group{Gal}\,,\quad
B(\vec v):=\id_V+\vec v\otimes\tau\,,
\end{equation}
which maps isomorphically onto its image 
$\group{Gal}_B\subset\group{Gal}$. The relations
$B(\vec 0)=\id_V$ and 
$B(\vec v_1)\circ B(\vec v_2)=B(\vec v_1+\vec v_2)$
are quite obviously satisfied.
\end{proof}
\begin{proposition}
\label{thm:RotGroupGal}
For any $s\in\som$ there is an embedding 
(injective group homomorphisms)
\begin{subequations}
\label{eq:GalRotEmbedding1}
\begin{equation}
\label{eq:GalRotEmbedding1-a}
\sigma_s: \group{SO}(V_0,h)\rightarrow\group{Gal}\,,\quad
D\mapsto\sigma_s(D):=s\otimes\tau+D\circ \proj_s^\top\,,
\end{equation} 
such that 
\begin{equation}
\label{eq:GalRotEmbedding1-b}
\pi\circ\sigma_s=\id_{\group{SO}(V_0,h)}\,.
\end{equation} 
\end{subequations}
\end{proposition}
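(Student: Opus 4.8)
The plan is to reduce the whole statement to how $\sigma_s(D)$ acts on the two complementary pieces $\Span\{s\}$ and $V_0$ of $V$. First I would record the one structural input: since $s\in\som=V_1^+$ we have $\tau(s)=1$, so that $\proj_s^\top(s)=s-\tau(s)\,s=0$, while $\proj_s^\top$ restricts to the identity on $V_0$. Writing $\sigma_s(D)(v)=\tau(v)\,s+D\bigl(\proj_s^\top(v)\bigr)$ and using these two facts, one obtains immediately $\sigma_s(D)(s)=s$ and $\sigma_s(D)\vert_{V_0}=D$. In an adapted basis $\{e_0=s,e_1,e_2,e_3\}$ with $\{e_1,e_2,e_3\}$ a basis of $V_0$, the matrix of $\sigma_s(D)$ is therefore block-diagonal of the form $\mathrm{diag}(1,D)$; this block picture is what I expect to do essentially all the work.

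From the block-diagonal form the membership $\sigma_s(D)\in\Gal$ follows at once: $\tau\circ\sigma_s(D)=\tau$ because $\sigma_s(D)$ fixes $s$ and sends $V_0$ into $V_0=\kernel(\tau)$; the restriction $\sigma_s(D)\vert_{V_0}=D$ preserves $h$ by hypothesis; and $\det\sigma_s(D)=\det D=1>0$. Since $\pi(G)=G\vert_{V_0}$ by definition, the very same restriction computation yields $\pi\bigl(\sigma_s(D)\bigr)=\sigma_s(D)\vert_{V_0}=D$, which is exactly the asserted splitting relation $\pi\circ\sigma_s=\id_{\group{SO}(V_0,h)}$.

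For the homomorphism property I would verify $\sigma_s(D_1)\circ\sigma_s(D_2)=\sigma_s(D_1D_2)$ by a direct substitution: applying $\sigma_s(D_1)$ to $\sigma_s(D_2)(v)=\tau(v)\,s+w$ with $w:=D_2\bigl(\proj_s^\top(v)\bigr)\in V_0$, and using $\tau(w)=0$, one finds $\tau(\tau(v)\,s+w)=\tau(v)$ and $\proj_s^\top(\tau(v)\,s+w)=w$, so the result is $\tau(v)\,s+D_1(w)=\sigma_s(D_1D_2)(v)$. Together with $\sigma_s(\id_{V_0})=s\otimes\tau+\proj_s^\top=\id_V$, this shows $\sigma_s$ is a group homomorphism. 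Injectivity is then immediate from the splitting: possessing the left inverse $\pi$ forces $\sigma_s$ to be injective.

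The argument is entirely routine; the single point deserving care is the implicit use of the embedding $V_0\hookrightarrow V$ in the composite $D\circ\proj_s^\top$, so that $\sigma_s(D)$ is genuinely an endomorphism of $V$ and not merely a map $V\to V_0$. Once this is made explicit, the only real content is the observation that $\sigma_s(D)$ fixes $s$, which is what makes $\sigma_s$ an honest section of $\pi$, i.e. an $s$-dependent choice of ``rotation subgroup'' of $\Gal$ complementary to the boosts of Proposition\,\ref{thm:BoostGroupGal}.
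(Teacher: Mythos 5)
Your proof is correct and follows essentially the same route as the paper: a direct verification based on $\proj_s^\top(s)=0$, $\proj_s^\top\vert_{V_0}=\id_{V_0}$ and $\tau\vert_{V_0}=0$, with your vector-level computation of $\sigma_s(D_1)\circ\sigma_s(D_2)=\sigma_s(D_1D_2)$ being the same calculation the paper performs at the operator level. You are in fact slightly more thorough than the paper, whose proof only checks the homomorphism property and $\pi\circ\sigma_s=\id_{\group{SO}(V_0,h)}$, leaving the membership $\sigma_s(D)\in\Gal$ (preservation of $\tau$, $h$ and orientation) and the injectivity of $\sigma_s$ implicit, both of which you verify explicitly.
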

\begin{proof}
We first check hat $\sigma_s$ is a group homomorphisms. 
From \eqref{eq:DefProjEndoGal-b} one immediately sees 
that $\sigma_s(\id_{V_0})=\id_V$, and from  
\eqref{eq:GalRotEmbedding1-a} that
\begin{equation}
\label{eq:GalRotEmbedding2}
\begin{split}
\sigma_s(D_1)\circ\sigma_s(D_2)
&=(s\otimes\tau+D_1\circ\proj_s^\top)\circ
(s\otimes\tau+D_2\circ\proj_s^\top)\\
&=s\otimes\tau+(D_1\circ D_2)\circ\proj_s^\top\\
&=\sigma_s(D_1\circ D_2)\,,
\end{split}
\end{equation} 
where we used $\tau\circ D_2\circ\proj_s^\top=0$, 
$\proj_s^\top(s)=0$, and 
$\proj_s^\top\circ D_2\circ\proj_s^\top=D_2\circ\proj_s^\top$.
Finally,
\begin{equation}
\label{eq:GalRotEmbedding3}
\pi\bigl(\sigma_s(D)\bigr)=
(s\otimes\tau+D\circ\proj_s^\top)\big\vert_{V_0}=D\,,
\end{equation}
since $\tau\vert_{V_0}=0$ and 
$\proj_s^\top\vert_{V_0}=\id_{V_0}$.
\end{proof}
\begin{corollary}
There is a splitting exact sequence 
\begin{equation}
\label{eq:ExactSequenceSplit}
\begin{tikzcd}
\{1\}\arrow[r] 
& V_0\arrow[r,tail, "B"] 
& \group{Gal}\arrow[r,bend left=0, two heads, "\pi"]
& \group{SO}(V_0,h)\arrow[l,bend left=20, tail, "\sigma_s", end anchor=south, start anchor=south]\arrow[r]
&\{1\}	\,.
\end{tikzcd}
\end{equation}
Here $\{1\}$
denotes the trivial group with only one 
element. ``Exact''  means that at each node 
of the sequence of maps the image of the 
arriving equals the kernel of the departing 
map. For \eqref{eq:ExactSequenceSplit} this is equivalent to the three statements that 1)~$B$ is an injective group 
homomorphism (also called an ``embedding'', symbolised by 
a tailed arrow), that 2)~$\pi$ is a surjective group 
homomorphism (symbolised by a double-headed arrow), 
and that 3)~$\image(B)=\group{Gal}_B=\kernel(\pi)$.
That the exact sequence is ``splitting'' means 
that there is an injective group homomorphism 
-- here denoted by $\sigma_s$ -- from  
$\group{SO}(V_0,h)$ to $\group{Gal}$,
such that $\pi\circ\sigma_s=\id_{\group{SO}(V_0,h)}$
All these properties taken together are 
equivalent to the statement that 
$\group{Gal}$ is a semi-direct product 
of $V_0$ and $\group{SO}(V_0,h)$, usually 
denoted by 
\begin{equation}
\label{eq:GalSemiDirProd}
\group{Gal}=V_0\rtimes \group{SO}(V_0,h)\,.
\end{equation}
Note that the abelian normal subgroup of 
boosts, $B(V_0)=\group{Gal}_B$, is defined independent of a choice of $s\in\som$. In 
contrast, any selection of a subgroup 
of orthogonal spatial transformations within 
$\group{Gal}$, i.e. the image of 
$\group{SO}(V_0,h)$ under $\sigma_s$ in 
$\group{Gal}$, does depend on a choice of 
$s\in\som$. 
\end{corollary}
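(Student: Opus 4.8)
The plan is to assemble the structural facts already proved in Propositions~\ref{thm:BoostGroupGal} and~\ref{thm:RotGroupGal}; almost all of the substantive work has been done there, so verifying the corollary is essentially bookkeeping, carried out node by node.

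First I would check exactness. At the node $V_0$ it amounts to injectivity of $B$, which is part of Proposition~\ref{thm:BoostGroupGal}. At the node $\group{Gal}$ it demands $\image(B)=\kernel(\pi)$; Proposition~\ref{thm:BoostGroupGal} identifies $\image(B)=\group{Gal}_B$ and at the same time asserts $\group{Gal}_B=\kernel(\pi)$, so the required equality holds verbatim. At the node $\group{SO}(V_0,h)$ exactness is surjectivity of $\pi$, which I would deduce for free from the section of Proposition~\ref{thm:RotGroupGal}: since $\pi\circ\sigma_s=\id$, every $D$ equals $\pi(\sigma_s(D))$ and hence lies in $\image(\pi)$.

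The splitting is then immediate, as Proposition~\ref{thm:RotGroupGal} supplies precisely an injective group homomorphism $\sigma_s$ with $\pi\circ\sigma_s=\id_{\group{SO}(V_0,h)}$. To pass from a split short exact sequence to the semi-direct product identification \eqref{eq:GalSemiDirProd}, I would invoke the standard equivalence recalled in the appendix on semi-direct products, the twisting homomorphism being the conjugation action of $\image(\sigma_s)$ on $\image(B)$. The one computation worth displaying confirms that this action is the tautological linear action of $\group{SO}(V_0,h)$ on $V_0$: using $\proj_s^\top(s)=0$, $\proj_s^\top\vert_{V_0}=\id_{V_0}$, and $\tau(s)=1$ one finds
\begin{equation}
\sigma_s(D)\circ B(\vec v)\circ\sigma_s(D)^{-1}=B(D\vec v)\,,
\end{equation}
so that indeed $\group{Gal}\cong V_0\rtimes\group{SO}(V_0,h)$ with the expected action.

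Since every ingredient is already established, I do not expect a genuine obstacle. The only two points requiring a moment of care are recognising that surjectivity of $\pi$ need not be argued on its own but follows automatically from the existence of a section, and checking the conjugation identity above so as to confirm that the abstract semi-direct product is the natural one rather than a twisted variant.
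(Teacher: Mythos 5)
Your proposal is correct and follows essentially the same route as the paper: the corollary is exactly the assembly of Proposition\,\ref{thm:BoostGroupGal} (injectivity of $B$ and $\image(B)=\group{Gal}_B=\kernel(\pi)$), Proposition\,\ref{thm:RotGroupGal} (the section $\sigma_s$ with $\pi\circ\sigma_s=\id$), surjectivity of $\pi$, and the general equivalence of split short exact sequences with semi-direct products recalled in Appendix\,\ref{sec:SemiDirect}. Your explicit check $\sigma_s(D)\circ B(\vec v)\circ\sigma_s(D)^{-1}=B(D\vec v)$, identifying the twisting homomorphism as the tautological action of $\group{SO}(V_0,h)$ on $V_0$, is a correct and welcome addition that the paper leaves implicit.
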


The group of boosts, $\group{Gal}_B$, has a simple 
transitive action on $\som=V_1$; that is, for any 
ordered pair $(s_1,s_2)$ of states there is precisely one boost that maps $s_1$ to $s_2$: 
\begin{equation}
\label{eq:BoostGal-4}
B(\vec v)s_1=s_2 \quad\text{for}\quad 
\vec v=\vec v(s_1,s_2):=s_2-s_1. 
\end{equation}
Hence we call $s_2-s_1$ the link-velocity between 
$s_1$ and $s_2$. There is now no need to add the phrase 
``judged from $s_1$'' since there is exactly one 
boost linking $s_1$ with $s_2$, whereas in the SR case 
there were many, one for each additional reference
state $s$, the choice of which determined the set of 
boosts to choose the linking transformation from. 
Note that since the composition of pure boosts result in the simple addition of vectors in $V_0$, we have, e.g.,  
\begin{equation}
\label{eq:VelAddGalilei-1}
\vec v(s,s_3)=\vec v(s,s_1)+\vec v(s,s_2)
=(s_1-s)+(s_2-s)\,.
\end{equation}
This can be easily solved for $\vec v(s,s_2)$ 
or $\vec v(s,s_1)$, unlike the 
special-relativistic case, which could only be 
solved for $\vec\beta_2$ in an elementary fashion,
as shown in \eqref{eq:SR-VelAdd-Inv}. The 
corresponding solution in the Galilei-Newton
case is 
\begin{equation}
\label{eq:VelAddGalilei-2}
\vec v(s,s_2)=\vec v(s,s_3)-\vec v(s,s_1)
=(s_3-s)-(s_1-s)=s_3-s_1\,,
\end{equation}
where we used the laws of differences in affine 
space (compare the discussion below equations 
\eqref{eq:DefAffineSpace-3} of 
Appendix\,\ref{sec:AffineSpaces}). 
This shows explicitly how the dependence 
on $s$ drops out -- in contrast to the 
special-relativistic case!

\subsection{
Boost-Rotation decomposition for the 
Galilei group
\label{sec:BoostRotationDecGN}
}
The steps outlined in 
Section\,\ref{sec:BoostRotationNoPolar} for the 
SR case can now be translated almost literally 
to the Galilei-Newton case. Indeed, given $G\in\group{Gal}$, we do the following: 
\begin{enumerate}
\item
Choose a state $s\in\som$ and let $s_1:=Gs$.
\item
Let $B:=B(\vec v)$ as in \eqref{eq:BoostEmbeddingGN}
with $\vec v=s_1-s$, i.e.
\begin{equation}
\label{eq:BoostRotGN-1}
B:=B(s,s_1)=(s_1-s)\otimes\tau+\id_V\,.
\end{equation}
\item
Define 
\begin{equation}
\label{eq:BoostRotGN-2}
R:=B^{-1}\circ G\,.
\end{equation}
Note that 
$\bigl(B(\vec v)\bigr)^{-1}=B(-\vec v)$. 
Clearly, $R$ is again a Galilei transformation 
that fixes $s$: $Rs=s$. Hence 
\begin{equation}
\label{eq:BoostRotGN-3}
\mathrm{Stab}_s\bigl(\group{Gal}\bigr)
:=\bigl\{G\in\group{Gal}:Gs=s\bigr\}\,,
\end{equation} 
consist of Galilei transformations that map 
$(V_0,h)$ isometrically into itself preserving 
orientation. Hence  $\mathrm{Stab}_s\bigl(\group{Gal}\bigr)\cong\group{SO}(V_0,h)$, but for $s\ne s'$ 
the corresponding stabiliser subgroups are 
conjugate subgroups in $\group{Gal}$:
\begin{equation}
\label{eq:BoostRotGN-4}
\mathrm{Stab}_{s'}\bigl(\group{Gal}\bigr)
=
B(s,s')\circ\mathrm{Stab}_s\bigl(\group{Gal}\bigr)\circ[B(s,s')]^{-1}\,,
\end{equation} 
where $B(s,s')$ is as in 
\eqref{eq:BoostRotGN-1} for $s_1=s'$. 
\item
Rewrite \eqref{eq:BoostRotGN-2} as  
\begin{equation}
\label{eq:BoostRotGN-5}
G=B\circ R
\end{equation}
with $R$ -- but, unlike the SR case,  
not $B$ -- depending on $s$. 
\end{enumerate}

The decomposition \eqref{eq:BoostRotGN-5} is just 
the one that follows generally from the semi-direct 
product structure of $\Gal$, i.e. the splitting 
of \eqref{eq:ExactSequenceSplit}, as explained in 
Appendix\,\ref{sec:SemiDirect}. Again, as the 
splitting $\sigma_s$ in \eqref{eq:ExactSequenceSplit} 
depends on $s\in\som$, so does the decomposition
(as far is $R$ is concerned).   

Now, according to a general argument also recalled 
in Appendix\,\ref{sec:SemiDirect}, we consider for 
any $G\in\Gal$ the element 
$\sigma_s\bigl(\pi(G)\bigr)\in\Gal$. According to 
\eqref{eq:GalGroupHom} and \eqref{eq:GalRotEmbedding1-a}
we have 
\begin{equation}
\label{eq:BoostRotGN-6}
\begin{split}
\sigma_s\circ\pi(G)
&=\sigma_s(G\vert_{V_0})\\
&=s\otimes\tau+G\vert_{V_0}\circ\proj^\top_s\\
&=s\otimes\tau+G\circ\proj^\top_s\\
&=G-(Gs-s)\otimes\tau\,.
\end{split}
\end{equation}
Here we used \eqref{eq:GalGroupHom} in the first 
line, \eqref{eq:GalRotEmbedding1-a} in the second, 
the fact that we can simply drop the restriction to 
$V_0$ if we right-compose $G$ with the projection 
$P_s^\top$ onto $V_0$ in the third, and, finally,  
\eqref{eq:DefProjEndoGal-b} in the fourth. 
Hence, 
\begin{equation}
\label{eq:BoostRotGN-7}
\sigma_s\bigl(\pi(G)\bigr)\circ G^{-1}
=\id_V-(Gs-s)\otimes\tau
=B(-\vec v)\,,
\end{equation}
where for the fist equality we used 
$\tau\circ G^{-1}=\tau$ (obvious since $G^{-1}\in\Gal$) 
and \eqref{eq:BoostEmbeddingGN} for the second,
denoting $\vec v:=Gs-s$. Hence, since 
$\bigl(B(\vec v)\bigr)^{-1}=B(-\vec v)$, we get 
\eqref{eq:BoostRotGN-5} with 
\begin{equation}
\label{eq:BoostRotGN-8}
R:=\sigma_s\bigl(\pi(G)\bigr)
\quad\text{and}\quad
B=B(\vec v)=G\circ\Bigl(\sigma_s\bigl(\pi(G)\bigr)\Bigr)^{-1}\,.
\end{equation}
We leave it to the reader to find out whether there 
is a ($s$-dependent) euclidean metric $g_s$ on $V$ 
(built from $\tau$ and $h$ in an $s$-dependent 
fashion) with respect to which $G=B\circ R$ becomes 
a polar decomposition. 
\newpage 

\begin{appendices}

\section{Polar decomposition}
\label{sec:PolarDecomposition}
Let $V$ be an $n$-dimensional real vector 
space and $g:V\times V\rightarrow\reals$ 
a positive-definite symmetric bilinear 
form  (also known as Euclidean inner 
product). Let $\group{End}(V)$ denote 
the linear space of all endomorphisms 
$V\rightarrow V$ and $\group{GL}(V)$ 
the subset of all invertible elements 
in $\group{End}(V)$,
\begin{equation}
\label{eq:PD-1}
\group{GL}(V):=\{A\in\group{End}(V):\det(A)\ne 0\}\,.
\end{equation}
The set $\group{GL}(V)$ is a group under composition 
of maps and is called the \emph{General Linear Group}.

The Euclidean inner product $g$ defines a map  
\begin{equation}
\label{eq:PD-2}
\begin{split}
\dagger:\group{GL}(V)&\rightarrow\group{GL}(V)\\
A&\mapsto A^\dagger
\end{split}
\end{equation}
through 
\begin{equation}
\label{eq:PD-3}
g(Av,w)=g(v,A^\dagger W)\quad\forall\,v,w\in V\,.
\end{equation}
Symmetry of $g$ implies that $\dagger$ is an involution, 
i.e. $\dagger\circ\dagger=\id_{\group{GL}(V)}$, or 
$(A^\dagger)^\dagger=A$ for any $A\in\group{GL}(V)$. 
Also, $\dagger$ is a group anti-isomorphism of 
$\group{GL}(V)$, i.e. for any $A,B\in\group{GL}(V)$
we have\footnote{In this subsection we abbreviate the 
composition of maps $A,B$ in $\group{GL}(V)$ by 
juxtaposition, i.e. we write $AB$ instead of $A\circ B$. Accordingly, $B^2$ means $BB=B\circ B$, etc.} 
$(\id_V)^\dagger=\id_V$ and 
$(AB)^\dagger=B^\dagger A^\dagger$ (the ``anti-'' 
denoting the reversal of orders). 

The subset of fixed points in $\group{GL}(V)$ under 
$\dagger$ is 
\begin{equation}
\label{eq:PD-4}
\group{Sym}(V,g):=\{A\in\group{GL}(V):A=A^\dagger\}\,.
\end{equation}
It is called the set of \emph{symmetric elements}
in $\group{GL}(V)$. Note that 
$\group{Sym}(V,g)$ is not a subgroup, i.e. 
if $A=A^\dagger$ and $B=B^\dagger$ then 
$(AB)^\dagger=B^\dagger A^\dagger=BA$
which does not equal $AB$ unless $A$ and $B$ 
commute.  

Another subset is  
\begin{equation}
\label{eq:PD-5}
\group{Pos}(V,g):=
\bigl\{
A\in\group{GL}(V): g(v,Av)>0\,\,
\forall\,v\in V\backslash\{0\}
\bigr\}\,.
\end{equation} 
Note that $A\in\group{Pos}(V,g)$ implies $A^\dagger\in\group{Pos}(V,g)$, i.e. $\dagger$ maps $\group{Pos}(V,g)$ to itself. 

Of interest to us is the intersection
\begin{equation}
\label{eq:PD-6}
\group{PS}(V,g):=\group{Pos}(V,g)\cap\group{Sym}(V,g)
\end{equation}
of elements in $\group{GL}(V)$ which are at the 
same time \emph{positive} and \emph{symmetric}. 
A standard result is
\begin{lemma}
\label{thm:MatrixSquareRoot}
For any $A\in\group{PS}(V,g)$ there exists a 
unique $B\in\group{PS}(V,g)$ such that $A=B^2$,
called its ``square root''. $B$ is also 
denoted by $\sqrt{A}$ or $A^{1/2}$.
\end{lemma}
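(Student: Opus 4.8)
The plan is to reduce everything to the spectral theorem for $g$-self-adjoint operators. Since $A\in\group{PS}(V,g)$ is in particular symmetric, i.e.\ $A=A^\dagger$, the real spectral theorem yields a $g$-orthonormal basis $\{e_1,\dots,e_n\}$ of $V$ consisting of eigenvectors of $A$, say $Ae_i=\lambda_i e_i$ with $\lambda_i\in\reals$. Positivity of $A$ forces each eigenvalue to be strictly positive: evaluating the defining inequality of $\group{Pos}(V,g)$ on an eigenvector gives $0<g(e_i,Ae_i)=\lambda_i\,g(e_i,e_i)=\lambda_i$. This makes the construction of a square root immediate: I would define $B$ on the basis by $Be_i:=\sqrt{\lambda_i}\,e_i$ (positive root) and extend linearly.

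For the existence part there is then nothing left but verification. The operator $B$ is diagonal with real positive entries in a $g$-orthonormal basis, hence $B=B^\dagger\in\group{Sym}(V,g)$, and $g(e_i,Be_i)=\sqrt{\lambda_i}>0$ shows $B\in\group{Pos}(V,g)$, so $B\in\group{PS}(V,g)$; moreover $B^2e_i=\lambda_i e_i=Ae_i$ for every $i$ gives $B^2=A$.

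The substantive part is uniqueness, since a general element of $\group{GL}(V)$ admits many square roots; it is positivity together with symmetry that singles one out. Suppose $C\in\group{PS}(V,g)$ also satisfies $C^2=A$. The key observation is that $C$ commutes with $A$, because $CA=C\,C^2=C^2\,C=AC$. Consequently $C$ preserves each eigenspace $E_\lambda:=\kernel(A-\lambda\,\id_V)$ of $A$. Restricting to a fixed $E_\lambda$ with $\lambda>0$, the operator $C|_{E_\lambda}$ is again $g$-symmetric and positive and satisfies $(C|_{E_\lambda})^2=\lambda\,\id_{E_\lambda}$. Diagonalising $C|_{E_\lambda}$ by the spectral theorem, each of its eigenvalues $\mu$ is positive and obeys $\mu^2=\lambda$, hence $\mu=\sqrt{\lambda}$; therefore $C|_{E_\lambda}=\sqrt{\lambda}\,\id_{E_\lambda}$, which coincides with $B|_{E_\lambda}$. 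Since $V=\bigoplus_\lambda E_\lambda$, I conclude $C=B$.

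The one step requiring care — ensuring that the competing root $C$ respects the spectral decomposition of $A$ — is exactly the commutation relation $CA=AC$, which is automatic here and is what lets the problem split over the eigenspaces of $A$. I expect this to be the main obstacle in the sense that it is the only point where the argument could go wrong: without positivity the final identification $\mu=+\sqrt{\lambda}$ would fail (sign ambiguities would reappear), so positivity of $C$ is essential, and without commutativity $C$ need not preserve the $E_\lambda$ at all.
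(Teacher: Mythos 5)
Your proof is correct, and while the existence half coincides with the paper's (spectral theorem, positive eigenvalues, define $B$ on an eigenbasis by $e_i\mapsto\sqrt{\lambda_i}\,e_i$), your uniqueness argument takes a genuinely different route. The paper's mechanism is functional-calculus-flavoured: it constructs a Lagrange interpolation polynomial $p$ through the pairs $(\lambda_i,\sqrt{\lambda_i})$ so that $B=p(A)$; then for any competing root $C$ one has $B=p(C^2)$, hence $B$ and $C$ commute, hence they are \emph{simultaneously} diagonalisable, and comparing (positive) eigenvalues gives $B=C$. You instead observe the trivial commutation $CA=C\,C^2=C^2\,C=AC$, conclude that $C$ preserves each eigenspace $E_\lambda$ of $A$, and then pin down the restriction: $C\vert_{E_\lambda}$ is $g$-symmetric, positive, and squares to $\lambda\,\id_{E_\lambda}$, so by the spectral theorem all its eigenvalues equal $+\sqrt{\lambda}$ and $C\vert_{E_\lambda}=\sqrt{\lambda}\,\id_{E_\lambda}=B\vert_{E_\lambda}$. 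Your version is more self-contained -- it needs neither the interpolation construction nor the simultaneous-diagonalisation theorem, only the spectral theorem applied twice -- and it localises exactly where positivity enters (ruling out the sign ambiguity $\mu=-\sqrt{\lambda}$ on each eigenspace). What the paper's polynomial trick buys in exchange is the stronger and independently useful fact that $\sqrt{A}$ is a polynomial in $A$, so it automatically commutes with \emph{everything} that commutes with $A$; that observation has applications beyond this lemma, whereas your argument proves only the uniqueness statement itself.
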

\begin{proof}
Since $A$ is symmetric there exists an orthonormal 
basis $\{e_1,\dots,e_n\}$ of $V$ that diagonalises 
$A$; i.e. $g(e_a,e_b)=\delta_{ab}$
and $A(e_a)=\lambda_ae_a$ (no summation over $a$). 
Since $A$ is positive all eigenvalues $\lambda_a$ 
are positive. We define $B$ through 
$B(e_a)=\sqrt{\lambda_a}\,e_a$ (no summation over $a$), 
which clearly satisfies $A=B^2$, showing existence. 
To prove uniqueness, note that the $B$ just defined 
commutes with $A$, i.e. $AB=BA$. Hence there is a 
polynomial function $p$ of degree at most $n-1$ 
($n=\dim(V)$) such that $p(A)=B$. Indeed, if 
$(\lambda_1,\cdots,\lambda_k)$ with $k\leq n$ is 
the maximal number of pairwise distinct eigenvalues 
of $A$, we can choose the interpolating Lagrange 
polynomial for the $k$ pairs $\bigl\{(\lambda_1,\sqrt{\lambda_1}),\dots,(\lambda_k,\sqrt{\lambda_k})\bigr\}$.\footnote{%
The general construction of such a polynomial 
is as follows: let 
$\{(x_1,y_1),\cdots,(x_k,y_k)\}\subset\reals^2$ be
any $k$ points with pairwise different $x$-values; 
i.e. $i\ne j\Rightarrow x_i\ne x_j$. For each 
$1\leq j\leq k$ we define 
$\ell_j(x):=\prod_{i=1,i\ne j}^k (x-x_i)/(x_j-x_i)$, 
which is a polynomial of degree $(k-1)$ satisfying 
$\ell_j(x_i)=\delta_{ij}$. Hence 
$\ell:=\sum_{j=1}^k y_j\ell_j$ is a polynomial of 
degree at most $(k-1)$ satisfying $\ell(x_i)=y_i$
for each $i\in\{1,\cdots,k\}$.
It is called the \emph{Lagrange interpolating 
polynomial} for the given set of $k$ points in 
$\reals^2$.} 
Now, if $C$ is any other element in $\group{PS}(V,g)$ 
satisfying $A=C^2$, we 
have $B=p(C^2)$, implying that $B$ and $C$ commute 
and hence that there exists an orthonormal basis 
diagonalising  both of them simultaneously. 
$A=B^2=C^2$ then implies that the squares of the 
eigenvalues and hence the eigenvalues themselves 
coincide (since they are positive). This shows 
$B=C$ and hence uniqueness. 
\end{proof}

Finally we mention the \emph{orthogonal group} of $\group{O}(V,g)$,
which is a subgroup of $\group{GL}(V)$ and defined by 
\begin{equation}
\label{eq:PD-7}
\begin{split}
\group{O}(V,g)\,:=\,
&\{A\in\group{GL}(V):g(Av,Aw)=g(v,w)\,\forall v,w\in V\}\\
=\,&\{A\in\group{GL}(V):A^{-1}=A^\dagger\}\,.
\end{split}
\end{equation}    

We can now state the main theorem underlying polar 
decomposition: 
\begin{theorem}[Existence and uniqueness of polar decomposition]
\label{thm:PolarDecomposition}
For any $A\in\group{GL}(V)$ and given Euclidean inner 
product $g$ there exists a unique $B\in\group{PS}(V,g)$ 
and a unique $R\in\group{O}(V,g)$, such that
\begin{equation}
\label{eq:ThmPolarDec}
A=BR\,.
\end{equation} 
\end{theorem}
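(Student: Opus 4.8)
The plan is to reduce everything to the square-root lemma (Lemma~\ref{thm:MatrixSquareRoot}) by taking $B$ to be the positive-symmetric square root of $AA^\dagger$ and then verifying that the leftover factor $R:=B^{-1}A$ lands in $\group{O}(V,g)$.

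For existence I would first observe that $AA^\dagger\in\group{PS}(V,g)$. Symmetry is immediate from the anti-isomorphism property of $\dagger$ together with the fact that it is an involution: $(AA^\dagger)^\dagger=(A^\dagger)^\dagger A^\dagger=AA^\dagger$. Positivity follows because, for $v\neq 0$, one has $g(v,AA^\dagger v)=g(A^\dagger v,A^\dagger v)>0$, where invertibility of $A^\dagger$ guarantees $A^\dagger v\neq 0$ and positive-definiteness of $g$ gives the strict inequality. With $AA^\dagger$ positive and symmetric, Lemma~\ref{thm:MatrixSquareRoot} supplies a unique $B\in\group{PS}(V,g)$ with $B^2=AA^\dagger$; since $B$ is positive it is in particular invertible. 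Setting $R:=B^{-1}A$ then gives $A=BR$ by construction, so only membership $R\in\group{O}(V,g)$ remains. This I would check by computing $RR^\dagger=B^{-1}A\,A^\dagger (B^{-1})^\dagger=B^{-1}(AA^\dagger)B^{-1}=B^{-1}B^2B^{-1}=\id_V$, using $B^\dagger=B$ (hence $(B^{-1})^\dagger=B^{-1}$) and $AA^\dagger=B^2$. As $R$ is invertible, the identity $RR^\dagger=\id_V$ is equivalent to $R^{-1}=R^\dagger$, which is precisely the defining condition of $\group{O}(V,g)$.

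For uniqueness, suppose $A=BR=B'R'$ with $B,B'\in\group{PS}(V,g)$ and $R,R'\in\group{O}(V,g)$. Forming $AA^\dagger$ and using $RR^\dagger=\id_V$ together with $B^\dagger=B$ yields $AA^\dagger=B\,(RR^\dagger)\,B^\dagger=B^2$, and in the same way $AA^\dagger=B'^2$. The uniqueness clause of Lemma~\ref{thm:MatrixSquareRoot} then forces $B=B'$, whereupon $R=B^{-1}A=B'^{-1}A=R'$ follows immediately.

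There is no deep obstacle here; the single point that genuinely needs care is confirming that $AA^\dagger$ is positive \emph{definite} rather than merely positive semidefinite. This is exactly where invertibility of $A$ enters, and it is what makes $B$ invertible so that $R=B^{-1}A$ is well defined. I would also flag the order convention: choosing $B=\sqrt{AA^\dagger}$ places the symmetric factor on the left, giving $A=BR$; the mirror construction with $B'=\sqrt{A^\dagger A}$ would instead produce $A=R'B'$, and one should fix the stated order throughout to make the word \textbf{unique} unambiguous.
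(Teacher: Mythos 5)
Your proof is correct and follows essentially the same route as the paper: set $B:=\sqrt{AA^\dagger}$ via Lemma~\ref{thm:MatrixSquareRoot}, define $R:=B^{-1}A$, verify orthogonality, and reduce uniqueness to the uniqueness of the positive symmetric square root. Your uniqueness step (computing $AA^\dagger=B^2=B'^2$ directly from the two decompositions) is a slightly more direct path to the same conclusion than the paper's manipulation of $B_2^{-1}B_1=R_2R_1^{-1}\in\group{O}(V,g)$, and your explicit check that $AA^\dagger\in\group{PS}(V,g)$ fills in a step the paper leaves tacit.
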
 
\begin{proof}
For $A\in\group{GL}(V)$ have 
$AA^\dagger\in\group{PS}(V,g)$. Set $B:=\sqrt{AA^\dagger}$ and $R:=B^{-1}A$. Then $R^\dagger R=
A^\dagger B^{-1}B^{-1}A=A^\dagger (AA^\dagger)^{-1}A=\id_V$. Hence $R^{-1}=R^\dagger$ and 
$R\in\group{O}(V,g)$, showing existence. To show 
uniqueness assume $(B_1,R_1)$ and $(B_2,R_2)$ both 
satisfy $A=B_1 R_1=B_2R_2$. 
Then $B_2^{-1}B_1=R_2R_1^{-1}\in\group{O}(V,g)$ (since 
$\group{O}(V,g)$ is a group). Hence, since $B^\dagger_i=B_i$, 
$(B_2^{-1}B_1)^\dagger=(B_2^{-1}B_1)^{-1}$ is equivalent to 
$B_1B_2^{-1}=B_1^{-1}B_2$ or to $B_1^2=B_2^2$. Hence $B_1$
and $B_2$ are both square roots of the same element in 
$\group{PS}(V,g)$.  Lemma\,\ref{thm:MatrixSquareRoot} now 
implies $B_1=B_2$ and hence also $R_1=R_2$.    
\end{proof}

\begin{remark}
On the right-hand side of \eqref{eq:ThmPolarDec} we 
have put $B$ to the left of $R$. We could have chosen 
the reversed order and proven a corresponding 
existence and uniqueness result. Then $A=BR=R'B'$, 
with uniquely determined $B,B'\in\group{PS}(V,g)$ 
and $R,R'\in\group{O}(V,g)$. But since 
$BR=R(R^{-1} BR)$ with 
$R^{-1} BR=R^\dagger BR\in\group{PS}(V,g)$, uniqueness 
shows $R'=R$ and $B'=R^{-1}BR$; that is, the orthogonal 
factor in the polar decomposition does indeed not 
depend on the convention concerning the order of the 
factors, whereas the positive symmetric part does depend 
on it and varies by conjugation with an orthogonal transformation. 
\end{remark}

\begin{remark}
The way in which we calculated $B$ and $R$ from $A$ 
shows that both are continuous functions of $A$ on the
domain $\group{GL}(V)$. The inverse map 
$(B,R)\rightarrow A=:BR$ is trivially also continuous. 
Hence we have a topological equivalence\footnote{In fact, 
this topological equivalence is a $C^\infty$ 
diffeomorphism in the natural differentiable structures 
that these manifolds carry.}
\begin{equation}
\label{eq:PolDecTop}
\group{GL}(V)\cong \group{PS}(V,g)\times\group{O}(V,g)\,.
\end{equation}  
As $\group{PS}(V,g)$ is contractible (being an open 
convex cone in a vector space) and $\group{O}(V,g)$ 
is compact, all global topological features of 
$\group{GL}(V)$ reside entirely in the latter.   
\end{remark}

\begin{remark}
If $G\subset\group{GL}(V)$ is a subgroup, polar 
decomposition of $A\in G$ will result in some 
$B$ and $R$ in $\group{GL}(V)$ which need not necessarily 
again be elements of the subset $G$. Whether or not 
that will be the case may depend on the chosen $g$. 
However, as we have seen in Section\,\ref{sec:PolDecNonNat} 
for the Lorentz group, polar decomposition with respect 
to the Euclidean metric $g=\eta+2\sigma\otimes\sigma$ 
(compare \eqref{eq:DefStateOfMotion-5}) will again 
result in factors lying within the Lorentz group. 
\end{remark}

\section{%
Parallel transport along geodesics on state space}
\label{sec:ParallelTransport}
We consider state space, i.e. the 3-dimensional 
Riemannian manifold $(\som,h)$ as defined in 
Section\,\ref{sec:PolDecNonNat}. We recall that 
its metric $h$ is just that induced from the 
flat Minkowski metric $\eta$ of the ambient $V$ 
into which $\sigma$ is embedded as a spacelike 
hypersurface, i.e. $h=\eta\vert_{T\som}$.  
In the paragraph below equation\,\eqref{eq:RelBoost-5}
we made a statement equivalent to the following:
\begin{proposition}
\label{thm:ParallelTransport}
Let $s_1$ and $s_2$ be two (non-coinciding) points in 
$\som$ and $B(s_1,s_2)$ the unique 
boost in the plane $\Span\{s_1,s_2\}$ mapping 
$s_1$ to $s_2$. Let further $\gamma:\reals\ni[\sigma_1,\sigma_2]\rightarrow\som$,
$\gamma(\sigma_i)=s_i$,  be the unique geodesic 
on $\som$ with respect to the Levi-Civita 
connection for $h$. Then parallel transport 
of any $Y_1\in T_{s_1}\som$ along $\gamma$
results in $Y_2=B(s_1,s_2)Y_1\in T_{s_2}\som$.
\end{proposition}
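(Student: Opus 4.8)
The plan is to realise the geodesic as the orbit of $s_1$ under the one-parameter boost group and to show that the very same group parallel-transports tangent vectors along it. Let $\vec n\in T_{s_1}\som$ be the unit vector pointing from $s_1$ towards $s_2$, so that $\Span\{s_1,\vec n\}=\Span\{s_1,s_2\}$, set $L:=s_1\wedge\vec n\in\mathfrak{o}(V,\eta)$, and consider the one-parameter boost group $B_\rho:=\exp(\rho L)$. By \eqref{eq:BoostExponential1} each $B_\rho$ is a boost in the plane $\Span\{s_1,s_2\}$, and \eqref{eq:RelBoost-2a} gives $B_\rho s_1=\cosh(\rho)\,s_1+\sinh(\rho)\,\vec n=:\gamma(\rho)$. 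Thus the boost orbit of $s_1$ is exactly the curve $\gamma$; it is unit-speed (from $\eta(\gamma,\gamma)=-1$ one gets $\dot\gamma\cdot\dot\gamma=1$) and it is a geodesic because $\ddot\gamma(\rho)=\gamma(\rho)$ is $\eta$-orthogonal to $T_{\gamma(\rho)}\som$, the normal space of $\som$ at a point $p$ being $\Span\{p\}$ since $\som$ is a level set of $v\mapsto\eta(v,v)$. As parallel transport depends only on the path, I parametrise this geodesic by arc length $\rho$; at the endpoint $\rho=\rho_{12}$ with $\gamma(\rho_{12})=s_2$ uniqueness of the boost in that plane mapping $s_1$ to $s_2$ forces $B_{\rho_{12}}=B(s_1,s_2)$.

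The key tool is the Gauss formula: because the normal line $\Span\{p\}$ at $p\in\som$ is spanned by the timelike vector $p$ and is therefore $\eta$-non-degenerate, the Levi-Civita covariant derivative of $h=\eta\vert_{T\som}$ along $\gamma$ is the $\eta$-orthogonal projection $\proj^\perp_{\gamma(\rho)}$ of the ordinary flat ambient derivative onto $T_{\gamma(\rho)}\som=R_{\gamma(\rho)}$. Hence the parallel transport of $Y_1$ is characterised as the unique tangent field $Y(\rho)$ with $Y(0)=Y_1$ and $\proj^\perp_{\gamma(\rho)}\bigl(dY/d\rho\bigr)=0$, i.e. with $dY/d\rho$ pointing along $\gamma(\rho)$.

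I would then verify that the candidate $Y(\rho):=B_\rho Y_1$ satisfies exactly this. First it is tangent: since $B_\rho$ is a Lorentz isometry and $s_1\cdot Y_1=0$, we get $Y(\rho)\cdot\gamma(\rho)=(B_\rho Y_1)\cdot(B_\rho s_1)=Y_1\cdot s_1=0$, so $Y(\rho)\in T_{\gamma(\rho)}\som$. Next, using that $L$ commutes with $B_\rho$ and that as an endomorphism $L\,w=(\vec n\cdot w)\,s_1-(s_1\cdot w)\,\vec n$, I compute
\begin{equation}
\frac{dY}{d\rho}=L\,B_\rho Y_1=B_\rho\,(L\,Y_1)=B_\rho\bigl[(\vec n\cdot Y_1)\,s_1\bigr]=(\vec n\cdot Y_1)\,\gamma(\rho)\,,
\end{equation}
where $L\,Y_1=(\vec n\cdot Y_1)\,s_1$ because $s_1\cdot Y_1=0$. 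Thus $dY/d\rho$ is proportional to $\gamma(\rho)$ and is annihilated by $\proj^\perp_{\gamma(\rho)}$, so $Y(\rho)=B_\rho Y_1$ is indeed the parallel transport of $Y_1$ along $\gamma$. Evaluating at the endpoint gives $Y_2=B_{\rho_{12}}Y_1=B(s_1,s_2)Y_1$, as claimed.

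The computation itself is very short; the only point demanding care is the Gauss formula in a Lorentzian ambient space, i.e. checking that the normal line $\Span\{p\}$ is $\eta$-non-degenerate so that the orthogonal projection $\proj^\perp_p$ is well defined and the induced Levi-Civita connection genuinely equals the tangential part of the flat ambient derivative. Everything else reduces to the single identity $L\,Y_1=(\vec n\cdot Y_1)\,s_1$ together with the facts that $B_\rho$ both drags $s_1$ along the geodesic and preserves $\eta$.
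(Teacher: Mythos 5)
Your proof is correct, and it takes a genuinely different route from the paper's. The paper deliberately avoids computation (``rather than engaging in a direct calculation\dots'') and argues synthetically: it first shows that $\som\cap\Span\{s_1,s_2\}$ is a geodesic because it is the fixed-point set of the reflection in that plane (fixed-point sets of isometries being totally geodesic), then characterises parallel transport by the same tangential-projection criterion you use, observes that the component of a parallel field $\eta$-orthogonal to $\Span\{s_1,s_2\}$ is constant along the curve, and finally argues that the boost acts on the in-plane component exactly as parallel transport must (preserving length and tangency to $\som$). You instead make everything explicit: you realise the geodesic as the orbit $\rho\mapsto\exp(\rho\,s_1\wedge\vec n)\,s_1$ of the one-parameter boost group (exploiting the paper's own \eqref{eq:BoostExponential1}), and then verify by the one-line Lie-algebra computation $L\,Y_1=(\vec n\cdot Y_1)\,s_1$ that $Y(\rho)=B_\rho Y_1$ solves the parallel-transport equation $\proj^\perp_{\gamma(\rho)}(dY/d\rho)=0$, identifying $B(s_1,s_2)$ at the endpoint by the uniqueness of the in-plane boost. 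Your version buys a stronger and more quantitative statement -- the whole family $B_\rho$ implements parallel transport along the entire geodesic, and as a by-product you obtain the arc-length parametrisation $\gamma(\rho)=\cosh(\rho)s_1+\sinh(\rho)\vec n$ of geodesics in $\som$ -- at the cost of invoking the exponential formula. The paper's version buys independence from any formula for the boost, and its reflection argument is the one that generalises directly to fixed-point sets of isometries in other symmetric spaces. Both proofs share the same foundational ingredient, the Gauss-formula characterisation of the induced Levi-Civita connection on a non-degenerate hypersurface, and you rightly flag that its validity here rests on the normal line $\Span\{p\}$ being timelike, hence $\eta$-non-degenerate.
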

\begin{proof}
Rather than engaging in a direct calculation, we 
will here follow a more geometric reasoning which 
we partition into the following four steps.
\begin{enumerate}
\item
The timelike 2-plane $\Span\{s,s_1\}$ intersects 
$\som$ in a geodesic. This is because each 
point of this intersection is a fixed point of 
the isometry resulting from the reflection in $V$ 
at $\Span\{s,s_1\}$. But fixed-point sets of 
isometries are totally geodesic (meaning that 
any geodesic starting in and tangential to that
set remains within it). In particular, if the 
fixed-point set is a curve, it must itself be 
a geodesic.
\item
By its very definition, the Levi-Civita covariant 
derivative intrinsic to an isometrically embedded 
hypersurface in Euclidean or 
semi-Euclidean\footnote{In the semi-Euclidean 
case, the hypersurface is assumed to be nowhere 
lightlike.} space is given by the extrinsic 
covariant derivative (in flat embedding space) 
followed by orthogonal projection tangent 
to the hypersurface; see, e.g., the final (5th) 
edition of Weyl's classic text  
``Raum-Zeit-Materie'', which is again available  
\citep[\S\,12]{Weyl:RZM2025}.\footnote{This \S\,12 
has been added by Weyl in the transition from 
the 4th to the 5th (and last) edition. Hence it is 
not contained in the only existing english translation
\citep{Weyl:RZM4-Engl}, which is from the 4th edition.}
In our case this means that if $Y$ is tangent to 
$\som$, its covariant derivative along a curve 
$\gamma$ in $\som$ is given by $\nabla_{\dot\gamma}Y
=\proj_s^\perp(\partial_{\dot\gamma} Y)$,
where $\partial_{\dot\gamma}$ is the Levi-Civita 
covariant derivative in $(V,\eta)$ (i.e.
``ordinary'' flat derivative) and $\proj_s^\perp$
the projector in $V$ perpendicular to $s$ 
(compare \eqref{eq:DefProjEndo-b}). Note that 
since $\dot\gamma$ is tangent to $\som$, we do 
not need to know $Y$ outside $\som$ in order to 
compute $\partial_{\dot\gamma}Y$. It follows 
that $Y$ is parallely transported along $\gamma$ 
within $\som$ iff at each parameter value $\sigma$
the ambient (flat) covariant derivative  
$\partial_{\dot\gamma}Y$ in $V$ is proportional to 
the normal to $\som$ at that point, which is just 
$\gamma(\sigma)\in V$. This we write as
\begin{equation}
\label{eq:ParallelTransport}
\nabla_{\dot\gamma}Y=0\,\Leftrightarrow\,
\partial_{\dot\gamma}Y\propto\gamma\,.
\end{equation}
\item
This implies that if 
$[\sigma_1,\sigma_2]\ni\sigma\mapsto Y(\sigma)\in V$ 
is a one-parameter family of vectors in $V$ that 
obey the law \eqref{eq:ParallelTransport} of parallel 
transport, any two $Y(\sigma)$ and $Y(\sigma')$
for $\sigma,\sigma'\in[\sigma_1,\sigma_2]$ differ 
only by vectors in the plane $\Span\{\gamma(\sigma),\gamma(\sigma')\}=\Span\{s_1,s_2\}$. In particular, 
the component $Y_\perp(\sigma)\in V$ in the 
orthogonal complement of $\Span\{s_1,s_2\}$ is 
constant (independent of $\sigma$).  
\item
By definition, the boost $B(s,s_1)$ acts in the plane 
$\Span\{s,s_1\}$ and pointwise fixes its orthogonal 
complement in $V$. Hence, applied to any 
$Y(\sigma_1)\in T_{s_1}\som$, it keeps its component 
$Y_\perp$ orthogonal to $\Span\{s,s_1\}$ fixed and 
maps the component $Y_\Vert(s)$ within $\Span\{s,s_1\}$ 
in such a way so as to preserve its length and keeping 
it tangent to $\som$ and hence tangent to the 
intersection $\som\cap\Span\{s,s_1\}$, which is just 
the image of $\gamma$. But this is precisely what 
parallel propagation does, which completes the proof.
\end{enumerate}
\end{proof}

\section{Semi-direct products of groups}
\label{sec:SemiDirect}
Semi-direct products are special examples
of \emph{group extensions}. Let us therefore 
first explain this more general concept. 

Given two groups, called $N$ and $Q$, we 
will combine the two into a new group,
called  $G$, such that $G$ contains a 
unique normal subgroup $N'$ isomorphic 
to $N$ with quotient $G/N'$ isomorphic 
to $Q$. Formally this is expressed by 
arranging the triple $N$-$G$-$Q$ into a so-called 
\emph{short exact sequence}:\footnote{The 
``short'' refers to the fact that three is 
the smallest number of groups for which an
exact sequence makes a non-trivial statement. 
Two groups related by an exact sequence are 
merely isomorphic.}
\begin{equation}
\label{eq:GroupsShortExactSequence}
\begin{tikzcd}
\{e\}\arrow[r] 
& N\arrow[r,tail, "i"] 
& G\arrow[r,two heads, "\pi"]
& Q\arrow[r]
&\{e\}\,.	
\end{tikzcd}
\end{equation}
Here $\{e\}$ stands for the trivial group 
consisting only of the identity element $e$, 
and all arrows denote group homomorphisms.
That this sequence be \emph{exact} means that 
at each node $(N,G,Q)$ the image of the arriving 
map equals the kernel of the departing one. 
Clearly, the image of the first map 
from $\{e\}$ to $N$ must be the neutral element 
in $N$ so that exactness implies that the 
following map, $i$, is injective, i.e. an 
embedding of $N$ into $G$. This is indicated 
by the tailed arrow from $N$ to $G$.  Also, 
as the last map from $Q$ to $\{e\}$ has all 
of $Q$ in its kernel, exactness implies that 
the map $\pi$ is surjective. This 
we indicated by a double-headed 
arrow from $G$ to $Q$. 

In group-theorists' terminology, there are 
two ways to express the relation 
\eqref{eq:GroupsShortExactSequence} 
for $(N,G,Q)$: one either says that $G$ 
is an \emph{upward extension} of $Q$ by 
$N$ or, alternatively, that $G$ is a 
\emph{downward extension} of $N$ by 
$Q$; see, e.g., \citep[p.\,XX]{Atlas:FiniteGroups}.
Note that simply speaking of ``an extension''
of one group by another is ambiguous, as it 
does not tell which of the two is going to 
have a normal embedding in the group to be 
constructed. 

Now, a \emph{semi-direct product} is a special 
case of  \eqref{eq:GroupsShortExactSequence},
which is characterised by the existence of 
an injective homomorphism (an embedding)
\begin{equation}
\label{eq:GroupsSplittingHomomorphism}	
\sigma:Q\rightarrow G
\quad\text{such that}\quad
\pi\circ\sigma=\id_Q\,.
\end{equation}
In that case we write instead of \eqref{eq:GroupsShortExactSequence}
\begin{equation}
\label{eq:GroupsShortExactSequenceSplit}
\begin{tikzcd}
\{e\}\arrow[r] 
& N\arrow[r,tail, "i"] 
& G\arrow[r,bend left=20, two heads, "\pi"]
& Q\arrow[l,bend left=20, tail, "\sigma"]\arrow[r]
&\{e\}	
\end{tikzcd}
\end{equation}
and say that the sequence \emph{splits}. The map 
$\sigma$ \eqref{eq:GroupsSplittingHomomorphism}
is then called a \emph{splitting homomorphism}. 

\begin{definition}[first definition of semi-direct product]
\label{def:SDP-1}%
A group $G$ is called the \textbf{semi-direct product of 
groups $N$ and $Q$} if they can be arranged in a 
splitting short exact sequence \eqref{eq:GroupsShortExactSequenceSplit}. 
We write
\begin{equation}
\label{eq:Def_SDP-1}
G=N\rtimes_\sigma Q\,.
\end{equation}
\end{definition}

The images of $i$ and $\sigma$ define subgroups 
of $G$ which we denote by  
\begin{equation}
\label{eq:GroupsImages}
N':=i(N)\subset G\quad
\text{and}\quad
Q':=\sigma(Q)\subset G\,.
\end{equation}
Since $N'$ is the kernel of $\pi$ it is 
clearly normal in $G$. If $q\in Q$ is 
different from the neutral element then 
$q':=\sigma(q)\notin N'$ since $\pi(q')$
must be $q$ and hence $q'$ cannot be in 
the kernel of $\pi$. 
Hence 
\begin{equation}
\label{eq:GroupsIntersection}
N'\cap Q'=\{e\}\,,
\end{equation}
where $e\in G$ denotes the neutral element 
in $G$. Moreover, any $g\in G$ is the unique 
product of an element $n'\in N'$ and an element 
$q'\in Q'$. To see this, consider the 
homomorphism
\begin{equation}
\label{eq:SDP-Projection}
p:=\sigma\circ\pi: G\rightarrow Q'
\,,\quad g\mapsto q':=\sigma\bigl(\pi(g)\bigr). 
\end{equation}         
It satisfies $p\vert_{Q'}=\id_{Q'}$ and 
$p\circ p=p$ due to 
\eqref{eq:GroupsSplittingHomomorphism}; i.e. 
it is a projection homomorphism from $G$ onto 
the subgroup $Q'$. Now, for any $g\in G$, 
define $q'\in Q'$ as above and 
$n':=gq'^{-1}\in N'$ (which is indeed in 
$N'=\kernel(\pi)$ since $\pi(q')=\pi(g)$). 
This decomposition is unique, for if 
$(n_1',q_1')$ and $(n_2',q_2')$ both satisfy
$g=n'_1q'_1=n'_2q'_2$ it follows that 
${n'}_2^{-1}{n'}_1=q'_2{q'}_1^{-1}$. But the 
left-hand side is in $N'$ and the right-hand 
side in $Q'$, so that \eqref{eq:GroupsIntersection}
implies that both sides must equal $e$, hence  
$n'_1=n'_2$ and $q'_1=q'_2$.   
This discussion gives rise to two alternative 
definitions of semi-direct products: 

 \begin{definition}[second definition of semi-direct product]
\label{def:SDP-2}%
A group $G$ is called the \textbf{semi-direct product of 
its subgroups $N'$ and $Q'$} if the set 
$N'Q':=\{n'q':n'\in N'\,,q'\in Q'\}$ 
equals $G$, $N'\cap Q'=\{e\}$, and $N'$ is 
normal in $G$. 
\end{definition}

\begin{definition}[third definition of semi-direct product]
\label{def:SDP-3}%
A group $G$ is called the \textbf{semi-direct product of 
its subgroups $N'$ and $Q'$} if there is a  
projection homomorphism $p:G\rightarrow Q'$ with 
kernel $N'$ (projection meaning: 
$p\vert_{Q'}=\id_{Q'}$ and $p\circ p=p$). 
\end{definition}  

Composing $\sigma$ with the map 
$\mathrm{Ad}_{n'}:G\rightarrow G$, 
$g\mapsto \mathrm{Ad_{n'}(g)}:=n'gn'^{-1}$,
where $n'$ is some element from $N'$, 
clearly gives another splitting homomorphism 
$\sigma':=\mathrm{Ad}_{n'}\circ\sigma$ satisfying 
$\pi\circ\sigma'=\id_Q $ since $n'$ is in 
the kernel of $\pi$. Hence, in general, neither 
$\sigma$ nor $Q'$ are unique. For example, 
the group $\group{E}(3)$ of euclidean motions
is a semi-direct product of translations 
$N=\reals^3$ with rotations $Q=\group{SO}(3)$. 
Any splitting embedding 
$\sigma:\group{SO(3)}\rightarrow\group{E}(3)$ is 
characterised by the point $o\in\reals^3$
(the ``origin'') about which the elements 
of $\group{SO(3)}$ rotate, i.e. which is 
fixed under the action of all elements of 
$\group{SO(3)}$. And any two $\group{SO(3)}$ 
subgroups in $\group{E}(3)$ differ by a 
conjugation with the translation that maps the 
origin of the first into the origin of the 
second rotation group.  

Let us consider Definition\,\ref{def:SDP-2}. 
The multiplication of $g_1=n'_1q'_1$ with 
$g_2=n'_2q'_2$ is 
\begin{subequations}
\label{eq:SDP-Multiplication-1}
\begin{equation}
\label{eq:SDP-Multiplication-1a}
g_1g_2
=n'_1q'_1\,n'_2q'_2
=n'_1\,q'_1n'_2{q'}_1^{-1}\,q'_1q'_2
=:n'_3q'_3\,,
\end{equation}
where 
\begin{equation}
\label{eq:SDP-Multiplication-1b}
n'_3:= n'_1(q'_1n'_2{q'}_1^{-1})
=n'_1 \mathrm{Ad}_{q'_1}(n'_2)
\quad\text{and}\quad
q'_3=q'_1q'_2\,.
\end{equation} 
\end{subequations} 
Note that $n'_3\in N'$ since $N'$ is normal
and that, for any $q'\in Q'$, 
$\mathrm{Ad}_{q'}\vert_{N'}\in\group{Aut}(N')$.
In fact, it is obvious that the map 
$Q'\rightarrow\group{Aut}(N')$, 
$q'\mapsto\mathrm{Ad}_{q'}\vert_{N'}$ is a 
homomorphism. Identifying $N'$ with $N$ 
via the homomorphism $i:N\rightarrow G$
which is an isomorphism onto its image $N'$,
and likewise identifying $Q$ with $Q'$ 
via the homomorphism $\sigma: Q\rightarrow G$,
which is an isomorphism onto its image $Q'$,
we get yet another definition of a semi-direct
product that, like Definition\,\ref{def:SDP-1},
is in terms of $N$ and $Q$:
\begin{definition}[fourth definition of semi-direct product]
\label{def:SDP-4}%
Let $N$ and $Q$ be groups. Let further
\begin{equation}
\label{eq:DefSDP-4-1}
\alpha: Q\rightarrow\group{Aut}(N)\,,\quad
q\mapsto\alpha_q
\end{equation}
be a homomorphism so that $\alpha_q=\id_N$ if $q=e_Q$
(neutral element in $Q$) and 
$\alpha_{q_1}\circ\alpha_{q_2}=\alpha_{q_1q_2}$. 
Then the set $G:=N\times Q$ is made into a group
by defining the multiplication through
\begin{equation}
\label{eq:DefSDP-4-2}
(n_1,q_1)(n_2,q_2)
=\bigl(n_1\alpha_{q_1}(n_2)\,,\,q_1q_2\bigr)\,,
\end{equation} 
which is called the \textbf{semi-direct product of $N$ 
with $Q$ relative to $\alpha$} and denoted by 
\begin{equation}
\label{eq:DefSDP-4-3}
G=N\rtimes_\alpha Q\,.
\end{equation}
\end{definition}

We note the following more or less immediate 
consequences of this definition: 
\begin{enumerate}
\item 
If $e_N$ and $e_Q$ denote the neutral elements 
of $N$ and $Q$, respectively, the neutral element 
of $G$ is 
\begin{equation}
\label{eq:DefSDP-4-4}
e_G=(e_N\,,\,e_Q)
\end{equation}
and the inverse element of $(n,q)\in G$ 
is 
\begin{equation}
\label{eq:DefSDP-4-5}
\bigl(n,q\bigr)^{-1}
=\bigl(\alpha_{q^{-1}}(n)\,,\,q^{-1}\bigr)\,.
\end{equation} 
\item
It is immediate from \eqref{eq:DefSDP-4-2}
that the map 
\begin{equation}
\label{eq:DefSDP-4-6}
\pi: G\rightarrow Q\,,\quad
(n,q)\mapsto\pi(n,q):=q
\end{equation} 
is a surjective homomorphism whereas the projection 
onto the first factor $G\rightarrow N$, 
$(n,q)\mapsto n$ fails to be a homomorphism unless
$\alpha$ is trivial, that is, $\alpha_q=\id_N$ for all
$q\in Q$, in which case $G=N\times Q$ is a proper 
direct product of groups.  
\item
$N$ and $Q$ can be embedded into $G$ via the 
injective homomorphisms 
\begin{subequations}
\label{eq:DefSDP-4-7}
\begin{alignat}{3}
\label{eq:DefSDP-4-7a}
i&: N\rightarrow G\,,\quad 
&&n\mapsto i(n)
&&:=(n,e_Q)\,,\\
\label{eq:DefSDP-4-7b}
\sigma &: Q\rightarrow G\,, \quad
&&q\mapsto \sigma(n)
&&:=(e_N,q)\,,
\end{alignat} 
\end{subequations}
the images of which are  
\begin{subequations}
\label{eq:DefSDP-4-8}
\begin{alignat}{3}
\label{eq:DefSDP-4-8a}
i(N)&=:N'&&=\{(n,e_Q):n\in N\}&&\subset G\,,\\
\label{eq:DefSDP-4-8b}
\sigma(Q)&=:Q'&&=\{(e_N,q):q\in Q\}&&\subset G\,.
\end{alignat} 
\end{subequations}
Obviously $N'=\kernel(\pi)$ and $\pi\circ\sigma=\id_Q$.
Therefore, $G$ defined in \eqref{eq:DefSDP-4-3}, together 
with its subgroups defined in \eqref{eq:DefSDP-4-8} 
and maps defined in \eqref{eq:DefSDP-4-6} and \eqref{eq:DefSDP-4-7}, are related by a short exact 
sequence \eqref{eq:GroupsShortExactSequenceSplit},
leading us back to Definition\,\ref{def:SDP-1}.
 
\item
From \eqref{eq:DefSDP-4-2} have $(h,e_Q)(e_H,q)=(h,q)$
and hence
\begin{equation}
\label{eq:DefSDP-4-9}
H'\cap Q'=e_G
\quad\text{and}\quad
H'Q'=G\,,
\end{equation} 
leading us back to Definition\,\ref{def:SDP-2}.
Similarly for Definition\,\ref{def:SDP-3}, 
since the maps $\pi$ from \eqref{eq:DefSDP-4-6}
and $\sigma$ from \eqref{eq:DefSDP-4-7b} combine
to $p:=\sigma\circ\pi$, which is the required 
projection homomorphism $G\rightarrow Q'$.  
\item
Definition\,\ref{def:SDP-4} shows that any 
possible homomorphism \eqref{eq:DefSDP-4-1} 
from $Q$ into $\group{Aut}(N)$ can be realised 
in a semi-direct product, in which the 
automorphisms of $N'$ (which are usually outer) 
then appear as restrictions of inner automorphism 
of $G$ to its normal subgroup $N'$, as in 
\eqref{eq:SDP-Multiplication-1b}:
\begin{equation}
\label{eq:DefSDP-4-10}
\begin{split}
q'_1n'_2{q'}_1^{-1}
&=(e_N,q_1)(n_2,e_Q)(e_N,q_1)^{-1}\\
&=\bigl(\alpha_{q_1}(n_2),q_1\bigr)
\bigl(\alpha_{q_1^{-1}}(e_N),q_1^{-1}\bigr)\\
&=\bigl(\alpha_{q_1}(n_2),e_Q\bigr)\,,
\end{split}
\end{equation} 
using \eqref{eq:DefSDP-4-2}, \eqref{eq:DefSDP-4-5},
and that $\alpha_q(e_N)=e_N$ for all $q\in Q$.
\end{enumerate}
\begin{remark}
\label{rem:SemiDirectProd}
Note that Definitions\,\ref{def:SDP-1} 
and \ref{def:SDP-4} use groups $N$ and 
$Q$ to construct a new group $G$, whereas 
 Definitions\,\ref{def:SDP-2} 
and \ref{def:SDP-3} consider $G$ as given 
an characterise it in terms of subgroups
$N'\subset G$ and $Q'\subset G$. Therefore, 
Definitions\,\ref{def:SDP-1} 
and \ref{def:SDP-4} are often said to define 
an \emph{outer} and  Definitions\,\ref{def:SDP-2} 
and \ref{def:SDP-3} an \emph{inner} 
semi-direct product. 
\end{remark}

\section{Affine structures}
\label{sec:AffineStructures}
In this appendix we review the notion of 
affine spaces which underlies Minkowski 
spacetime in Special Relativity and also 
Galilei-Newton spacetime. One motivation 
to do so is to stress and make precise the 
somewhat subtle difference between affine 
spaces (which are homogeneous) and vector 
spaces (which are \emph{not} homogeneous).
Most likely, much of what is being said 
here will be known to the reader in one 
form or another, though perhaps it is useful 
to recall the essential structural properties 
in a way adapted to the language used in 
this paper.

In the main text we characterised Minkowski 
spacetime by the following 6-tuple 
$(M,V,+,\eta,o_V,o_T)$, where $(M,V,+)$ is a 
4-dimensional real affine space, $\eta\in V^*\otimes V^*$ 
is a symmetric, non-degenerate, bilinear 
form on $V$ of signature $(-,+,+,+)$, $o_V$ 
denotes an overall orientation of $V$, and,
finally, $o_T$ denotes a time-orientation of 
$(V,\eta)$. Similarly we characterised 
Galilei-Newton spacetime as a 6-tuple 
$(M,V,+,\tau,h,o_V)$, where $\tau\in V^*$ is 
an oriented time-distance function and 
$h\in[\kernel(\tau)]^*\otimes[\kernel(\tau)]^*$ 
is a Euclidean metric (symmetric positive definite
bilinear form) on $\kernel(\tau)$. Note that 
$\kernel(\tau)$ receives an orientation from 
$o_V$ and $\tau$ (the latter defines a 
time-orientation). 

Now, the purpose of this appendix is to recall the 
precise meaning of $(M,V,+)$ and also explain the 
notion of affine automorphisms and affine bases.  
There will be no need to restrict the dimension $n$ 
which we keep general. Also, we could have 
easily generalised the underlying number field 
to $\complex$, but for definiteness we stick with 
$\reals$. Hence we focus on the notion of a real 
affine space, the definition of which we now wish 
to carefully develop.  

\subsection{Affine spaces}
\label{sec:AffineSpaces}
Very roughly speaking, an affine space is like 
a vector space with slightly less structure. 
More precisely, it is the inhomogeneity caused 
by the distinction of a preferred vector, 
namely the null vector, that will be erased in 
the transition from a vector- to the associated 
affine space. The proper mathematical way to do 
this is to define an affine space by a set on 
which a vector space acts simply transitively. 
In order to appreciate the precise meaning of 
these words we will first introduce the notion 
of a ``group-action'' and mention some of the 
properties it may have. This will then allow 
us to give a lucid and concise definition of 
an affine space and appreciate its special 
features.

\begin{definition}[Groups actions and their properties]
\label{def:GroupAction}
Let $G$ be a group and $M$ a set. The set of 
bijections of $M$ onto itself will be  denoted 
by $\mathrm{Bij}(M)$ and is itself a group, 
with group multiplication being given by composition 
of maps and the neutral element being the 
identity map of $M$. An \textbf{action} of $G$ 
on $M$ is then simply a group homomorphism 
\begin{equation}
\label{eq:GroupAction}
\phi:G\rightarrow\mathrm{Bij}(M)\,,\quad
g\mapsto\phi_g\,.
\end{equation}
Recall that the homomorphism-property means 
that the map $\phi$ satisfies the two 
conditions ($e\in G$ being the neutral 
element)
\begin{equation}
\label{eq:DefAction}
\phi_e=\id_M\quad\text{and}\quad
\phi_g\circ\phi_{g'}=\phi_{gg'}\quad (\forall g,g'\in G)\,.
\end{equation}  
The action is called \textbf{effective} if $\phi$ 
is injective, i.e. each $g\ne e$ moves some $m$. 
Non effective actions are not really interesting, 
since a non effective action of $G$ may just 
be considered as an effective action of 
$G':=G/\kernel(\phi)$. An action is called 
\textbf{free} if each $g\ne e$ moves any $m$ 
($\phi_g$ has no fixed points). 
Clearly, being free implies being effective, 
but the converse is generally false, except
in very special cases, of which a relevant one 
will be given below. 
The set of points in $M$ reachable from 
a given point $m\in M$ by applying all $g\in G$ 
is called the \textbf{orbit of $G$ through $m$}:
\begin{equation}
\label{eq:DefOrbit}
\mathrm{Orb}_m(G):=\{\phi_g(m):g\in G\}\,.
\end{equation}  
The fact that the orbits result from a
group action implies that ``lying in the 
same orbit'' is an equivalence relation 
on $M$.  Hence $M$ is partitioned by (is 
the disjoint union of) orbits. The set of 
elements $g$ that fix a given $m\in M$ 
form a subgroup on $G$ called the 
\textbf{stabiliser subgroup of $G$ at $m$}:
\begin{equation}
\label{eq:DefStab}
\mathrm{Stab}_m(G):=\{g\in G:\phi_g(m)=m\}\,.
\end{equation}  
Property \eqref{eq:DefAction} implies that 
stabiliser subgroups of points in the same 
orbit are conjugate: 
\begin{equation}
\label{eq:StabConj}
\mathrm{Stab}_{\phi_g(m)}(G)=
g\,\mathrm{Stab}_m(G)\,g^{-1}\,.
\end{equation}  
The action is called \textbf{transitive} 
if $\mathrm{Orb}_m(G)=M$ for some (and hence 
all) $m$. Property \eqref{eq:StabConj} then implies 
that  all $\mathrm{Stab}_m(G)$ are conjugate. 
As for abelian groups conjugation is the 
identity, this says that for transitively acting 
abelian groups all stabiliser subgroups coincide. 
The action is called \textbf{simply transitive} if 
in addition to transitivity  
$\mathrm{Stab}_m(G)=\{e\}$. In this case 
any two points $m,m'\in M$ are connected by 
a unique $g\in G$. 
\end{definition}
\begin{proposition}
Simple transitivity is equivalent to 
transitivity and freeness. For abelian groups
this remains true if freeness is replaced 
by effectiveness. 
\end{proposition}
\begin{proof}
Simple transitivity trivially implies 
transitivity. Hence all stabiliser 
subgroups are conjugate. But in that case 
freeness is equivalent to each of them 
being equal to $\{e\}$, which is equivalent 
to simplicity. Now, if $G$ is abelian 
and acting transitively,  
effectiveness implies simplicity 
(the converse being trivial). Indeed, 
being effective means that for each  
$g\ne e$ there is some $m\in M$ such that 
$g$ moves $m$, i.e. $g\ne\mathrm{Stab}_m(G)$. 
Hence this is true for all $m$, since for 
transitive abelian groups all stabiliser 
subgroups coincide. Hence all 
$\mathrm{Stab}_m(G)$ are trivial.
\end{proof}  

Based on the foregoing discussion we now define
affine spaces as follows:   
\begin{definition}
\label{def:AffineSpace}
A \textbf{real affine space of dimension $n$} 
is a triple $(M,V,\phi)$, where $M$ is 
a set, $V$ is an $n$-dimensional real 
vector space, and $\phi$ is an effective and 
transitive (hence simply-transitive) 
action of $V$ on $M$. Here $V$ is 
considered as abelian group with group 
multiplication given by vector addition 
and the neutral element equal to $0$, so that 
\eqref{eq:GroupAction} and \eqref{eq:DefAction} 
now read
\begin{equation}
\label{eq:DefAffineSpace-1}	
\phi: V\rightarrow\mathrm{Bij}(M)\,,\quad
v\mapsto \phi_v
\end{equation}
and 
\begin{subequations}
\label{eq:DefAffineSpace-2}
\begin{alignat}{2}
\label{eq:DefAffineSpace-2a}	
&\phi_{v=0}&&\,=\,\mathrm{id}_V\,,\\
\label{eq:DefAffineSpace-2b}
&\phi_{v'+v}&&\,=\,\phi_{v'}\circ\phi_v\,.
\end{alignat}
\end{subequations}
\end{definition}

It is general practice, although this
may be confusing at first, to denote 
the action of $V$ on $M$ by the very 
same $(+)$-symbol as vector addition, i.e 
to write $m+v$ instead 
of $\phi_v(m)$, and hence to 
eliminate all explicit reference to 
$\phi$. For a given group action $\phi$,
this does not lead to ambiguities since 
whether a ``$+$'' means group action 
on $M$ or vector addition in $V$ is 
uniquely determined by whether the 
``$+$'' stands between an element of 
$M$ and an element of $V$ or between two 
elements of $V$, respectively. 
Equations \eqref{eq:DefAffineSpace-2} then 
assumes the simple form 
\begin{subequations}
\label{eq:DefAffineSpace-3}
\begin{alignat}{2}
\label{eq:DefAffineSpace-3a}	
&m+0=m\,,\\
\label{eq:DefAffineSpace-3b}
&(m+v)+v'=m+(v+v')\,.
\end{alignat}
\end{subequations}
Note that in \eqref{eq:DefAffineSpace-3b} both ``$+$'' on the left-hand side are 
group actions whereas the first ``$+$'' 
on the right-hand side is a group action 
and the second is vector addition. 

Since for any given two $m,m'\in M$ there 
exists a unique $v\in V$ so that 
$m'=m+v$, we may write $m'-m=v$.
Then, trivially, $m'=m+(m'-m)$. There are other ``obvious'' relations, like 
$(m'-p)+(p-m)=m'-m$ or $m+(m'-p)=m'+(m-p)$,
valid for all $m',m,p$ in $M$. Likewise, 
one has $(m'-m)=-(m-m')$, where here 
the two ``$-$'' on the right-hand side 
have two different meanings: In the 
bracket it denotes the difference 
operation in $M$, in front of the bracket 
scalar multiplication with 
$(-1)\in\reals$ in $V$. Other obvious 
notational simplifications apply, like 
$m-v:=m+(-v)$.  

As an alternative to Definition\,\ref{def:AffineSpace} 
above, affine spaces can be defined via the difference 
map just introduced:
\label{eq:DefAffineSpace-4}		
\begin{equation}
\Delta:M\times M\rightarrow V\,,\quad
(m',m)\mapsto\Delta(m',m)=m'-m\,.
\end{equation}
\begin{definition}[alternative to 
Definition\,\ref{def:AffineSpace}]
\label{def:AffineSpace-alt}
A \textbf{real affine space of dimension $n$} is a triple 
$(M,V,\Delta)$, where $M$ is a set, $V$ is an 
$n$-dimensional real vector space, and 
$\Delta:M\times M\rightarrow V$ is a map that 
satisfies the following two conditions
for any $o,m,m',m''$ in $V$:
\begin{subequations}
\label{eq:DefAffineSpace-5}
\begin{alignat}{1}
\label{eq:DefAffineSpace-5a}	
&\Delta_o:M\rightarrow V\,,\quad 
m\mapsto\Delta_o(m):=\Delta(m,o)\quad \text{is a bijection}\,,\\
\label{eq:DefAffineSpace-5b}
&\Delta(m'',m')+\Delta(m',m)
=\Delta(m'',m)\,.
\end{alignat}
\end{subequations}
\end{definition}
That the existence of $(M,V,\phi)$ implies 
$(M,V,\Delta)$ has been shown above. Conversely, 
given $(M,V,\Delta)$ with $\Delta$ satisfying the 
axioms above, we deduce a simply-transitive action 
$\phi$ of $V$ on $M$ by setting $\phi_v(m):=\Delta_m^{-1}(v)$. 
Indeed, from  \eqref{eq:DefAffineSpace-5b} we get 
for $m''=m'=m$ that $\Delta(m,m)=0$, 
which is equivalent to $\Delta^{-1}_m(0)=m$
for all $m$, which in turn implies  \eqref{eq:DefAffineSpace-2a}. 
The second condition  \eqref{eq:DefAffineSpace-2b} can be deduced as follows: \eqref{eq:DefAffineSpace-5b} 
is equivalent to 
$\Delta_{m'}(m'')+\Delta_m(m')=\Delta_m(m'')$, which in turn is equivalent to   
$\Delta_m^{-1}\bigl(\Delta_{m'}(m'')+\Delta_m(m')\bigr)=m''$ for all 
$m'',m',m$. Setting $\Delta_{m'}(m'')=:v'$
and $\Delta_m(m')=:v$ in order to replace 
in that equation $m''$ and $m'$ by 
$v'$ and $v$, this is equivalent to 
$\Delta_m^{-1}(v'+v)=\Delta^{-1}_{m'}(v')$. Finally, 
setting $m'=\Delta_m^{-1}(v)$ on the right-hand side, 
this is seen to be equivalent to 
$\phi_{v'+v}(m)=\phi_{v'}\bigl(\phi_v(m)\bigr)$ for 
all  $m,v,v'$, and hence to \eqref{eq:DefAffineSpace-2b}.

Summing up we can say that in an affine space we can add vectors 
to points and take differences of points according to the 
rules given above. However, points cannot be added.  To be 
sure, any point  $o\in M$ defines a bijection 
$\phi_o:M\rightarrow V$ via  $m\mapsto \phi_o(m):=(m-o)$. 
But the linear structure thereby pulled back to $M$, which 
is given by $m+m':=\phi^{-1}_o\bigl(\phi_o(m)+\phi_o(m')\bigr)=o+(m-o)+(m'-o)$, depends on the choice of $o$. In fact, 
through an appropriate choice of $o$ \emph{any} point 
$p$ of $M$ can be obtained as result of such an 
``addition'' of $m$ and $m'$: just choose 
$o=m+(m'-p)=m'+(m-p)$. 

\subsection{Affine maps and groups}
\label{sec:AffineMapsGroups}
\begin{definition}
\label{def:AffineMaps}
Let $A=(M,V,\phi)$ and 	$A'=(M',V',\phi')$ be two 
affine spaces. An \textbf{affine map} from $A$ to 
$A'$ consists of a pair $(F,f)$ of maps,
\begin{subequations}
\label{eq:AffineMapsDef}
\begin{equation}
\label{eq:AffineMapsDef-a}
F: M\rightarrow M'\,,\qquad
f: V\rightarrow V'\quad\text{linear}\,,	
\end{equation}
such that 
\begin{equation}
\label{eq:AffineMapsDef-b}
F\circ\phi=\phi'\circ (f\times F)\,.
\end{equation}
Here we explicitly displayed the action $\phi$ of 
$V$ on $M$ and likewise $\phi'$ of $V'$ on $M'$,
which we consider as maps 
$\phi: V\times M\rightarrow M$, $(v,m)\mapsto\phi_v(m)$,
and 
$\phi': V'\times M'\rightarrow M'$, 
$(v',m')\mapsto\phi'_{v'}(m')$,
respectively.
In our simplified notation, in which both actions
are written by a common $(+)$-sign, this reads
\begin{equation}
\label{eq:AffineMapsDef-c}
F(p+v)=F(p)+f(v)\,,
\end{equation}
\end{subequations}
for all $p\in M$ and all $v\in V$.
\end{definition}

Before we proceed we explicitly check that 
\eqref{eq:AffineMapsDef-c} makes sense, i.e. 
leads to the same result independent of how 
we represent a point $p\in M$ as ``sum'' of 
a point with a vector. So let 
$p=p_1+v_1=p_2+v_2$;
then $F(p_1+v_1)=F(p_1)+f(v_1)$ and $F(p_2+v_2)=F(p_2)+f(v_2)$. But $F(p_2)=F(p_1+(p_2-p_1))=F(p_1)+f(p_2-p_1)=F(p_1)+f(v_1-v_2)$
and linearity of $f$ shows that indeed both sides
are equal. 
Note that condition \eqref{eq:AffineMapsDef-c} 
says that once we know the map $f$ and the 
value $F(q)$ of the map $F$ for 
a single point $q$, we know the map $F$, i.e. 
$F(p)$ for any $p$, namely $F\bigl(p=q+(p-q)\bigr)=F(q)+f(p-q)$.
 
In view of the alternative definition of affine spaces 
in terms of the difference map $\Delta$, we could also 
have given a corresponding  alternative definition of 
an affine map:

\begin{definition}[alternative to 
Definition\,\ref{def:AffineMaps}]
\label{def:AffineMaps-alt}
Let $A=(M,V,\Delta)$ and 	$A'=(M',V',\Delta')$ 
be two affine spaces. An \textbf{affine map} from 
$A$ to $A'$ consists of a pair $(F,f)$ of maps,
\begin{subequations}
\label{eq:AffineMapsDef-Alt}
\begin{equation}
\label{eq:AffineMapsDef-Alt-a}
F: M\rightarrow M'\,,\qquad
f: V\rightarrow V'\quad\text{linear}\,,	
\end{equation}
such that 
\begin{equation}
\label{eq:AffineMapsDef_Alt-b}
\Delta'\circ (F\times F) = f\circ\Delta\,.
\end{equation}
In our simplified notation, in which the 
difference-map is written by a $(-)$-sign, 
this reads
\begin{equation}
\label{eq:AffineMapsDef-Alt-c}
\Delta'\bigl(F(p)-F(q)\bigr)=
f(p-q)\,,
\end{equation}
\end{subequations}
for all $q,p\in M$.
\end{definition}

\begin{definition}
\label{def:AffineAutomorphims-1}
An affine map between affine spaces is called 
an \textbf{affine isomorphism} iff the map $F:M\rightarrow M'$ is a bijection.
This is equivalent so the requirement for 
 $f:V\rightarrow V'$ to be a bijection (and hence
a vector-space isomorphism).
An affine isomorhism of an affine space to itself 
is called an \textbf{affine automorphism}. 
The set of affine automorphisms of an affine 
space $A$ forms a group under composition which 
is called the \textbf{general affine group}, 
denoted by $\group{GA}(A)$.   
\end{definition}

In order to understand the group-theoretic 
structure of $\group{GA}(A)$ we first need a 
label-set that faithfully labels each of 
its elements. This can be obtained in the 
following way: Choose a reference point 
$o\in M$ and use it to label $F$ by the pair 
$(v,f)$, where $v\in V$ is defined by $F(o)=o+v$, 
or $o:=F(o)-o$. The action of $F$ on a general 
point $p$ is then 
\begin{equation}
\label{eq:AffineAutomorphisms-2}	
F(p)=F\bigl(o+(p-o)\bigr)=o+v+f(p-o)\,.
\end{equation}
Note that the argument of $f$ is always 
the difference between the argument $p$ and 
the chosen base-point $o$, so that no $f$ 
moves $o$. In this way we identify $\group{GL}(V)$
with that subgroup of $\group{GA}(A)$ which 
stabilises $o$. We will see below how this 
identification behaves under changes of $o$. 

Suppose now that we have two affine automorphisms
$F_1$ and $F_2$, which we label by 
$(v_1,f_1)$ and $(v_2,f_2)$ as just 
explained, with reference to the same point 
$o$. The action of the composition 
$F_1\circ F_2$ on a general point can then 
be calculated:
\begin{equation}
\label{eq:AffineMapsComposition-1}	
F_1\circ F_2(p)
=
F_1\bigl(o+v_2+f_2(p-o)\bigr)
=o+v_1+f_1(v_2)+f_1\circ f_2(p-o)\,.
\end{equation}
Form that we infer the multiplication law 
for $\group{GA}(A)$ in the chosen 
parametrisation to be
\begin{equation}
\label{eq:AffineMapsComposition-2}	
(v_1,f_1)(v_2,f_2)
=\bigl(v_1+f_1(v_2)\,,\,f_1f_2\bigr)\,,
\end{equation}
where compositions of maps are now written by 
simple juxtapositions in order to stress that
it is group multiplication (in $\group{GA}(A)$
and $\group{GL}(V)$). We infer that 
$\group{GA}(A)$ is isomorphic to the semi-direct 
product of the abelian group $V$ with 
$\group{GL}(V)$:
\begin{equation}
\label{eq:AffineMapsSemiDirectProd-1}	
GA(A)\simeq V\rtimes\group{GL}(V)\,.
\end{equation}
The homomorphism 
$\alpha:\group{GL}(V)\rightarrow\group{Aut}(V)$
that we need according to Definition\,\ref{def:SDP-4} 
in order to define a semi-direct 
product is just the identity if we use the 
isomorphism $\group{Aut}(V)\simeq\group{GL}(V)$. 

However, it is important to keep in mind that 
this isomorphism depends on the chosen basepoint 
$o$. Had we chosen another one, say $o'=o+w$, 
then 
\begin{equation}
\label{eq:AffineMapsSemiDirectProd-2}
\begin{split}
F(p)
&=o+v+f(p-o)\\
&=o'+v+(o-o')+f\bigl(p-o'+(o'-o)\bigr)\\
&=o'+v+\Bigl[-w+f\bigl(p-o'+w\bigr)\Bigr]\\
&=o'+v+\bigl[T_{-w}\circ f\circ T_w\bigr](p-o')\,,
\end{split}
\end{equation}
where $T_w:V\rightarrow V$, $v\mapsto v+w$ 
denotes the translation-action of $V$ onto 
itself. This means that the same affine map 
$F$ that with respect to the basepoint $o$
is represented by the pair $(v,f)$ will be 
represented with respect to $o'=o+w$ by 
the pair $(v',f')$ where $v'=v$ and 
$f'=T_{-w}\circ f\circ T_w$. This is 
intuitively obvious, since the representation 
\eqref{eq:AffineMapsSemiDirectProd-1} of 
$\group{GA}(A)$ by the semi-direct product 
selects amongst all subgroups $\group{GL}(V)$
that one which fixes the selected base-point $o$. 
In changing $o$ to $o'$ we also change the subgroup 
in $\group{GA}(A)$ from the stabiliser subgroup 
of $o$ to that of $o'=o+w$. These two stabiliser 
subgroups are clearly related by conjugation with 
a translation.

An invariant characterisation of of the affine group 
$\group{GA}(A)$ in terms of $V$  and $\group{GL}(V)$ 
would be so say that $\group{GA}(A)$ is a 
``splitting downward extension of $\group{GL}(V)$ 
by $V$'' or, equivalently, a  ``splitting upward 
extension of $V$ by $\group{GL}(V)$)'' (compare
Appendix\,\ref{sec:SemiDirect}), which in any 
case means that we have a splitting short exact 
sequence with normal subgroup $V$ (translations) and 
quotient group $Q=\group{GL}(V)$:  

\begin{equation}
\label{eq:AffineAutomExactSequence-1}
\begin{tikzcd}
\{e\}\arrow[r] 
& V\arrow[r,tail, "i"] 
& \group{GA}(A)\arrow[r,bend left=15, two heads, "\pi"]
& \group{GL}(V)\arrow[l,bend left=15, tail, "\sigma_o"]\arrow[r]
&\{e\}\,.	
\end{tikzcd}
\end{equation}
The splitting homomorphism (embedding) 
$\sigma_o:\group{GL}(V)\rightarrow\group{GA}(A)$
depends on the chosen base-point $o\in M$ and is 
not natural (i.e. no one is distinguished). And two 
different ones, say $\sigma_{o}$ and $\sigma_{o'}$
with $o'-o=w$,  result in different image-subgroups
in $\group{GA}(A)$ which are conjugate by an 
element in the image of $i$ (the embedding of $V$
into $\group{GA}(A)$):
\begin{equation}
\label{eq:AffineAutomExactSequence-2}
\sigma_{o+w}\Bigl(\group{GL}(V)\Bigr)
=i_{w}\circ 
\sigma_{o}\Bigl(\group{GL}(V)\Bigr)
\circ i_{(-w)}\,.
\end{equation} 
This means that whereas it makes sense to 
speak of \emph{the} translations, since 
they form a normal subgroup, it does 
\emph{not} make sense to speak of \emph{the}
subgroup of homogeneous transformations:
$\group{GL}(V)$ is a \emph{quotient}- 
not a sub-group. It may be considered as
a subgroup, though not uniquely. There are 
as many \emph{different} subgroups isomorphic
to  $\group{GL}(V)$ in $\group{GA}(A)$
as there are points in $V$. None of them 
is intrinsically distinguished. 

\subsection{Affine bases and charts}
\label{sec:AffineBasesAndCharts}
\begin{definition}
\label{def:AffineBasis}
Let $A=(M,V,\phi)$ be n affine space.
An \textbf{affine basis} $B$ of $A$ is a tuple 
$B=(o,b)$, where $o\in M$ and $b:=\{e_1,\cdots,e_n\}$
is a basis for $V$. We note that the basis $b$ of $V$
uniquely determines a dual basis $b^*=\{\theta^1,\cdots,\theta^n\}$ of $V^*$, the dual vector space to $V$.
It satisfies $\theta^a(e_b)=\delta^a_b$. 
\end{definition}

\begin{remark}
\label{rem:AffineBasis}
Note that an affine basis determines $(n+1)$ points 
$\{o,o+e_1,\cdots,o+e_n\}$ which are not contained 
in any $m$-dimensional affine subspace with $m<n$. 
Hence we may equivalently characterise an affine 
basis of an $n$-dimensional affine space by $(n+1)$
points $\{p_0,p_1,\cdots,p_n\}\subset M$ which are
independent in the sense of not being contained 
in any lower-dimensional affine subspace.
It is then obviously true that the differences 
$e_a:=p_a-p_0$ form a basis for $V$, independent
of which point $p_0$ is selected from the set of 
$(n+1)$ points.
\end{remark}

\begin{definition}
\label{def:AffineChart}
An \textbf{affine chart} is a bijection 
$\phi_B:M\rightarrow\reals^n$ induced by 
an affine basis $B$ in the following way:
the value of the $a$-th component in 
$\reals^n$ of $\phi_B(p)$ is 
\begin{subequations}
\label{eq:DefAffineChart}	
\begin{equation}
\label{eq:DefAffineChart-a}	
\bigl[\phi_B(p)\bigr]^a:=\theta^a(p-o)\,.	
\end{equation}
Its inverse is (summation convention)
\begin{equation}
\label{eq:DefAffineChart-b}	
\phi^{-1}_B(x^1,\cdots, x^n)=o+x^ae_a\,.
\end{equation}
\end{subequations}
\end{definition} 

Suppose now that we have two affine bases, 
$B=(o,b)$ and $B'=(o',b')$, with 
$b:=\{e_1,\cdots e_n\}$ and $b':=\{e'_1,\cdots e'_n\}$,
and corresponding dual bases 
$\{\theta^1,\cdots \theta^n\}$ and
$\{\theta'^1,\cdots \theta'^n\}$, respectively. 
Then 
\begin{equation}
\label{eq:AffineBasesTransition}
o'=o+a\,,\quad
e'_b=L\indices{^a_b}e_a\,,\quad
\theta'^a=[L^{-1}]\indices{^a_b}\theta^b\,.	
\end{equation}
\begin{definition}
\label{def:TransitionFunction}	
The \textbf{transition function} between the 
affine charts $\phi_B$ and $\phi_{B'}$ is the 
bijection 
\begin{subequations}
\label{eq:DefTransitionFunction}		
\begin{equation}
\label{eq:DefTransitionFunction-a}	
\phi_{BB'}:=\phi_B\circ\phi^{-1}_{B'}:\reals^n\rightarrow\reals^n\,.
\end{equation}
Using \eqref{eq:DefAffineChart}
and \eqref{eq:AffineBasesTransition}, 
the $a$-th component in $\reals^n$ of the 
transition function is given by
\begin{equation}
\label{eq:DefTransitionFunction-b}	
\phi^a_{BB'}(x^1,\cdots,x^n)=
a^a+L\indices{^a_b}x^b\,,
\end{equation}
\end{subequations}
where $a^a:=\theta^a(a)$ is the $a$-th component 
of $a$ in the basis $\{e_1,\cdots,e_n\}$ of $V$.
\end{definition}

If we write the coordinates of the point $p$ 
in the chart $B'$ by $x'^a(p)$ and that in the 
chart $B$ by $x^a(p)$,
\eqref{eq:DefTransitionFunction-b} reads
\begin{equation}
\label{eq:DefTransitionFunction-2}	
x^a(p)=
a^a+L\indices{^a_b}x'^b(p)\,.
\end{equation}
To be distinguished from that relation between 
the coordinates of one and the same point $p$
in two different charts is the coordinate 
representation of an affine map in a single chart
$B$. Given an affine automorphism $(F,f)$ of 
$A=(M,V,\phi)$ and an affine basis 
$B=(o,\{e_1,\cdots,e_n\})$, such that 
$F(o)=o+a$ and $f(e_b)=L\indices{^a_b}e_a$, then 
the coordinate representation of the affine map is 
\begin{subequations}
\label{eq:AffineMapCoordRep}
\begin{equation}
\label{eq:AffineMapCoordRep-a}
F_B:=\phi_B\circ F\circ\phi_B^{-1}:
 \reals^n\rightarrow\reals^n\,.	
\end{equation}
Using \eqref{eq:DefAffineChart-b}, \eqref{eq:AffineMapsDef-c}, and \eqref{eq:DefAffineChart-a} this leads to 
\begin{equation}
\label{eq:AffineMapCoordRep-b}
F^a_B(x^1,\cdots,x^n)=a^a+L\indices{^a_b}x^b\,.
\end{equation}
\end{subequations}
This has the same analytic form as \eqref{eq:DefTransitionFunction-b}, but the 
meaning is clearly different. To state it once 
more: 
\begin{remark}
\label{rem:ActiveVersusPassive}	
Whereas \eqref{eq:AffineMapCoordRep-b}  
relates the different coordinates of two 
\emph{different points} in $M$ 
with respect to the \emph{same chart}, \eqref{eq:DefTransitionFunction-b} relates the 
different coordinates of the \emph{same point}
in two \emph{different charts}. The former is 
sometimes called a \emph{passive} and the latter 
an \emph{active} coordinate transformation.   
\end{remark}
  	
\begin{remark}
\label{rem:AfiineChartsDiffStructure}	
Finally we point out that the existence of preferred 
charts endows affine spaces with a differentiable 
and even analytic structure. A function 
$f:M\rightarrow\reals$ is called differentiable/analytic, 
if $f_B:=f\circ\phi_B^{-1}:\reals^n\rightarrow\reals$
is. This is independent of what chart $B$ we use, 
since obviously $f_{B'}=f_B\circ\phi_{BB'}$.
As $\phi_{BB'}$ is, according to \eqref{eq:DefTransitionFunction-b}, an 
affine-linear map, hence in particular analytic, 
we infer that $f_{B'}$ is smooth/analytic iff
$f_B$ is. 
\end{remark}
\end{appendices}

\newpage
\bibliographystyle{apsr}
\bibliography{VelAd}

@book{Atlas:FiniteGroups,
author    = {Conway, John H. and others},
title     = {{ATLAS} of Finite Groups},
publisher = {Oxford University Press},                 
address   = {Oxford},
year      = {1985}}

@article{Benz:2000,
author  = {Benz, Walter},
title   = {A Characterization of Relativistic Addition},
journal = {Abhandlungen aus dem Mathematischen 
            Seminar der Universit{\"a}t Hamburg},
volume  = {70},
year    = {2000},
pages   = {251-258},
doi     = {10.1007/BF02940916},
url     = {\url{http://doi.org/10.1007/BF02940916}}
}

@article{Benz:2002,
author  = {Benz, Walter},
title   = {A Common Characterization of Classsical and 
           Relativistic Addition},
journal = {Journal of Geometry},
volume  = {74},
year    = {2002},
pages   = {38-43},
doi     = {10.1007/PL00012536},
url     = {\url{http://doi.org/10.1007/PL00012536}}
}

@article{Celakoska:2008,
author  = {Celakoska, Emilija G.},
title   = {On isometry links between 
           4-vectors of velocity},
journal = {Novi Sad Journal of Mathematics},
year    = {2008}, 
volume  = {38},
number  = {3},
pages   = {165-172}}

@article{Celakoska.EtAl:2015,
author  = {Celakoska, Emilija and 
           Chakmakov, Dushan and 
           Petrushevski, Mirko},
title   = {On parameterization of 
           {Lorentz} boost links},
journal = {International Journal 
           of Contemporary
           Mathematical Sciences},
year    = {2015}, 
volume  = {10},
number  = {2},
pages   = {85-90},
doi     = {10.12988/ijcms.2015.513},
url     = {\url{http://doi.org/10.12988/ijcms.2015.513}}
}

@article{Dombrowski.Horneffer:1964,
author = {Dombrowski, Hainz Dieter 
          and Horneffer, Klaus},
title = {{Die Differentialgeometrie des Galileischen 
          Relativit{\"a}tsprinzips}},
journal = {Mathematische Zeitschrift},
year    = {1964},
volume = {86},
number = {4},
pages  = {291-311},
doi = {10.1007/BF01110404},
url     = {\url{http://doi.org/10.1007/BF01110404}}
}

@article{Einstein-SRT:1905,
author  = {Einstein, Albert},
title   = {{Zur Elektrodynamik bewegter K{\"o}rper}},
journal = {Annalen der Physik},
year    = {1905},
volume  = {323},
number  = {13},
pages   = {891-921},
doi     = {10.1002/andp.19053221004},
url     = {\url{http://doi.org/10.1002/andp.19053221004}}
}

@article{Friedrichs:1928,
author = {Friedrichs, Kurt},
title = {{Eine invariante Formulierung 
          des Newtonschen Gravitationsgesetzes 
          und des Grenz{\"u}berganges vom
          Einsteinschen zum Newtonschen Gesetz}},
    journal = {Mathematische Annalen},
    year = {1928},
volume = {98},
number = {1},
pages = {566-575},
doi = {10.1007/BF01451608},
url     = {\url{http://doi.org/10.1007/BF01451608}}
}

@article{Giulini:2002a,
author  = {Giulini, Domenico},
title   = {Uniqueness of simultaneity},
journal = {Britisch Journal for the Philosophy of Science},
year    = {2001}, 
volume  = {52},
number  = {4},
pages   = {651-670},
doi = {10.1093/bjps/52.4.651},
url     = {\url{http://doi.org/10.1093/bjps/52.4.651}}
}

@incollection{Giulini:2006b,
author  = {Giulini, Domenico},
title   = {Algebraic and geometric structures 
           in {Special Relativity}},
editor  = {L\"ammerzahl, Claus and Ehlers, J\"urgen},
booktitle = {Special Relativity: 
             Will it Survive the Next 101 Years?},
pages   = {45-111},
year    = {2006},
publisher = {Springer Verlag},
address = {Berlin},
series  = {Lecture Notes in Physics},
volume  = {702},
doi     = {10.1007/3-540-34523-X_4},
url     = {\url{http://doi.org/10.1007/3-540-34523-X_4}}
}

@book{Gourgoulhon:SR,
author    = {Gourgoulhon, {\'E}ric},
title     = {Special Relativity in General Frames},
series    = {Graduate Texts in Physics},
publisher = {Springer Verlag},
address   = {Berlin},
year      = {2013},
doi       = {10.1007/978-3-642-37276-6},
url     = {\url{http://doi.org/10.1007/978-3-642-37276-6}}
}

@book{Greub:LinearAlgebra,
author    = {Greub, Werner H.},
title     = {Linear Algebra},
series     = {Graduate Texts in Mathematics},
volume     = {23},
year       = {1975},
edition    = {fourth},
publisher = {Springer Verlag}, 
address   = {New York},
doi       = {10.1007/978-1-4684-9446-4},
url     = {\url{http://doi.org/10.1007/978-1-4684-9446-4}}
}

@book{Laue-SRT:1911,
author    = {Laue, Max von},
title     = {{Das Relativit{\"a}tsprinzip}},
edition    = {1st},
series     = {Die Wissenschaft},
volume     = {38},
publisher = {Verlag von Friedrich Vieweg \& Sohn}, 
address   = {Braunschweig},
year      = {1911}
}

@book{Jacobson:BasicAlgebraI,
author    = {Jacobson, Nathan},
title     = {Basic Algebra I},
edition   = {second},
publisher = {W.H. Freeman and Co.}, 
address   = {New York},
year      = 1985}

@incollection{Koczan:2023,
author = {Koczan, Grzegorz Marcin},
title  = {{Relative binary and ternary 
           4D velocities in the Special
           Relativity in terms of 
           manifestly covariant Lorentz
           transformation}},
booktitle = {Scientific Legacy of 
             Professor Zbigniew Oziewicz},
editor = {Colin Garci­a, Hilda Maria and
           Cruz Guzm{\'a}n, Jos{\'e} de Jes{\'u}s and
           Kauffmann, Luis H. and 
           Makaruk, Hanna},
series = {Series on Knots and Everything},
volume = {75},
pages = {169-206},
year = {2023},
publisher = {World Scientific},
address = {Singapore},
doi = {10.1142/9789811271151_0009},
url     = {\url{http://doi.org/10.1142/9789811271151_0009}}
}

@article{Kuenzle:1972,
author  = {K{\"u}nzle, Hans-Peter},
title   = {{Galilei and Lorentz structures on 
            space-time: comparison of the 
            corresponding geometry and physics}},
journal = {Annales de l'Institut Henri 
           Poincar{\'e} (Section A)},
year    = {1972}, 
volume  = {17},
number  = {4},
pages   = {337-362},
url     = {\url{http://www.numdam.org/item?id=AIHPA_1972__17_4_337_0}}}

@article{Matolcsi.Goher:2001,
author = {Matolcsi, Tam{\'a}s and 
          Goher, A.},
title = {{Spacetime without reference frames: 
          An application to the velocity 
          addition paradox}},
journal = {Studies in History and Philosophy 
           of Science Part B: Studies in
           History and Philosophy of 
           Modern Physics},
year   = {2001},
volume = {32},
number = {1},
pages  = {83-99},
doi    = {10.1016/S1355-2198(00)00037-X},
url     = {\url{http://doi.org/10.1016/S1355-2198(00)00037-X}}
}

@article{Macfarlane:1962,
author = {Macfarlane, Alan J.},
title = {{On the Restricted Lorentz Group and 
          Groups Homomorphically Related to It}},
journal = {Journal of Mathematical Physics},
year = {1962},
volume = {3},
number = {6},
pages = {1116-1129},
doi = {10.1063/1.1703854},
url     = {\url{http://doi.org/10.1063/1.1703854}}
}

@article{Matolcsi:2005,
author = {Matolcsi, Tam{\'a}s and 
          Matolcsi, M{\'a}t{\'e}},
title = {{Thomas rotation and Thomas 
         precession}},
journal = {International Journal of
           Theoretical Physics},
year   = {2005},
volume = {44},
number = {1},
pages  = {63-77},
doi    = {10.1007/s10773-005-1437-y},
url     = {\url{http://doi.org/10.1007/s10773-005-1437-y}}
}

@article{Mocanu:1986,
author= {Mocanu, Constantin I.},
title = {{Some difficulties within the framework of relativistic electrodynamics}},
journal = {Archiv f{\"u}r Elektrotechnik},
year = {1986},
volume = {69},
pages  = {97-110},
doi = {10.1007/BF01574845},
url    = {\url{http://doi.org/10.1007/BF01574845}}
}

@book{O-Neill:SRG,
author = {O'Neill, Barrett},
title = {Semi-Riemannian Geometry. 
         With Applications to Relativity},
series = {Pure and Applied Mathematics},
year = {1983},
publisher = {Academic Press},
address = {New York}
}

@misc{Oziewicz2006,
author = {Oziewicz, Zbigniew},
title  ={{The Lorentz boost-link is not 
         unique. Relative velocity as a  
         morphism in a connected groupoid
         category of null objects}}, 
year={2006},
eprint={math-ph/0608062},
archivePrefix={arXiv},
primaryClass={math-ph},
doi = {10.48550/arXiv.math-ph/0608062},
url    = {\url{http://doi.org/10.48550/arXiv.math-ph/0608062}}
}

@article{Oziewicz2007,
author = {Oziewicz, Zbigniew},
title = {Relativity groupoid instead of 
         relativity group},
journal = {International Journal of Geometric
           Methods in Modern Physics},
year = {2007},
volume = {04},
number = {05},
pages = {739-749},
doi = {10.1142/S0219887807002260},
url    = {\url{http://doi.org/10.1142/S0219887807002260}}
}

@misc{Oziewicz.Page2011,
author = {Oziewicz, Zbigniew and 
          Page, William S.},
title  = {Concepts of relative velocity},
year = {2011},
doi = {10.48550/arXiv.1104.0684},
url    = {\url{http://doi.org/10.48550/arXiv.1104.0684}}
}

@misc{Oziewicz2011,
author={Oziewicz, Zbigniew},
title = {{Ternary relative velocity; 
         astonishing conflict of the 
         Lorentz group with relativity}}, 
year={2011},
eprint={1104.0682},
archivePrefix={arXiv},
primaryClass={physics.gen-ph},
doi = {10.48550/arXiv.1104.0682},
url    = {\url{http://doi.org/10.48550/arXiv.1104.0682}}
}

@book{Robb:OptGeom,
author    = {Robb, Alfred A.},
title     = {Optical Geometry of Motion: A New View of the 
             Theory of Relativity},
publisher = {W. Heffer \& Sons Ltd.}, 
address   = {Cambridge},
year      = {1911}}

@book{Silberstein:1914,
author = {Silberstein, Ludwik},
title  = {The Theory of Relativity},
year   = {1914},
publisher = {MacMillan and Co. Limited},
address = {London},
note = {Based on lectures delivered at 
        University College London in     
        1912-13}}

@book{Streater.Wightman:AllThat,
author    = {Streater, Raymond F. and Wightman, Arthur Strong},
title     = {PCT, Spin and Statistics, and All That},
publisher = {W.A. Benjamin, Inc.}, 
address   = {New York},
year      = {1964}}

@article{Ungar:1988,
author  = {Ungar, Abraham Albert},
title   = {{Thomas} Rotation and the Parametrization of the {Lorentz} 
           Transformation Group},
journal = {Foundations of Physics Letters},
year    = 1988, 
volume  = 1,
number  = 1,
pages   = {57-89},
doi     = {10.1007/BF00661317},
url    = {\url{http://doi.org/10.1007/BF00661317}}
}

@article{Ungar:1989,
author = {Ungar, Abraham},
title = {{The relativistic velocity composition 
         paradox and the Thomas rotation}},
journal = {Foundations of Physics},
year    = {1989},
volume  = {19},
number  = {11},
pages   = {1385-1396},
doi     = {10.1007/BF00732759},
url    = {\url{http://doi.org/10.1007/BF00732759}}
}

@article{Ungar:1997,
author  = {Ungar, Abraham Albert},
title   = {{Thomas} precession: its underlying  
           gyrogroup axioms and their use in 
           hyperbolic geometry and relativistic 
           physics},
journal = {Foundations of Physics},
year    = {1997}, 
volume  = {27},
number  = {6},
pages   = {881-951},
doi     = {10.1007/BF02550347},
url    = {\url{http://doi.org/10.1007/BF02550347}}
}

@book{Ungar:2005,
author    = {Ungar, Abraham Albert},
title     = {Analytic Hyperbolic Geometry: Mathematical Foundations 
             and Applications},
publisher = {World Scientific}, 
address   = {Singapore},
year      = {2005},
doi       = {10.1142/5914},
url    = {\url{http://doi.org/10.1142/5914}}
}

@book{Ungar:BeyondEinstein,
author  = {Ungar, Abraham},
title   = {Beyond the Einstein Addition Law
           and its Gyroscopic Thomas Precession.
           The Theory of Gyrogroups and Gyrovector  
           Spaces},
year    = {2002},
publisher = {Kluwer Academic Publishers},
address ={New York},
isbn    = {0-792-36909-2}
}

@article{Urbantke:1990,
author  = {Urbantke, Helmuth},
title   = {Physical holonomy: {Thomas} precession, 
           and {Clifford} algebra},
journal = {American Journal of Physics},
year    = {1990}, 
volume  = {58},
number  = {8},
pages   = {747-750},
doi     = {10.1119/1.16401},
url    = {\url{http://doi.org/10.1119/1.16401}},
note    = {Erratum ibid. 59(12), 1991, 1150-1151}}

@article{Urbantke:2003,
author  = {Urbantke, Helmuth},
title   = {Lorentz Transformations from 
           Reflections: Some Applications},
journal = {Foundations of Physics Letters},
year    = {2003},
volume  = {16},
number  = {2},
pages   = {111-117},
doi     = {10.1023/A:1024162409610},
url    = {\url{http://doi.org/10.1023/A:1024162409610}}
}

@book{Weyl:RZM1,
author = {Weyl, Hermann},
title = {Raum--Zeit--Materie},
year  = {1918},
edition = {1st},
publisher = {Verlag von Julius Springer},
address = {Berlin}
}

@book{Weyl:RZM4-Engl,
author = {Weyl, Hermann},
title = {Space--Time--Matter},
year  = {1922},
publisher = {Methuen \& CO. LTD},
address = {London},
note    = {Translation from the 4th 
           German edition by Henry L. Brose}
}

@book{Weyl:RZM2025,
author    = {Giulini, Domenico and 
             Scholz, Erhard},
title     = {Hermann Weyl 
             Raum--Zeit--Materie},
series =    {Klassische Texte der
             Wissenschaft},
publisher = {Springer Verlag}, 
address   = {Berlin},
year      = {2025},
doi       = {10.1007/978-3-662-70400-4},
url    = {\url{http://doi.org/10.1007/978-3-662-70400-4}},
note      = {Republication of the 5th edition 
             with additional material from earlier 
             editions, as well as many annotations 
             by the editors}
}
\end{document}